\documentclass[11pt]{article}
\usepackage[utf8]{inputenc}
\usepackage[english]{babel}
\usepackage[T1]{fontenc}
\usepackage[margin=1in]{geometry}

\usepackage[]{todonotes}
\usepackage{style-arxiv}
\usepackage{shortcuts-arxiv}

\usepackage[bottom,hang]{footmisc}

\newif\ifdraft
\drafttrue

\begin{document}
	
\begin{center}
	\begin{minipage}[H]{15.5cm} 
		
		\begin{center}
		{\huge \bf Near-Optimal Distributed 2-Ruling Sets on Graphs with Low Arboricity}
		\end{center}
		\vspace{5mm}
		
		\begin{flushleft}
			\hspace{13mm}{\large \textbf{Malte Baumecker}\footnotemark[1], TU Graz -- \href{mailto:malte.baumecker@tugraz.at}{\texttt{malte.baumecker@tugraz.at}}} \vspace{1mm}\\
			\hspace{13mm}{\large \textbf{Rustam Latypov}\footnotemark[2], Aalto University -- \href{mailto:rustam.latypov@aalto.fi}{\texttt{rustam.latypov@aalto.fi}}} \vspace{1mm}\\
			\hspace{13mm}{\large \textbf{Yannic Maus}\footnotemark[1], TU Graz -- \href{mailto:yannic.maus@ist.tugraz.at}{\texttt{yannic.maus@tugraz.at}}} \vspace{1mm}\\
			\hspace{13mm}{\large \textbf{Jara Uitto}, Aalto University -- \href{mailto:jara.uitto@aalto.fi}{\texttt{jara.uitto@aalto.fi}}} \vspace{1mm}\\
		\end{flushleft}
		
		\vspace{5mm}
		
		\begin{abstract}
			Given a graph $G=(V,E)$, a $\beta$-ruling set is a subset of nodes $S\subseteq V$ that is independent, and each node in $V$ is at distance at most $\beta$ from some node in $S$. In this paper, we present almost optimal distributed algorithms for finding $2$-ruling sets in the classical \LOCAL model. Our main contribution is a randomized algorithm that w.h.p.\ computes a $2$-ruling set on any $n$-node graph with bounded arboricity in $O(\log \log n)$ rounds. In fact, the algorithm works up to arboricity $O(\log\log n)$, improves exponentially over the prior state of the art that can be achieved by combining [Barenboim, Elkin, Pettie, Schneider; JACM'16], [Ghaffari; SODA'16], and [Bisht, Kothapalli and Pemmaraju; PODC'14], and nearly matches the lower bound of $\Omega(\log \log n / \log \log \log n)$ [Balliu, Brandt, Kuhn, Olivetti; FOCS'20]. The domination parameter $\beta=2$ is optimal for algorithms with runtime $\log^{o(1)}n$:  on graphs with arboricity $2$, there is a lower bound of $\Omega(\sqrt{\log n})$ rounds for MIS (i.e., $\beta = 1$) [Khoury, Schild; FOCS'25].
			
			\medskip
			Additionally, we obtain improved algorithms for larger arboricity. For general graphs with arboricity $\alpha$, we present a randomized algorithm that computes a $2$-ruling set in $\widetilde{O}(\log^{5/8} \alpha +\log^{5/3} \log n)$ rounds. This improves exponentially over the state of the art for a large range of non-constant arboricity.
			
			\medskip
			Our techniques extend beyond distributed computing. We present an $O(\log \log \log n)$-round algorithm in the low-space Massively Parallel Computation (\mpc) model that w.h.p.\ computes a $2$-ruling set on any graph with arboricity up to $2^{\poly(\log \log n)}$, improving exponentially over the state of the art from [Kothapalli, Pai, Pemmaraju; FSTTCS'20] combined with [Fischer, Giliberti, Grunau; SPAA'23].
		\end{abstract}
		
		\end{minipage}
			
	\thispagestyle{empty}
	\footnotetext[1]{Supported by the Austrian Science Fund (FWF) \url{https://doi.org/10.55776/I6915}. }
	\footnotetext[2]{Supported by the Research Council of Finland, Grant 334238.}
\end{center}

\newpage
\tableofcontents

\newpage
\setcounter{page}{1}
\section{Introduction}

In this paper, we study the problem of finding a $\beta$-ruling set in the standard \LOCAL and \CONGEST models of distributed message-passing.
Given a graph $G = (V, E)$, a $\beta$-ruling set $S \subseteq V$ is a set of non-adjacent nodes such that every node in $G$ is within at most $\beta$ hops from $S$.
This is a natural generalization of the extremely well-studied MIS problem, which corresponds to a $1$-ruling set.
Ruling sets are often faster to compute than MIS and have various applications as subroutines for coloring problems~\cite{GHKM18}, network decomposition~\cite{awerbuch89,GG24}, and MIS itself~\cite{GhaffariImproved16}.
Beyond the applications for other problems, finding ruling sets is a central and well-studied symmetry breaking problem of independent interest~\cite{schneider2013, Balliu2020-ruling, Gfeller07,CONGEST_rulingsets, GG24, Pai2022, kothapalli-superfast2012, cambusCCruling, BEPSv3, schneider2010, assadi-streaming-ruling}.

In general graphs, the classic algorithms developed independently by Luby~\cite{luby86} and Alon, Babai, and Itai~\cite{alon86} can be used to find an MIS in $O(\log n)$ rounds, where $n$ is the number of nodes.
Despite decades of effort, these algorithms remain the state of the art.
For MIS, there is a classic lower bound of $\Omega(\sqrt{\log n / \log \log n})$ rounds~\cite{kuhn16_jacm,10.1145/3519270.3538419}, which was very recently improved to $\Omega(\sqrt{\log n})$ rounds~\cite{khoury2025roundeliminationselfreductionclosing}.
By relaxing the domination to $\beta = 2$, sublogarithmic algorithms are known that run in $O(\sqrt{\log n})$ rounds when expressed as a function of $n$ only~\cite{kothapalli-superfast2012, tushar-super-fast-2014, GhaffariImproved16}.
On the lower bound side, one cannot do better than $\Omega(\log \log n/ (\beta \log \log \log n))$~\cite{Balliu2020-ruling}, leaving an exponential gap between lower and upper bounds, e.g., for $\beta=2$ on general graphs.

\medskip

\begin{tcolorbox}
In this paper, we close this gap on sparse graphs up to triple-logarithmic
factors, giving exponentially faster $2$-ruling-set algorithms for graphs of
low arboricity.
\end{tcolorbox}
\medskip

Arboricity is a standard measure of sparsity: it corresponds to the number of forests that are needed to partition the edge set. Symmetry-breaking in graphs with bounded arboricity is frequently studied in \LOCAL, \CONGEST, dynamic models, and various other models of computation~\cite{RotenbergImprovedDynColouringSparse,GhaGrunauDynArbColoring,EdenMosselRonApxArbSublinear,CCHIQRSAdapditveOutOrient,ERSsublinearApxkCliquesLowArb, cambus_et_al:LIPIcs.DISC.2021.15,morgan_et_al:LIPIcs.DISC.2021.33,faour_et_al:LIPIcs.DISC.2025.54,pai_et_al:LIPIcs.DISC.2017.38,kothapalli-superfast2012,BarenboimElkinKuhnDistLinearColoring,Dory_Ghaffari_Ilchi_2023,LenzenWattenhoferMinDomSetInBoundedArb}. 

At a high level, our main technical contribution is a fast degree drop
procedure. Given a graph of maximum degree $\Delta$, the goal of our procedure is to compute an independent set $S$ so that, after removing all nodes covered by $S$, the remaining graph has maximum degree significantly smaller than $\Delta$.
Given that we require independence of $S$ and the coverage of all high-degree nodes, the probability of including a node $u$ to $S$ necessarily depends on its local graph topology. Hence, short cycles create dependencies on the events that nearby nodes are selected and standard tools such as Chernoff bounds cannot be used to obtain concentration.
We control these dependencies using read-$k$
concentration inequalities, building on techniques originally introduced by
Pemmaraju and Riaz for faster MIS algorithms on bounded-arboricity
graphs~\cite{pemmaraju_et_al:LIPIcs.OPODIS.2016.9}. We discuss our approach and
its relation to their techniques in \Cref{subsec:tech_over}. An advantage of our degree drop is that it extends naturally to the memory-restricted \emph{low-space} Massively Parallel Computation (\mpc) model, where ruling sets have also been extensively studied~\cite{GU19,Ghaffari2020,Kothapalli2020,CzumajDaviesParter20,GilibertiParsaeian24,FischerGilibertiGrunau23,JiKothappalliPemmarajuSingh25}. There, we design an $O(\log \log \log n)$-round algorithm that is simple, yet improves exponentially on the state of the art. 



\subparagraph{Previously Known Faster Algorithms in Sparse Graphs via Degree Drop.}
The lower bounds of $\Omega(\sqrt{\log n / \log \log n})$ for MIS and
$\Omega(\log \log n/(\beta \log \log \log n))$ for $\beta$-ruling sets already
hold on trees~\cite{10.1145/3519270.3538419,Balliu2020-ruling}. Nevertheless,
sparse graphs often admit faster algorithms than general graphs. A central
reason is that sparsity can be exploited to reduce the maximum degree quickly.

A seminal example is the degree-reduction framework for MIS of Barenboim, Elkin,
Pettie, and Schneider~\cite{BEPSv3}. For graphs of arboricity at most $\alpha$,
their tool reduces the maximum degree to
$\alpha \cdot 2^{O(\sqrt{\log n})}$ in $O(\sqrt{\log n})$ rounds. The analysis relies
on reasoning about nodes that are mutually at distance at least $3$; this separation
ensures the needed independence between selected nodes, but it also limits the
rate at which the degree can drop. More recently, Khoury and Schild improved
this type of degree reduction on trees, obtaining an
$o(\sqrt{\log n})$-round MIS algorithm and thereby separating trees from
general graphs for MIS (in fact they already separate trees from graphs with arboricity two)~\cite{khoury2025breakingbarriersdistributedmis}. Their
algorithm is based on a two-stage competition process in which each node
survives with subconstant probability.

These degree-reduction tools, however, are designed for MIS, and their
runtime remains constrained by MIS-type lower bounds. Indeed, even combining the
degree reduction of~\cite{BEPSv3} with the ruling-set subsampling of Bisht,
Kothapalli, and Pemmaraju~\cite{tushar-super-fast-2014} and Ghaffari's
efficient algorithm for low-degree graphs~\cite{GhaffariImproved16} yields only
a $\log^{\Omega(1)} n$-round algorithm for $2$-ruling sets on
bounded-arboricity graphs; see \Cref{apx:related_work} for more details. To obtain
double-logarithmic runtimes for $2$-ruling sets, one needs a faster degree drop.
On trees, this is possible~\cite{BMU25}: roughly speaking, the maximum degree can be reduced
from $\Delta$ to $\sqrt{\Delta}$ in $O(1)$ rounds, yielding an exponential
separation between MIS and $2$-ruling sets.


The difficulty is that this separation currently relies on the input graph
being locally tree-like. The same issue appears in the fast MIS degree-reduction
tools above, which rely on the absence of short cycles, or at least on enough
local independence induced by distance-$3$ separation in the analysis. This is an
obstacle for bounded-arboricity graphs: although they are globally sparse, they
may contain many short cycles. Consequently, existing techniques do not seem to
provide the constant-round polynomial degree drop needed to close the
exponential gap between upper and lower bounds for $2$-ruling sets.

\subparagraph{Avoiding Dependencies with $\beta \geq 3$.}
One way to bypass the dependencies created by short cycles is to increase the
domination radius. Kothapalli and Pemmaraju~\cite{kothapalli-superfast2012}
show how to sample low-degree subgraphs so that every high-degree node in the
original graph is within distance $2$ of a sampled node. Computing MIS in
these sampled graphs then yields an $O(\log^3 \log n)$-round algorithm for
$3$-ruling sets for bouned arboricity graphs.

This approach, however, does not seem to extend to $2$-ruling sets. For
$2$-ruling sets, the sampled nodes would need to cover the immediate
neighborhood of every high-degree node, rather than only its distance-$2$
neighborhood. Sampling sufficient low-degree subgraphs with this stronger
coverage property appears difficult. Thus, avoiding dependencies by increasing
the domination radius does not resolve the case where $\beta=2$.

\subsection{Our Contributions}

We consider the setting where a communication network is modeled as a graph $G=(V,E)$, where $|V|=n$ nodes with unbounded computational power communicate over the edges in synchronous message-passing rounds. In the distributed \LOCAL model, introduced by Linial \cite{linial92}, message sizes are unbounded, whereas in the \CONGEST~\cite{Peleg1987DistributedCA} model messages are bounded by $O(\log n)$ bits. In the beginning, each node only knows its neighbors, and upon termination, each node $v$ should know its own output, for example, whether it is part of the ruling set or not. The runtime is measured in the number of rounds. We consider graphs of low arboricity, where the arboricity\footnote{Formally, the arboricity of $G$ is defined as $\alpha(G)\coloneq \max_{G'\subseteq G,|V(G')|\geq2} \big\{\big\lceil \frac{|E(G')|}{|V(G')|-1 }\big\rceil\big\}$.} of a graph $G$ is defined as the minimum number of forests into which the edges of $G$ can be partitioned.

First, we present a fast algorithm for ruling sets with the best possible domination distance $\beta=2$ for subpolylogarithmic runtime on graphs with low arboricity.

\begin{restatable}{theorem}{thmRSloglogArb}
    \label{thm:2RSconstArb}
        There is an $O(\log\log n)$-round \CONGEST algorithm that w.h.p.\ computes a $2$-ruling set in graphs with arboricity up to $O(\log \log n)$.
\end{restatable}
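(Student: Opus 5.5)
The plan is to iterate a constant-round \emph{degree drop} step and then finish with a known low-degree algorithm. Concretely, the core lemma I would aim for is: on a graph with maximum degree $\Delta$ and arboricity $\alpha \le O(\log\log n)$, in $O(1)$ rounds we can compute an independent set $S$ such that, after removing $S$ and $N(S)$, every remaining node that still has a neighbor in the removed set is already $2$-dominated, and the residual graph of uncovered nodes has maximum degree at most $\Delta^{1-c}$ for a constant $c>0$, or $\sqrt{\Delta}$ as on trees~\cite{BMU25}. This holds w.h.p.\ as long as $\Delta \ge \poly(\log n)$. Iterating this $O(\log\log n)$ times shrinks $\Delta$ from $n$ to $\poly(\log n)$. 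The partial sets are combined by always working on the graph of nodes not yet within distance $1$ of a chosen node, and ensuring each new $S$ is independent from earlier ones, which holds automatically because covered nodes are removed. Every node removed because it is adjacent to $S$ is then $1$-dominated, and the guarantee we track is that high-degree nodes get a neighbor covered, which gives distance $2$.

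For the degree drop step itself, I would use the arboricity to orient edges so that each node has out-degree $O(\alpha)$. A forest-decomposition-style $H$-partition takes $O(\log n)$ rounds, which is too slow. Instead I would only classify nodes locally as \emph{heavy} (degree at least $\Delta^{1-c}$) and light, and observe that, because the arboricity is small, a heavy node has most of its neighbors light, namely those with degree at most $\Delta^{1-c}/2$ or at most $O(\alpha)\cdot$polylog. Each node then joins a candidate set with probability roughly $\Delta^{-(1-c)}$, and candidates keep membership only if no neighboring candidate has higher priority, in the Luby-style step of~\cite{tushar-super-fast-2014}. A heavy node then w.h.p.\ sees a candidate among its many light neighbors, and that candidate survives with constant probability since light nodes have few candidate neighbors. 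The heavy node is then within distance $2$ of $S$, and we remove it together with $S \cup N(S)$. The remaining nodes all have degree below $\Delta^{1-c}$.

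The main obstacle is concentration. A heavy node $v$ has many neighbors, but the events ``neighbor $u$ of $v$ survives'' are dependent through short cycles, since neighbors of $v$ share neighbors. Plain Chernoff bounds therefore fail. Here I would use read-$k$ concentration as in~\cite{pemmaraju_et_al:LIPIcs.OPODIS.2016.9}. Each survival event is a function of the random bits of $u$ and its neighbors. Restricting to light neighbors, and using that in an arboricity-$\alpha$ graph a node's bits influence only few neighbors of $v$ through out-orientation-type counting, we get a read-$k$ family with $k = \poly(\alpha) \cdot \Delta^{o(1)}$. Nodes violating this bound are few and would be handled separately or shown to be absent. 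The read-$k$ bound then gives failure probability about $\exp(-\mu/k)$, which is $1/\poly(n)$ as long as $\mu/k \ge \Omega(\log n)$. This is exactly where $\alpha = O(\log\log n)$ and $\Delta \ge \poly\log n$ are needed.

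Once $\Delta = \poly(\log n)$, I would first apply the subsampling of~\cite{tushar-super-fast-2014} to reduce to degree $O(\log n)$-ish instances. Then I would run Ghaffari's MIS~\cite{GhaffariImproved16}, which takes $O(\log\Delta) + 2^{O(\sqrt{\log\log n})}$ rounds. That additive term is too large, so instead I would continue the degree drop down to $\Delta = \poly(\alpha,\log\log n)$ with shattering. The unlucky nodes then form small components of size $\poly\log n$. Finishing with a deterministic ruling-set or MIS algorithm on those components costs $O(\log\log n)$ rounds. All messages are random bits and flags, so the algorithm fits in \CONGEST.
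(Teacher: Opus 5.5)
Your overall architecture matches the paper's: a constant-round polynomial degree drop iterated $O(\log\log n)$ times and analysed with read-$k$ bounds, then shattering and a deterministic finish. The core lemma, however, has a genuine gap, and so do the concentration step and the endgame.

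\textbf{The case of heavy nodes with heavy neighbors is missing.} Your degree-drop step rests on a false claim: small arboricity does not imply that a heavy node has mostly light neighbors. The root of a complete $d$-ary tree of depth $2$ (arboricity $1$) has degree $d$, and every one of its neighbors has degree $d+1$. It has no light neighbor at all, so sampling among light neighbors never covers it. The paper handles exactly this case ($H_2$) with a separate argument:
\begin{itemize}
\item If at least half of $v$'s neighbors have degree above $\sqrt{\Delta}$, then $G[N(v)\cup N_2^-(v)]$ has at least $\deg(v)\sqrt{\Delta}/4$ edges.
\item Comparing this with the arboricity bound $\alpha(\deg(v)+|N_2^-(v)|)$ forces $|N_2^-(v)|\ge \Delta^{5/4}/(5\alpha)$.
\item A Luby phase in which \emph{all} nodes are active (Step~2, where a node joins with probability only about $1/\Delta$) then puts some node of $N_2^-(v)$ into $S$ w.h.p.
\end{itemize}
This covers $v$ at distance exactly $2$. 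It is why the paper removes $S\cup N_2(S)$ and alternates two kinds of phases. Your plan has no mechanism for this case.

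\textbf{The concentration step does not work as stated.} The survival events of $v$'s light neighbors do not form a read-$k$ family with $k=\poly(\alpha)\Delta^{o(1)}$. In $K_{2,d}$ (arboricity $2$) the second hub is adjacent to all of $v$'s neighbors, so its randomness enters every survival event. Out-degree bounds say nothing about how many neighbors of $v$ a single node touches.

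The paper bounds $k$ algorithmically. In Step~1 only nodes of degree at most $\sqrt{\Delta}$ are active, so $k\le\sqrt{\Delta}+1$, and there is a polynomial gap to the heavy threshold $\Delta^{3/4}$. Your factor-$2$ gap between $\Delta^{1-c}$ and $\Delta^{1-c}/2$ cannot give an expectation $\mu$ with $\mu/k\ge\log n$.

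In the 2-hop case, applying read-$k$ Chernoff directly to ``$u$ joins'' also fails: $\mu=\Delta^{1/4}/\poly(\alpha)$ against $k\approx\Delta$. This is why the paper splits joining into two events:
\begin{itemize}
\item beating the at most $\alpha$ out-neighbors, which has probability at least $1/(\alpha+1)$ and is read-$(\Lambda+1)$;
\item on an independent set, beating the in-neighbors, which is read-$\alpha$ and handled with the read-$k$ bound for all events failing.
\end{itemize}
Harris--FKG then removes the conditioning on the first event.

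\textbf{The endgame is misdirected.} You only need the $O(\log\Delta)=O(\log\log n)$-round shattering phase of Ghaffari's algorithm. The additive term you worry about belongs to its post-shattering part. After shattering, components of size $N=\poly\log n$ need a deterministic $2$-ruling set in $O(\log N+\alpha)$ rounds. The paper builds one from an $H$-partition plus top-down deactivation, exploiting that a $2$-ruling set rather than an MIS suffices there.

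If you subsample first, as you propose, the components would instead need an MIS. The known deterministic routes, e.g.\ Barenboim--Elkin, reach $O(\log\log n)$ rounds only up to arboricity $O(\sqrt{\log\log n})$. Pushing your own degree drop below $\poly\log n$ instead loses the w.h.p.\ guarantee and would need its own shattering argument.
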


Our algorithm is optimal up to a triple-logarithmic factor and improves exponentially compared to the fastest known $2$-ruling set algorithm that can be deduced from published algorithms and runs in~$O(\log^{1/4} \Delta+\poly \log \log n)$ rounds~\cite{BEPSv3,tushar-super-fast-2014,GhaffariImproved16}, see \Cref{cor:related_work} for details. 
Prior to our work, the only double-logarithmic–time ruling set algorithm for graphs with arboricity $\alpha > 1$ achieved an $O(\log^3 \log n)$ runtime and computed a $3$-ruling set for graphs of arboricity $O(1)$ \cite{kothapalli-superfast2012}, see~\Cref{fig:map}. This exponential gap between $3$-ruling sets and $2$-ruling sets highlights a qualitative increase in difficulty: even modest improvements in the ruling parameter have required substantially higher runtimes. In contrast, compared to the existing $3$-ruling set algorithm, our result achieves a polynomially faster runtime while solving the much harder $2$-ruling set problem on a broader graph class.

\begin{figure}
	\centering
	\includegraphics[width=0.9\textwidth]{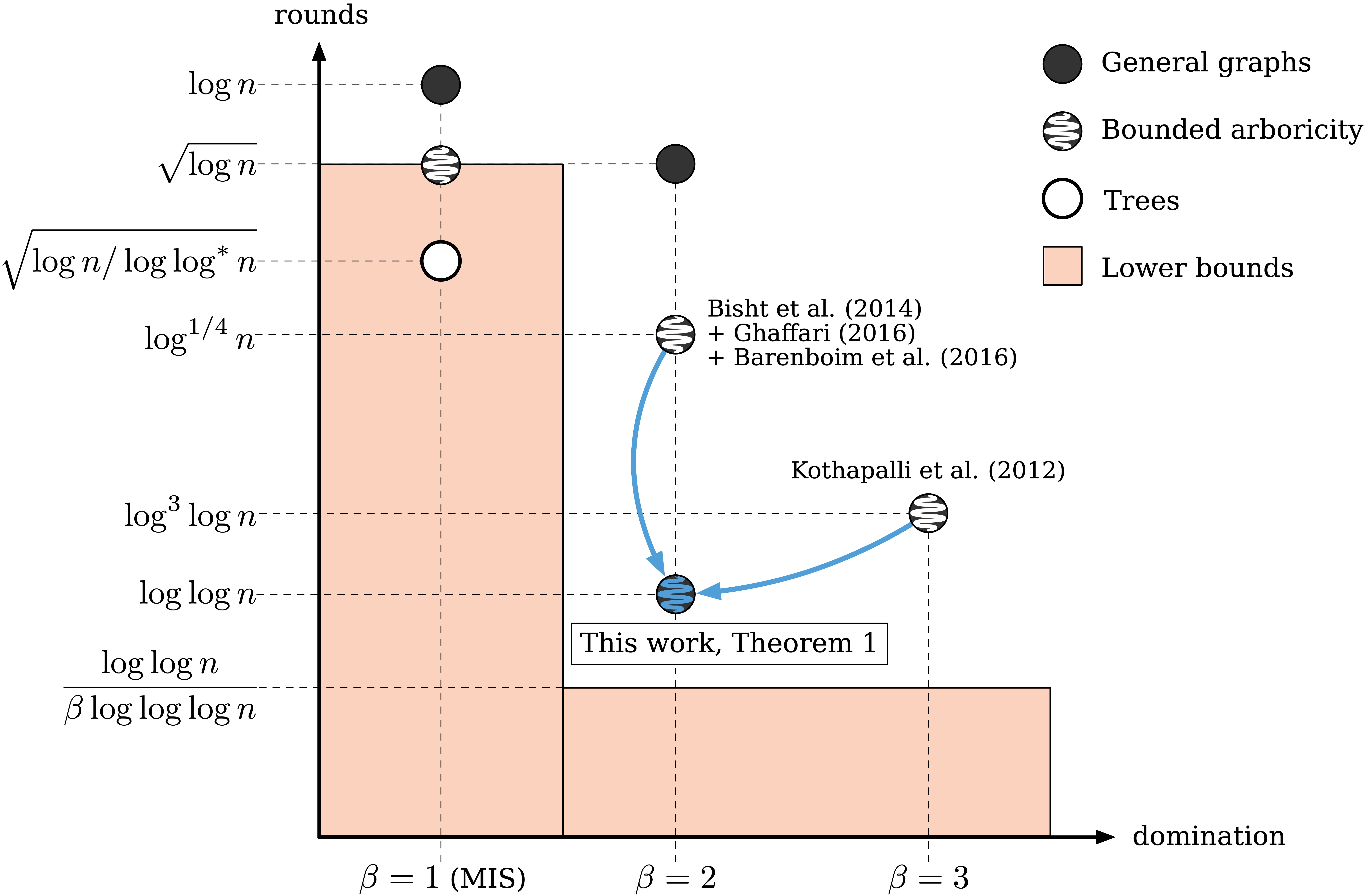}
    	\caption{Distributed algorithms for ruling sets: upper bounds are shown as dots (fill pattern indicates graph class) and lower bounds as grey regions. The horizontal axis represents the domination distance (1-ruling set (MIS), 2-ruling set, and 3-ruling set), and the vertical axis represents the round complexity as a function of $n$ (with $\Delta$ expressed as a polynomial in $n$). Blue arrows indicate improvements over the prior state of the art: our algorithm is exponentially faster for 2-ruling sets, and polynomially faster for 3-ruling sets while simultaneously achieving the optimal domination distance for subpolylogarithmic runtimes. Our runtime is optimal up to a triple-logarithmic factor. Prior to this work, only trees (arboricity 1) admitted an $O(\log \log n)$-round algorithm~\cite{BMU25}.}
	\label{fig:map}
\end{figure}




Our algorithm runs in the \CONGEST model and thus \Cref{thm:2RSconstArb} improves upon the best known \CONGEST algorithm for $2$-ruling sets that runs in~$O(\log\Delta \cdot (\log n)^{1/2+\epsilon}+\frac{\log n}{\log \log n})$ rounds \cite{pai_et_al:LIPIcs.DISC.2017.38}.

\medskip

Second, we present an algorithm whose runtime remains sublogarithmic for graphs with any arboricity.

\begin{restatable}{theorem}{thmMORE}
    \label{thm:MORE}
    There is an $\widetilde{O}(\log^{5/8} \alpha+\log^{5/3} \log n)$-round \LOCAL algorithm to compute a $2$-ruling set in graphs with arboricity $\alpha$ w.h.p.\ 
\end{restatable}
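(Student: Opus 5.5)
The plan is to combine three ingredients: an arboricity-based degree reduction, the ruling-set subsampling of Bisht, Kothapalli and Pemmaraju~\cite{tushar-super-fast-2014}, and Ghaffari's low-degree MIS algorithm~\cite{GhaffariImproved16}. The new ingredient is our fast degree-drop procedure for $2$-ruling sets. This is the procedure that, given maximum degree $\Delta$, computes in $O(1)$ rounds an independent set $S$ such that the graph remaining after removing $N^+(S)$ has maximum degree roughly $\sqrt{\Delta}$, or $\Delta^{1-\epsilon}$. Its guarantee holds as long as $\Delta$ is large compared to a polynomial in $\alpha$, for example $\Delta \ge \alpha^{c}$, and as long as the failure probability $\exp(-\Delta^{\Omega(1)}/\poly(\alpha))$ coming from read-$k$ concentration is at most $1/\poly(n)$.

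\textbf{Phase 1.} Iterate the degree drop $O(\log\log \Delta)$ times until the maximum degree falls to $D:=\max\{\poly(\alpha), \poly\log n\}$. Each iteration commits an independent set $S_i$. Its neighbors are dominated at distance $1$ and are removed. Later sets are built only in the remaining graph, and we discard candidate nodes adjacent to earlier $S_j$. This keeps the union independent while still dominating every removed node within distance $2$.

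\textbf{Phase 2.} Now the degree is $D$. If $D=\poly\log n$, we apply Ghaffari's MIS in $O(\log D)+2^{O(\sqrt{\log\log n})}$ rounds. If $D=\poly(\alpha)$ is larger, we instead reduce the degree further. One option is to apply the \cite{BEPSv3} arboricity reduction, which brings the degree to $\alpha\cdot 2^{O(\sqrt{\log n})}$; this is too slow as a function of $n$, so it must be a variant parameterized by $\alpha$. The other option is the \cite{tushar-super-fast-2014} subsampling. It samples nodes with probability $D^{-1/2}$, so every high-degree node has a sampled neighbor (distance-$2$ domination). The sampled graph has degree $\sqrt{D}$, and we compute an MIS on it.

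The exponents then come from balancing. Ghaffari's algorithm costs $O(\log \Delta')+\poly\log\log n$ on the sampled graph of degree $\Delta'$. BEPS-style degree reduction on arboricity $\alpha$ costs $O(\sqrt{\log \alpha}\,\cdot)$-type terms. The exponent $5/8$ should arise from optimizing the sampling rate against the cost of the MIS on the sample, with the MIS on the sample itself run via \cite{BEPSv3}'s arboricity-dependent bound $O(\log \alpha+\sqrt{\log n})$ restricted to small components after shattering. The $\log^{5/3}\log n$ term should come from handling the shattered post-phase components of size $\poly\log n$ with a deterministic network-decomposition-based algorithm. I would set the parameters as $\Delta_0=2^{\log^{a}\alpha}$ for a threshold exponent $a$, then choose $a$ to equate the terms, expecting $a$ to satisfy $\frac{a}{2}+\ldots=\frac58$.

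The main obstacle is the middle regime, where $\Delta$ is between $\poly\log n$ and $\poly(\alpha)$ and $\alpha$ is large. Here the read-$k$ concentration in the degree drop degrades, since $k$ grows with $\alpha$, so the probability bounds are no longer w.h.p. I would handle this with shattering. Degree-drop failures happen with probability $\exp(-\Delta^{\Omega(1)}/\poly(\alpha))$ and are locally independent beyond constant distance. The failed nodes therefore form components of size $\poly(\Delta)\log n$. We finish these with the deterministic post-shattering algorithm, and that step is where the $\log^{5/3}\log n$ term comes from. Getting the parameter balance that yields exactly $5/8$ is the step I am least sure of. I would first try alternating an $\alpha$-dependent partial degree reduction, costing $\sqrt{\log \alpha}$ per halving of the exponent, with subsampling, and then optimize.
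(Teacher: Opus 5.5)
Your first phase matches the paper: iterate the degree drop until $\Delta\le\max\{\poly(\alpha),\poly\log n\}$, then split into cases. The gap is the case $\Delta=\poly(\alpha)$ with $\alpha$ large, which is the only place the $\log^{5/8}\alpha$ term can come from. You never actually solve it, and each piece you sketch for it exceeds the budget:
\begin{itemize}
\item Running the MIS on the sample via the $O(\log\alpha+\sqrt{\log n})$ bound of \cite{BEPSv3} costs $\log\alpha$ plus $\sqrt{\log n}$.
\item A single sample at rate $D^{-1/2}$ does not dominate the low-degree nodes. It also leaves degree $\sqrt{D}=\poly(\alpha)$, so running \cite{GhaffariImproved16} on it costs $\Theta(\log\alpha)$ rounds.
\item Your ``middle regime'' fix fails. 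Near and below the $\poly(\alpha)$ threshold, the failure bound $\exp(-\Delta^{\Omega(1)}/\poly(\alpha))$ is no longer small, so there is nothing to shatter. Even if there were, components of size $\poly(\Delta)\log n=\poly(\alpha)\log n$ would make the deterministic \cite{GG24} step cost $\widetilde{O}(\log^{5/3}\alpha)$.
\end{itemize}
There are also two smaller issues. In the $\poly\log n$ case, a $2^{O(\sqrt{\log\log n})}$ post-shattering term is over budget; you need \cite{GG24} on the $\poly\log n$-size components. And the degree drop must remove $S\cup N_2(S)$, not only $N(S)$, because high-degree nodes are only guaranteed to lie within distance $2$ of $S$. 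Removing $N_2(S)$ also makes later sets automatically independent of earlier ones.

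The missing idea is that you never need to push the degree below $\poly(\alpha)$. After the degree drop, $\Delta\le O(\alpha^{32})$, so $\log\Delta=O(\log\alpha)$. Apply the iterated sparsification of \cite{tushar-super-fast-2014} (\Cref{lem:spars}) with a parameter $f$. It takes $O(\log_f\Delta)=O(\log\alpha/\log f)$ rounds and returns $S$ with $\Delta(G[S])=O(f\log n)$ such that every node is in $S$ or adjacent to it. Hence any MIS of $G[S]$ is a $2$-ruling set. Compute that MIS with Ghaffari's pre-shattering in $O(\log f+\log\log n)$ rounds. The leftover components have size $\poly(f\log n)$, which no longer depends on $\alpha$, and \cite{GG24} finishes them in $\widetilde{O}(\log^{5/3}f+\log^{5/3}\log n)$ rounds. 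Balancing $\log\alpha/\log f$ against $\log^{5/3}f$ gives $\log f=\log^{3/8}\alpha$, which makes both terms $\log^{5/8}\alpha$. This is the exponent you could not pin down.
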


To the best of our knowledge, the best $2$-ruling set algorithm for graphs with larger arboricity follows by combining prior work \cite{GhaffariImproved16,BEPSv3, tushar-super-fast-2014} and runs in $O(\log^{1/4}\Delta+\log\alpha+\poly \log \log n)$ rounds; see~\Cref{thm:related_work} for details. 

\smallskip 

Finally, we observe that \Cref{thm:MORE} runs in $\poly \log \log n$ rounds for graphs with arboricity up to $2^{(\log \log n)^{8/3}}$.

\begin{restatable}{corollary}{corARB}
    \label{cor:ARB}
     There is an $\widetilde{O}(\log^{5/3} \log n)$-round \LOCAL algorithm to compute a $2$-ruling set in graphs with arboricity $\alpha \leq 2^{(\log \log n)^{8/3}}$ w.h.p.\
\end{restatable}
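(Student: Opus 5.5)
The plan is to instantiate \Cref{thm:MORE} with $\alpha \le 2^{(\log\log n)^{8/3}}$ and check that the arboricity term is dominated by the $\log\log n$ term. No new algorithm is needed: we run the algorithm of \Cref{thm:MORE} unchanged. Its round complexity is $\widetilde{O}(\log^{5/8}\alpha + \log^{5/3}\log n)$, and it computes a $2$-ruling set w.h.p.\ in the \LOCAL model.

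The key step is the following calculation. Since $\alpha \le 2^{(\log\log n)^{8/3}}$, we have $\log \alpha \le (\log\log n)^{8/3}$. Therefore
\[
\log^{5/8}\alpha \le \bigl((\log\log n)^{8/3}\bigr)^{5/8} = (\log\log n)^{5/3}.
\]
Plugging this into the bound of \Cref{thm:MORE} gives $\widetilde{O}\bigl(2\log^{5/3}\log n\bigr) = \widetilde{O}(\log^{5/3}\log n)$.

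The only point needing care is the $\widetilde{O}$ notation. If it hides factors polylogarithmic in the whole expression, those factors are $\poly\log\log\log n$, and they survive unchanged. If it hides polylogarithmic factors in $\alpha$, the bound $\log\alpha \le (\log\log n)^{8/3}$ again makes them $\poly\log\log\log n$, so they are absorbed as well. There is no real obstacle: the exponent $8/3$ in the corollary is chosen exactly so that $\tfrac{5}{8}\cdot\tfrac{8}{3} = \tfrac{5}{3}$.
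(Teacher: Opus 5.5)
Your proposal is correct and matches the paper, which gives no separate proof and simply observes that Theorem~\ref{thm:MORE} specializes to this regime: substituting $\log\alpha\le(\log\log n)^{8/3}$ gives $\log^{5/8}\alpha\le\log^{5/3}\log n$. Your treatment of the $\widetilde{O}$ notation also agrees with the paper's definition, since the hidden factors become $(\log\log\log n)^{O(1)}$.
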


\Cref{cor:ARB} improves upon the state of the art beyond the $\log \log n$ arboricity barrier of~\Cref{thm:2RSconstArb}. In fact, \Cref{thm:MORE} still improves polynomially on the state of the art $2$-ruling set algorithm up to arboricity $2^{(\log n)^{4/5}}$.

\subparagraph*{Massively Parallel Computation.}
The Massively Parallel Computation (\mpc) model~\cite{KarloffSV10,BeameKoutrisSuciu17} is one of the leading theoretical frameworks for large-scale computation, striking a balance between mathematical abstraction and practical relevance. It faithfully captures the behavior of widely deployed data processing frameworks such as Hadoop~\cite{White2009}, Spark~\cite{Zaharia2010}, MapReduce~\cite{Dean2008}, and Dryad~\cite{Isard2007}. In the \emph{low-space} variant of the \mpc model there are $M$ machines, each with \emph{local space} $S$, sublinear in the input size. For graphs, the edges of the input graph $G=(V, E)$ are arbitrarily and equally distributed among the machines, each with local space $S=O(n^{\eps})$ such that $\eps \in (0,1)$. Within a round, machines perform all-to-all synchronous communication, and then perform some unbounded computation on local data. The system is fault-free, meaning that nodes never crash and messages never get corrupted. The \emph{total space} refers to the maximum amount of memory used by the collection of $M$ machines at any point during the algorithm.

Ruling sets have been extensively studied in the low-space \mpc model in recent years \cite{GU19,Ghaffari2020,Kothapalli2020,CzumajDaviesParter20,GilibertiParsaeian24,FischerGilibertiGrunau23,JiKothappalliPemmarajuSingh25}. To the best of our knowledge, for the broad class of graphs with arboricity $\alpha \leq 2^{\poly (\log \log n)}$, the current state-of-the-art runtime for 2-ruling set is $\poly(\log \log n)$. While not published, this result follows by combining the algorithm of \cite{Kothapalli2020} with the degree reduction by \cite{FischerGilibertiGrunau23}. We improve upon this exponentially with a sharp triple-logarithmic 2-ruling set algorithm, using the same amount of total space.


\begin{restatable}{theorem}{thmRSloglogArbMPC}
	\label{thm:2RSconstArbMPC}
	There is an $O(\log\log\log n)$-round low-space \mpc algorithm that w.h.p.\ computes a $2$-ruling set in graphs with arboricity $\alpha \leq 2^{\poly (\log \log n)}$, using $O(m + n^{1+\eps})$ total space.
\end{restatable}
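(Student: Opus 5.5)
The plan is to follow the same skeleton as the \LOCAL algorithm of \Cref{thm:2RSconstArb}: repeatedly apply a constant-round \emph{degree drop} step that takes a graph of maximum degree $\Delta$ (and arboricity $\alpha$) and computes an independent set $S$ such that, after removing $S \cup N(S)$ and the nodes dominated at distance $2$, the remaining graph has maximum degree roughly $\sqrt{\Delta}$ (or $\Delta^{1-c}$). In \mpc we compress these iterations by graph exponentiation: each machine gathers a sufficiently large ball and simulates many \LOCAL rounds at once. The total space is $O(m+n^{1+\eps})$.

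\textbf{Step 1 (initial reduction).} First reduce the maximum degree to $\poly\log n$, or to $2^{\poly(\log\log n)}$. For nodes of degree much larger than $\alpha\cdot\poly\log n$, I would use the arboricity structure. Nodes of degree above $\Theta(\alpha)$ times a threshold form a set whose induced forest orientation has out-degree $\le\alpha$. Sampling each high-degree node's neighbors with probability about $\log n/\Delta$ gives a sparse sampled set that dominates all high-degree nodes w.h.p. (Chernoff, since this sampling is independent). An MIS on this sampled graph, whose degree is $\poly\log n$, then 2-dominates them. This is the standard Kothapalli--Pemmaraju subsampling and needs $O(1)$ phases of halving $\log\Delta$. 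Each phase needs an MIS on a $\poly\log n$-degree graph. Alternatively, I would invoke the \cite{FischerGilibertiGrunau23}-style degree reduction, which runs in $O(\log\log\Delta)$ or fewer \mpc rounds via exponentiation, to reach $\Delta\le 2^{\poly(\log\log n)}$.

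\textbf{Step 2 (exponentiated degree drop).} Once $\Delta\le 2^{\poly(\log\log n)}$, the \LOCAL procedure from the paper needs $O(\log\log\Delta)=O(\log\log\log n)$ drop phases, each taking $O(1)$ rounds, to reach constant degree (times $\alpha$). We then need a final MIS/ruling step on a graph of degree $\poly(\alpha)$ and arboricity $\alpha$. With $\alpha\le 2^{\poly\log\log n}$, the $r$-hop neighborhoods for $r=O(\log\log\log n)$ have size $\Delta^{r}=2^{\poly(\log\log n)}\ll n^{\eps}$. So each node can collect its whole relevant ball in $O(\log r)$ doubling rounds and simulate all phases locally. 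The remaining shattered components have size $\poly\log n$ with degree $2^{\poly\log\log n}$. I would finish them with the deterministic low-space MIS/ruling-set approach for small components, using exponentiation, in $O(\log\log\log n)$ rounds. The randomness must be fixed in advance, with each node drawing its random bits at the start, so that local simulation is consistent.

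\textbf{Main obstacle.} The hard part is the degree-drop guarantee at \emph{every} scale, including the first phase where $\Delta$ may be polynomial in $n$. There the read-$k$ concentration needs $\Delta\gg\alpha\log n$, so the simulation ball cannot fit in memory. I would handle this by doing Step 1 first, whose phases each cost $O(1)$ \mpc rounds plus an MIS-free sampling. I would also need to verify that the number of such phases is $O(\log\log\log n)$. Achieving that may require each phase to reduce $\log\Delta$ polynomially rather than by a constant factor, via exponentiation over the sampling itself. If that fails, I would instead sample with probability $\Delta^{-1+\delta}$ to jump directly to $\Delta^{\delta}$-degree graphs and recurse.
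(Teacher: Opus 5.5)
Your proposal has a genuine gap in Step~1: neither route you offer for the high-degree regime fits the $O(\log\log\log n)$ budget.

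\textbf{Why your Step~1 routes fail.} The \cite{FischerGilibertiGrunau23} reduction takes $O(\log\log n)$ rounds when $\Delta=n^{\Theta(1)}$. Combining it with \cite{Kothapalli2020} is exactly the prior $\poly(\log\log n)$ bound this theorem improves on. The subsampling route is internally inconsistent. A phase whose sampled graph has $\poly\log n$ degree only covers nodes of degree at least $\Delta/\poly\log n$. So it divides $\Delta$ by a $\poly\log n$ factor rather than halving $\log\Delta$, and you would need $\Theta(\log n/\log\log n)$ phases. Conversely, making a phase halve $\log\Delta$ needs sampling probability $\Omega(\log n/\sqrt{\Delta})$. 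That leaves an MIS (not a ruling set, or the distance compounds) to compute on a sampled graph of degree about $\sqrt{\Delta}\log n$ in every phase. Your stated obstacle is also misdiagnosed. The read-$k$ analysis only needs $\Delta\ge\max\{(82\alpha^4)^8,\log^8 n\}$, so large $\Delta$ is the easy regime for concentration. The fact that the ball does not fit in memory is irrelevant, because no ball is needed there.

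\textbf{The two missing ideas.} First, simulate \Cref{lem:degree_drop_iteration} directly, round by round, on the input graph. Each of its $O(1)$ \LOCAL rounds only asks a node to compute a degree, a minimum of active neighbors' ranks, or an OR ("is a neighbor in $S$ / in $N(S)$?"). All of these take $O(1)$ low-space \mpc rounds even for nodes of degree above $n^{\eps}$, using $n^{\eps}$-ary aggregation trees.

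Second, change the stopping point. You insist on reaching $\Delta\le 2^{\poly(\log\log n)}$ before exponentiating, but with $\Delta\mapsto\Delta^{3/4}$ steps that costs $\Theta(\log\log n)$ iterations. Instead, stop after $T=O(\log\log\log n)$ iterations, at $\Delta'\le n^{(3/4)^T}=n^{1/\poly(\log\log n)}$. For $\alpha\le 2^{\poly(\log\log n)}$ this is still far above $\max\{(82\alpha^4)^8,\log^8 n\}$, so the lemma applies at every step. It also already suffices: a $\poly(\log\log n)$-hop ball has size $(\Delta')^{\poly(\log\log n)}\le n^{\eps/2}$ once the constant in $T$ is chosen large relative to the radius. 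Each node then gathers its ball by graph exponentiation in $O(\log\log\log n)$ rounds. It simulates the entire remaining $\poly(\log\log n)$-round \LOCAL algorithm of \Cref{thm:MORE} (the rest of the degree drop plus the finishing step) locally in $O(1)$ further rounds. This replaces both your Step~1 and the ad hoc shattering and finishing in your Step~2.
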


\subsection{Related Work}
We have already covered parts of the related work while presenting our results, but we want to present a more thorough overview. To complement the \LOCAL overview given here, we refer to~\Cref{tab:rand-rul-sets-extended} for the randomized regime and~\Cref{tab:det-rul-sets} for the deterministic regime in~\Cref{apx:related_work}.

\subparagraph*{Randomized \LOCAL.} Ghaffari's seminal MIS algorithm is still the state of the art for MIS on general graphs and runs in $O(\log\Delta+\poly\log\log n)$ rounds~\cite{GhaffariImproved16}, which improves on Luby's $O(\log n)$\cite{luby86, alon86} algorithm for $\Delta=n^{o(1)}$.  His algorithm can also be combined with prior work on ruling sets \cite{tushar-super-fast-2014} to compute $\beta$-ruling sets in $O(\beta\log^{1/\beta}\Delta+\poly\log\log n)$ rounds.

Recently, there was progress for MIS and the conjecture of a $\Theta(\sqrt{\log n})$ complexity for MIS on trees was refuted by Khoury and Schild, as they showed that an MIS can be computed in $O(\sqrt{\log n/ \log \log^*n})$ rounds on trees, and in $O({\frac{\log \Delta}{\log \log^* \Delta}}+\poly (\log \log n))$ rounds on graphs of girth at least $7$ \cite{khoury2025breakingbarriersdistributedmis}. 
 
Additionally, they gave an $\Omega(\min\{\log \Delta,\sqrt{\log n}\})$ lower bound for Maximal Matching, and thus for MIS on line graphs, improving on the previous known lower bounds and showing that MIS on general graphs is strictly harder than on trees. 
For general graphs, there has been a series of works started by an~$\Omega(\min\{\sqrt{\log n/\log \log n},\log \Delta/\log \log \Delta\})$ lower bound on line graphs by Kuhn, Moscibroda, and Wattenhofer~\cite{kuhn16_jacm}, for which Coupette and Lenzen presented a simple proof~\cite{coupetteLenzen2021BreezingKMW}. This was followed by multiple works of Balliu, Ghaffari, Kuhn, and Olivetti, with the latest paper extending the lower bound to trees~\cite{10.1145/3519270.3538419,BBKO2021hideandseek,BBO22}.
For graphs of arboricity $\alpha$, Barenboim, Elkin, Pettie and Schneider showed that it is possible to compute an MIS in $O(\log\alpha + \sqrt{\log n})$ rounds \cite{BEPSv3,GhaffariImproved16}, which results in an $O(\sqrt{\log n})$-round algorithm for graphs with arboricity up to $2^{O(\sqrt{\log n})}$.

For ruling sets, Kothapalli and Pemmaraju showed that one can compute a $2$-ruling set in $O(\log^{3/4} n)$ \cite{kothapalli-superfast2012} rounds on general graphs. Techniques from that paper, together with~\cite{GhaffariImproved16,BEPSv3}, yield the current state of the art of $O(\log^{1/4} \Delta+ O(\poly \log \log n)$, see \Cref{thm:related_work} for details on how to combine these works to obtain the claimed runtime. On trees it is possible to compute $2$-ruling sets in $O(\log \log n)$ rounds, and on graphs of girth at least $7$ in $\widetilde{O}(\log ^{5/3} \log n)$ rounds~\cite{BMU25}. 

For non-constant domination $\beta$, Gfeller and Vicari, combined with \cite{BEPSv3}, obtain an $O(\log \log n)$-ruling set algorithm in $O(\log \log n)$ rounds plus runtime of MIS on graphs with polylogarithmic degrees \cite{Gfeller07}. On graphs of girth at least $7$, it is possible to compute an $O(\log \log \log n)$-ruling set in $\widetilde{O}(\log \log n)$ rounds \cite{BMU25}.
The best known lower bound for randomized algorithms for general graphs is~$\Omega(\min \set{\beta \cdot \Delta^{1/\beta},\log_\Delta \log n})$~\cite{BBKO2021hideandseek}.

\subparagraph*{Deterministic \LOCAL.}The state of the art for MIS is the $\widetilde{O}(\log^{5/3} n)$ algorithm of Ghaffari and Grunau \cite{GG24}. In the same paper, they also present a $\widetilde{O}(\log n)$ round algorithm for $O(\log \log n)$-ruling sets. For graphs of arboricity $\alpha$  Barenboim and Elkin show that an MIS can be computed in $O(\alpha \cdot \sqrt{\log n}+ \alpha \cdot \log \alpha) $\cite{Barenboim2010}. Additionally, for graphs up to arboricity $O(\log^{1/2-\delta} n)$, $\delta >0$, they provide a sub-logarithmic algorithm to compute an MIS in $O(\log/\log \log n)$ rounds. For trees, their algorithm is tight due to a lower bound for MIS of $\Omega(\frac{\log n}{\log \log n})$ by \cite{BBKO2021hideandseek}. For $\beta$-ruling sets, a series of papers resulted in a lower bound of $\Omega(\min\{\beta \Delta^{1/\beta},\log_\Delta n\})$ which holds for general graphs~\cite{Balliu2019-focs-best,BBO22,BBKO2021hideandseek}. One of the earliest works on ruling sets presented an $O(\log n)$-round algorithm for $O(\log n)$-ruling sets~\cite{awerbuch89}. For constant $\beta$ the current state of the art algorithm runs in $O(\Delta^{2/(\beta+2)}+\log^*n)$~\cite{M21}.

\medskip
\noindent For ruling set algorithms in the \CONGEST model, we refer to \cite{MPU23}, and for the congested clique to \cite{CKPU23}. Next, we give an overview on prior ruling set algorithms in the context of low-space \mpc. For information related to linear-space \mpc, refer to \cite{JiKothappalliPemmarajuSingh25}.

\subparagraph*{Randomized low-space \mpc.} For general graphs, the current state of the art for MIS (1-ruling set) is by Ghaffari and Uitto \cite{GU19}, with a runtime of $O(\sqrt{\log \Delta} \cdot \log \log \Delta + \sqrt{\log \log n})$. For low arboricity graphs, the state of the art for MIS is by Ghaﬀari, Grunau, and Jin \cite{Ghaffari2020}, with a runtime of $O(\sqrt{\log \alpha} \cdot \log \log \alpha + \log \log n)$. For 2-ruling set, Kothapalli, Pai, and Pemmaraju \cite{Kothapalli2020} developed a $O(\log^{1/6} \Delta \cdot \log \log n)$-round algorithm using a technique that they call Sample-and-Gather. They also extend their result to $\beta$-ruling set with a runtime of $O(\beta \cdot \log^{1/(2^{\beta+1}-2)} \Delta \cdot \log \log n)$.

\subparagraph*{Deterministic low-space \mpc.} Most if not all deterministic algorithms are obtained by derandomizing prior randomized algorithms. Czumaj, Davies, and Parter \cite{CzumajDaviesParter20} provided an MIS algorithm with runtime $O(\log \Delta + \log \log n)$ via a graph sparsification technique that derandomizes Luby's algorithm. Fischer, Giliberti, and Grunau \cite{FischerGilibertiGrunau23} improved upon this for sparse graphs, giving a deterministic $O(\log \alpha + \log \log n)$-round algorithm for MIS and maximal matching for graphs with arboricity $\alpha$. Their key technique is a procedure that reduces the maximum degree to $\poly(\alpha)$ in $O(\log \log n)$ rounds.

Turning to ruling sets, Giliberti and Parsaeian \cite{GilibertiParsaeian24} gave the first deterministic ruling set algorithm with sublogarithmic round complexity, computing a $2$-ruling set in $O(\sqrt{\log \Delta} \cdot \log \log \Delta + \log \log^* n)$ rounds. This was later improved, for a certain range of $\Delta$, to $O(\log^{1/3} \Delta \cdot \log \log \Delta + \log \log n)$ rounds by Ji, Kothapalli, Pemmaraju, and Singh \cite{JiKothappalliPemmarajuSingh25}, who also extend their results to $\beta$-ruling set, with a runtime of $O(\log^{1/(2^{\beta}-1)} \Delta \cdot \log \log \Delta + \log \log n)$. By incorporating the degree drop technique by \cite{FischerGilibertiGrunau23}, they also formulate their result in terms of arboricity $\alpha$, resulting in a $O(\log^{1/(2^\beta - 1)} \alpha \cdot \log \log \alpha + \log \log n)$ runtime.


\subsection{Technical Overview and Challenges}
\label{subsec:tech_over}

The basic building block of our results is a variant of the classical MIS algorithm designed by Luby \cite{luby86} and independently by Alon, Babai, and Itai~\cite{alon86}.
\begin{tcolorbox}
  \emph{In each iteration, each (active) node $v$ draws a value $r_v \in (0,1)$ uniformly and independently at random and joins the independent set if and only if $v$ is a local minimum, i.e., $r_v < \min_{u\in N(v)}r_u$. Then, each node in the $2$-hop neighborhood is removed from the graph and considered as covered. }
\end{tcolorbox}
In the classical version for MIS, only nodes that are adjacent to the independent set nodes are removed from the graph, and it is shown that in each round, a constant fraction of edges is removed from the graph. We will call one such phase, of first drawing a random number, then letting local minima join, and finally removing the $2$-hop neighborhood, a \textit{Luby Phase}. Overall, this leads to an $O(\log n)$-round algorithm for the MIS, which is still the best known running time for general graphs to this day, for $\Delta=n^{O(1)}.$
  

In his seminal work~\cite{GhaffariImproved16}, Ghaffari presented a more involved algorithm that computes an MIS in $O(\log \Delta+\poly \log \log n)$ rounds, improving the state of the art for low degree graphs. Combined with the sampling techniques presented in~\cite{tushar-super-fast-2014}, this algorithm carries over to $2$-ruling sets and yields an $O(\sqrt{\log \Delta }+\poly \log \log n)$ algorithm (for general graphs). 

These results indicate that the main challenge in designing efficient algorithms arises primarily in the high-degree regime.

\medskip
Our main contribution is showing that, on low arboricity graphs, running repeated Luby Phases for $2$-ruling set yields a fast degree drop in the unsolved parts of the graph. 

\begin{restatable}[Degree Drop]{lemma}{degreeDrop}
    \label{lem:degreeDrop} Let $G$ be a graph with maximum degree $\Delta$ and arboricity $\alpha$. There is an $O(\log \log \Delta)$-round algorithm that computes an independent set $S$, such that $G\setminus (S\cup N_2(S))$ has maximum degree $\Delta' \leq  \max\{(82\alpha^4)^8,\log^8 n\}$~w.h.p.\firsttimefootnote{For better readability, we did not optimize constants. By repeatedly running our degree drop for more (but still constant) rounds, one can reduce the constants here.}
\end{restatable}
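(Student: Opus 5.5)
The algorithm runs Luby Phases in which covered nodes are removed within two hops, applied in $O(\log\log\Delta)$ stages. Each stage takes $O(1)$ phases and shrinks the degree threshold from $D$ to roughly $D^{c}$ for some constant $c<1$; the proposal aims for $c=7/8$. Concretely, I would aim for a one-stage lemma: if every remaining node has degree at most $D$, with $D \ge \max\{(82\alpha^4)^8,\log^8 n\}$, then after $O(1)$ Luby Phases every remaining node has degree at most $D^{7/8}$ w.h.p. Iterating this $O(\log\log\Delta)$ times drives $D$ down to the stated threshold. The independent sets produced in different phases combine into one independent set: once a node is in $S$, its whole 2-hop neighborhood is removed, so no node chosen later can be adjacent to it. Every removed node is within distance $2$ of $S$.

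\textbf{Single stage, structural part.} Fix a node $v$ whose degree is still above $D^{7/8}$. Orient the edges acyclically so that every node has out-degree at most $\alpha$; such an orientation exists by arboricity. Most of $v$'s neighbors are then in-neighbors of $v$, and each of them has at most $\alpha$ out-edges. By averaging, most neighbors $u$ of $v$ have few neighbors of degree larger than, say, $D^{1/2}\alpha$: a high-degree node can be an out-neighbor of only few of them. I would argue that a neighbor $u$ has a decent chance, roughly $1/\deg(u)$, of being a local minimum. For a good neighbor $u$ with low degree this gives probability about $D^{-1/2}$ for each such event. Since $v$ has many such neighbors, at least $D^{7/8}/2$, the expected number of $v$'s neighbors that join $S$ is at least $D^{3/8}$. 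A single join removes $v$, since $v$ is then within distance $1\le 2$ of $S$. The remaining case is that $v$'s neighbors have mostly high degree. Then the same counting applies to their neighborhoods, because $v$ is also covered by a local minimum at distance two.

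\textbf{Concentration, the main obstacle.} The events ``$u$ is a local minimum'' for different neighbors $u$ of $v$ are dependent, since they share neighbors through short cycles. I would use a read-$k$ Chernoff bound in the style of Pemmaraju--Riaz. Restrict to good neighbors $u$ of $v$ whose low-degree status makes the event depend only on the random values $r_w$ for $w\in N(u)\cup\{u\}$. The arboricity bound then controls how many such events a single variable $r_w$ influences. Through an out-degree argument, a node $w$ other than $v$ is adjacent to at most about $\alpha$ to $\mathrm{poly}(\alpha)$ of $v$'s good neighbors once the ones sharing many common neighbors are discarded. This makes the family read-$k$ with $k=\mathrm{poly}(\alpha)$, and $v$'s own variable $r_v$ can be conditioned away. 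The read-$k$ bound gives failure probability $\exp(-D^{3/8}/\mathrm{poly}(\alpha))$. This is $n^{-\omega(1)}$-ish exactly because $D\ge(82\alpha^4)^8$ and $D\ge\log^8 n$, which is where the thresholds in the statement come from. The delicate part is pruning the neighbor set so that $k$ stays polynomial in $\alpha$ while keeping $\Omega(D^{7/8})$ good neighbors. I expect this to require a small number of extra phases per stage, which accounts for the constant rounds per stage.

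A union bound over all nodes and all $O(\log\log\Delta)$ stages finishes the proof, giving maximum degree $\Delta'\le\max\{(82\alpha^4)^8,\log^8 n\}$ w.h.p.
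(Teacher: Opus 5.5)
Your outer structure matches the paper: stages of $O(1)$ Luby phases, each shrinking the degree bound polynomially, iterated $O(\log\log\Delta)$ times with a union bound. The single-stage concentration step, however, has a genuine gap.

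You want the events ``$u$ is a local minimum'', for a pruned set of $v$'s low-degree neighbors, to form a large read-$\mathrm{poly}(\alpha)$ family while all nodes compete. In an $\alpha$-out-degree orientation only out-degrees are bounded. So a node can be the out-neighbor of arbitrarily many of $v$'s neighbors (your averaging claim has the direction reversed), and a single rank can enter all of these events. Concretely, let $v$ and $w$ both be adjacent to $u_1,\dots,u_D$ and nothing else. This graph has arboricity $2$, and every $u_i$ has degree $2$, so $v$ falls in your low-degree case. Every pair $u_i,u_j$ shares $w$, so a read-$k$ subfamily has at most $k$ members. Worse, whenever the smallest rank among $\{v,w,u_1,\dots,u_D\}$ belongs to $v$ or $w$, which happens with probability $2/(D+2)$, no $u_i$ joins. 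Thus your bound $\exp(-D^{3/8}/\mathrm{poly}(\alpha))$ on ``no good neighbor of $v$ joins'' is false. Conditioning $r_v$ or $r_w$ away does not rescue it, since a small conditioned value kills all the events at once. In this example $v$ is covered only because $v$ or $w$ itself joins, which your first case never accounts for.

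The paper closes this with two ingredients absent from your plan.

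\emph{Restricted activity.} In Step~1 only nodes of degree at most $\sqrt{\Delta}$ are active. Hubs such as $v,w$ therefore can neither block low-degree neighbors nor create heavy dependencies, and each active rank influences at most $\sqrt{\Delta}+1$ events.

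\emph{Splitting the local-minimum event.} It is split along the orientation into $O_u$ (beat the at most $\alpha$ out-neighbors, probability at least $1/(\alpha+1)$) and $I_u$ (beat the in-neighbors). The candidate set is independent, of size $\Omega(\Delta^{3/4}/\alpha)$ in $N(v)$. In Step~2, with every node active, it has size $\Omega(\Delta^{5/4}/\alpha^2)$ in $N_2^-(v)$, using \Cref{lem:large_2_hop_NH}. On this set:
\begin{itemize}
\item The $O_u$ are only read-$(\Lambda+1)$, but they still concentrate because the candidate set is polynomially larger than $\Lambda$ (\Cref{lem:many_successful}).
\item The $I_u$ are genuinely read-$\alpha$, since a rank $r_w$ enters $I_u$ only for the at most $\alpha$ out-neighbors $u$ of $w$.
\item The conditioning on $\bigcap_u O_u$ is removed by Harris--FKG applied to the transformed ranks. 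This is where independence of the candidate set is used (\Cref{lem:node_joins}).
\end{itemize}

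Your high-degree case is also only sketched. It needs the arboricity bound $|N_2^-(v)|\ge\Delta^{5/4}/(5\alpha)$, and it runs into the same read-parameter obstacle, resolved in the same way.
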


In fact, we show that after $O(1)$ slightly modified Luby Phases, the remaining graph has a maximum degree of $\Delta^{3/4}$ w.h.p.  Iterating this procedure yields the lemma above. 
Our procedure for the fast degree drop is not exactly the classical algorithm designed by Luby, but we adapt it by first letting only nodes of degree at most $\sqrt{\Delta}$ participate in a Luby Phase (Step 1) and then letting all nodes participate in another Luby Phase (Step 2). This process is then repeated for $O(1)$ rounds to guarantee a w.h.p.\ coverage of all high degree nodes. 
Although this two-step style is not a major change to Luby's algorithm, it leads to an exponentially better algorithm for the well-studied 2-ruling set problem. This highlights that there is no need for new, complex algorithms when classical ones are yet to be analyzed to their full potential.

As mentioned before, the main challenge is to deal with dependencies due to the existence of short cycles. If we were to ignore these dependencies, we see that we could achieve the desired fast degree drop by a straightforward analysis.

\subparagraph*{Analysis Baseline.}
We consider one Luby Phase, and our goal is to obtain a polynomial degree drop in each iteration. Fix a node $v$ with degree larger than $\Delta^{3/4}$. For simplicity of this overview, we also assume that the arboricity is constant.

First, consider the case where $v$ has few high-degree neighbors. In particular, $v$ has at least $\Delta^{3/4}/2$  neighbors with degree at most $\sqrt{\Delta}$. 
Any low-degree neighbor becomes a local minimum with probability roughly $1/\sqrt{\Delta}$. If the events of becoming local minima \emph{were independent}, at least one of the $\Delta^{3/4}/2$ low-degree neighbors would be a local minimum with high probability $1-(1-1/\sqrt{\Delta})^{\Delta^{3/4}/2}\geq 1-1/n^c$, using the assumption $\Delta\geq \poly\log n$.

In the second case, $v$ has many high-degree neighbors. Suppose that at least half of the neighbors have a degree of at least $\sqrt{\Delta}$. 
We show that in this case, due to small arboricity, node $v$ must have (roughly) $\Delta^{1.25}$ nodes in its 2-hop neighborhood.
If the events of becoming local minima \emph{were independent}, each one of those nodes would be a local minimum with probability roughly $1/\Delta$. 
Hence, again, at least one of those nodes would become a local minimum with high probability $1-(1-1/\Delta)^{\Delta^{1.25}}\geq 1-1/n^c$, and in both cases, node $v$ would be removed from the graph. This simplified view shows that the aforementioned two-step Luby style would not be required if the events were independent, but we will see that it is necessary for our analysis when the dependencies are present.

\subparagraph{Dealing with Dependencies via Concentration Inequalities.} Unfortunately, in both cases, the events of becoming local minima are anything but independent. 
In our approach, we deal with the dependencies created by short cycles directly.
Pemmaraju and Riaz~\cite{pemmaraju_et_al:LIPIcs.OPODIS.2016.9} were the first
to use read-$k$ concentration inequalities to analyze the progress of Luby's
algorithm on bounded-arboricity graphs.\footnote{We thank an anonymous reviewer for pointing us to the work of
Pemmaraju and Riaz~\cite{pemmaraju_et_al:LIPIcs.OPODIS.2016.9}. An earlier version of this manuscript handled the dependencies in the degree drop analysis by a substantially more involved direct argument. The read-$k$ tool allows us to present the algorithm and its analysis in a much cleaner form, while still obtaining near-optimal bounds for $2$-ruling sets on low arboricity graphs. Unfortunately, the framework presented in~\cite{pemmaraju_et_al:LIPIcs.OPODIS.2016.9} is missing reasoning in the proof of Theorem 5 why the underlying ranks stay independent after conditioning on nodes' ranks beating their out-neighbors' ranks. In particular, ranks can be correlated after this conditioning, see~\Cref{apx:counterexample}. We show how to resolve this issue in the proof of~\Cref{lem:node_joins}.}  

Their algorithm computes an MIS in
$O(\poly(\alpha) \cdot \sqrt{\log n \log \log n})$ rounds. While this is much faster
than the classical $O(\log n)$ bound in sparse graphs, it is still far from the
double-logarithmic runtimes suggested by ruling-set lower bounds. Moreover, the
polynomial dependence on $\alpha$ makes the result unsuitable for obtaining a
flat $O(\log \log n)$ runtime even for mildly growing arboricity.

Building on the read-$k$ concentration framework introduced by Pemmaraju and Riaz for analyzing Luby-style progress on bounded-arboricity graphs~\cite{pemmaraju_et_al:LIPIcs.OPODIS.2016.9}, we develop a degree drop analysis tailored to 2-ruling sets. 
To illustrate the structure of our argument, we first consider the simpler case in which a high-degree node has many low-degree neighbors. 
For the purpose of this overview, assume that the arboricity $\alpha$ is constant and recall that only nodes of degree at most $\sqrt{\Delta}$ are active in the Luby phase. 


\medskip
\textit{Our analysis.} Since the graph has arboricity $\alpha$, it admits an orientation in which every node has at most $\alpha$ outgoing edges. We fix such an orientation purely for the analysis. For every node $u$, we consider two events:
\begin{align*}
    O_u &: \text{\(u\) has smaller rank than all of its out-neighbors},\\[-5pt]
    I_u &: \text{\(u\) has smaller rank than all of its in-neighbors}.
\end{align*}

If both events occur, then $u$ has the smallest rank among all of its neighbors and therefore joins the ruling set. 

\begin{figure}
	\centering
	\includegraphics[width=0.85\textwidth]{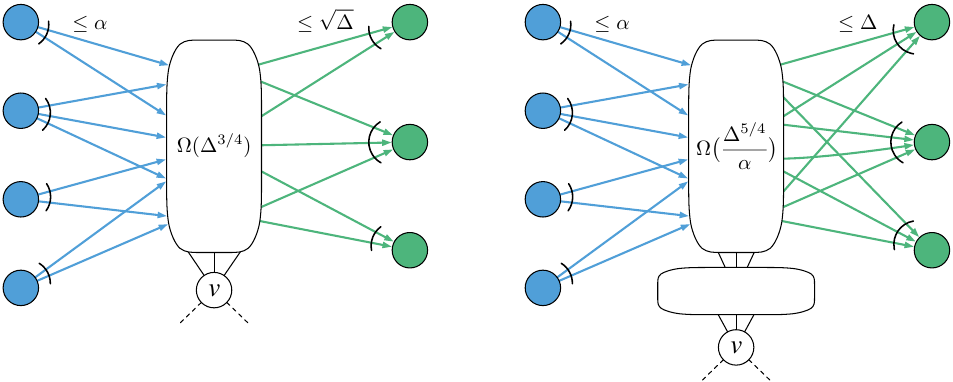}
    	\caption{On the left, we consider the case of many low-degree neighbors (discussed in the intro). On the right, we consider the case of many high-degree neighbors. Any blue node influences at most $\alpha+1$ events $I_u\coloneq \{ r(v) < \min \left( r(w) \mid w \textit{ in-neighbor of } v \right) \}$ for $u \in N(v)$, in both cases. Any green node influences at most $\sqrt{\Delta}+1$ (resp. $\Delta+1$) events $O_u\coloneq \{ r(v) < \min \left( r(w) \mid w \textit{ out-neighbor of } v \right) \}$ for $u \in N_2(v)$.}
	\label{fig:neighbors-both}
\end{figure}

\medskip

The analysis is split into two parts, first showing that enough nodes in the $2$-hop neighborhood of a large-degree node $v$ satisfy their event $O_u$, and then among those \emph{successful} in the first part at least one satisfies its respective event $I_u$, yielding that $v$ is covered.

\smallskip

\textit{First part of the analysis.} We first focus on the events $O_u$; see~\Cref{fig:neighbors-both} (left) for an illustration.  Since $u$ has at most $\alpha$ out-neighbors, the probability that it beats all of them is at least $1/(\alpha+1)$, which is $\Theta(1)$ when $\alpha$ is constant. Thus, among the many neighbors of $v$, we expect a constant fraction to satisfy $O_u$.

The main obstacle is that the events $O_u$ are not independent. However, the dependencies are limited as only nodes with degree at most $\sqrt{\Delta}$ are active and can only affect the events corresponding to itself and its neighbors. Thus, using a read-$k$ concentration bound with $k=\sqrt{\Delta}$, we show that the actual number of successful nodes remains close to its expectation. In particular, with high probability there are still $\Omega(\Delta^{3/4}/\alpha)$ successful nodes in $N(v)$.

\smallskip

\textit{Second part of the analysis.} Having identified many successful nodes, we now ask whether at least one of them also satisfies $I_u$. Since every active node has degree at most $\sqrt{\Delta}$, the probability that a successful node additionally beats all of its in-neighbors is at least $1/(\sqrt{\Delta}+1)$ and in expectation such a fraction joins the independent set.

Again, the events $I_u$ considered in the second part of the analysis are not independent, and additional dependencies on nodes' ranks are introduced due to conditioning on the nodes' success in the first part of the analysis; see~\Cref{apx:counterexample} for an example illustrating the new dependencies that have not been reasoned about in~\cite{pemmaraju_et_al:LIPIcs.OPODIS.2016.9}. We remove the latter dependencies by applying a version of the Harris-FKG inequality~\cite{EPW67,Zhao2022CorrelationInequalities}, effectively reducing to the case without any conditioning on success from the first part of the analysis. Then each (independent) rank affects only $\alpha+1$ of the events $I_u$; hence, among $\Omega(\Delta^{3/4}/\alpha)$ successful candidates that beat their out-neighbors, read-$k$ concentration implies that with high probability at least one also beats its in-neighbors and joins the independent set.


\smallskip

\textit{Case of many high-degree neighbors.} The case where $v$ has many high-degree neighbors can be handled in a very similar manner by considering nodes in $N_2(v)$, see~\Cref{fig:neighbors-both} (right) for an illustration. One can show that some node in this two-hop neighborhood joins the independent set with probability at least $1-1/n$. Repeating the process for $O(1)$ rounds then implies that every high-degree node is covered with probability at least $1-1/n^c$.

\subparagraph*{The Remaining Low-Degree Graph.}
\Cref{lem:degreeDrop} allows us to reduce the degree to $\poly \log n$, which allows us to use~\cite{GhaffariImproved16} to compute a partial independent set in $O(\log \Delta)=O(\log \log n)$ rounds such that the remaining unsolved graph shatters into components of size $O(\poly \log n)$.
Post-shattering based on~\cite{Barenboim2010} would compute an MIS for the remaining components in $O(\alpha^2+\log \log n)=O(\log\log n)$ rounds for instances up to arboricity $O(\sqrt{\log \log n})$. To prove~\Cref{thm:2RSconstArb}, we design a slightly different post-shattering by abusing the fact that we are computing a $2$-ruling set, making it work up to arboricity $O(\log \log n)$.

In fact, our degree drop lemma (\Cref{lem:degreeDrop}) applies w.h.p.\ to graphs of arboricity up to $\poly \log n$, reducing the maximum degree to $\poly \log n$. The main remaining challenge is handling this residual polylogarithmic-degree graph. This regime is difficult because both the maximum degree and the arboricity may still be $O(\poly \log n)$, which is already as hard as general graphs. Our \Cref{thm:MORE} addresses this case using the subsampling approach of~\cite{tushar-super-fast-2014}, which exploits the relaxed requirements of computing a $2$-ruling set.

\subparagraph*{Comparison to~\cite{BMU25}.} We already discussed the results of~\cite{BMU25}, but we want to emphasize that our algorithm recovers all of their results in a uniform way. For $\alpha=1$, we recover the result on trees by~\cite{BMU25}, additionally they present a $2$-ruling set algorithm for high-girth graphs with a runtime of $\widetilde{O}(\log^{5/3} \log n)$ rounds. 
Note that in graphs of girth at least $7$ the $3-$hop neighborhood of each node looks like a tree. In particular, if we root $N^3[v]$ at $v$, each node has only one in-neighbor. To deal with dependencies induced by in-neighbors, they implement an additional activation step; nodes are active with probability $1/2$, and only participate in the Luby Phase if they are active. Then, nodes with an inactive in-neighbor join if they beat all their out-neighbors, and these events are independent for such nodes due to the tree-like structure.
Although this is nice for a simpler analysis, it diverts more from the simple Luby like algorithm. Using read-$k$ analysis, we can recover their result without introducing any changes to our algorithm. In particular, our degree drop lemma also holds for graphs of girth at least $7$, and reduces the degree to $\poly \log n$ in $O(\log \log n)$ rounds. The higher runtime stems from the post-shattering, as described in~\cite{BMU25}.

\section{Preliminaries \& Notation} \label{sec:notation}

Given a graph $G=(V,E)$, we denote by $N(v)$ the neighborhood of $v$ (excluding $v$ itself). For a node $v\in V$ we let $\deg(v) \coloneqq |N(v)|$ denote the degree of $v$. Furthermore we denote by $N_2(v)$ all nodes $w \in V$ with $dist_G(v,w)\leq2$, i.e., the 2-hop neighborhood of $v$ (again excluding $v$ itself). By $N_2^-(v)$ we denote all nodes $w \in V$ with $dist_G(v,w)=2$, i.e., the \emph{exclusive} 2-hop neighborhood of $v$. Note that $N_2^-(v) \subset N_2(v)$ and $N_2^-(v) = N_2(v) \setminus N(v)$. For a subset $W \subset V$, let $G[W]$ be the induced subgraph of the subset $W$. For an oriented graph and a node $v$, we let $N^+(v)$ be the out-neighbors of $v$ and $N^-(v)$ be the in-neighbors of $v$. We denote $\{1,\dots,n\}$ by $[n]$ for $n\in \mathbb{N}$.

An algorithm is correct w.h.p. (with high probability) if there exists a constant $c>1$ such that it errs with probability $\leq 1/n^c$. We use the notation $\widetilde{O}$ as follows: $g(n) \in \widetilde{O}(f(n))$ if there exists a constant $c$ such that $g(n) \in O(f(n) \cdot \log^c(f(n)))$. Additionally, we will need some well-known results about graphs of bounded arboricity~\cite{Barenboim2010}, and~\Cref{lem:indep_set} is a standard result and we omit the proof.
  
\begin{lemma}\cite[Lemma 3.1]{Barenboim2010}
\label{lem:degeneratenodes}
    A graph $G=(V,E)$ with arboricity $\alpha$ has at least $\frac{|V|}{2}$ nodes with degree $4\alpha$ or less.
\end{lemma}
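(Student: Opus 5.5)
The plan is a counting argument on edges, using the arboricity bound $|E(G)| \le \alpha(|V|-1)$ and then Markov's inequality on the degree distribution. This bound follows directly from the definition: the edge set of $G$ is partitioned into $\alpha$ forests, each forest on $|V|$ nodes has at most $|V|-1$ edges, and so $|E(G)| < \alpha|V|$. Equivalently, the formal definition in the footnote, applied with $G'=G$, gives $\lceil |E|/(|V|-1)\rceil \le \alpha$.

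Next I sum degrees. By the handshake lemma,
\[
\sum_{v\in V}\deg(v)=2|E| < 2\alpha |V|,
\]
so the average degree is less than $2\alpha$. Let $H=\{v\in V : \deg(v) > 4\alpha\}$ be the set of high-degree nodes. Since degrees are nonnegative,
\[
|H|\cdot 4\alpha < \sum_{v\in H}\deg(v) \le \sum_{v\in V}\deg(v) < 2\alpha|V|,
\]
and hence $|H| < |V|/2$. The complement $V\setminus H$, which consists of the nodes of degree at most $4\alpha$, therefore has size greater than $|V|/2$. This proves the claim.

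There is no real obstacle here. The only point needing care is the degenerate case. If $|V|\le 1$ or $\alpha=0$, there are no edges, so every node has degree $0\le 4\alpha$ and the statement holds trivially. Otherwise $\alpha\ge 1$, and the strict inequalities above are valid. The argument in fact gives the slightly stronger conclusion that strictly more than half the nodes have degree at most $4\alpha$; with a threshold of $4\alpha$ rather than $2\alpha$ there is considerable slack.
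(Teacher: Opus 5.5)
Your argument is correct, and it is the standard one behind the cited Barenboim--Elkin lemma, which the paper uses without reproducing a proof: bound $|E|\le\alpha(|V|-1)$, deduce $\sum_v\deg(v)<2\alpha|V|$, and use averaging to show that fewer than $|V|/2$ nodes have degree above $4\alpha$. One cosmetic fix: $|H|\cdot 4\alpha<\sum_{v\in H}\deg(v)$ fails as a strict inequality when $H=\emptyset$, so write it with $\le$; the conclusion is unaffected because the last inequality in your chain is already strict.
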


\begin{definition}
    \label{def:h-partition}
    Let $G=(V,E)$ be a graph with arboricity $\alpha$. A partition of the vertex set $H_1, \dots H_\ell$ is an $H-$partition with degree at most $(2+\epsilon)\cdot \alpha$ of size $\ell$ if every $v \in H_i$, $i \in \set{1,\dots,\ell}$, has at most  $(2+\epsilon)\cdot \alpha$ neighbors in the vertex set $\bigcup_{j=i}^{\ell}H_j$.
\end{definition}

\begin{lemma}\cite[Lemma 3.3 and 3.4]{Barenboim2010}
\label{lem:h-partition}
    For a graph $G$ with arboricity $\alpha$, and a parameter $\epsilon$, $0 < \epsilon \leq 2$, we can compute an $H$-partition of size $\ell \leq \lfloor \frac{2}{\epsilon} \log n \rfloor$ with degree at most $(2 + \epsilon) \cdot \alpha$ in $O(\log n)$ rounds.
\end{lemma}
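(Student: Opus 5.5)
We compute the $H$-partition by the standard peeling procedure and show that each peeling step removes a constant fraction (depending on $\epsilon$) of the remaining nodes.

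\textbf{Algorithm.} Maintain a set $V_i$ of active nodes, starting from $V_1=V$. In iteration $i$, every active node $v$ counts its active neighbors, $\deg_{G[V_i]}(v)$, which takes one round of communication. If $\deg_{G[V_i]}(v)\leq (2+\epsilon)\alpha$, then $v$ joins $H_i$ and becomes inactive, so $V_{i+1}=V_i\setminus H_i$. We assume, as in~\cite{Barenboim2010}, that nodes know $\alpha$ (or an upper bound on it) and $n$, so they can evaluate the threshold and stop after the stated number of iterations.

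\textbf{Correctness.} Take $v\in H_i$. Every neighbor of $v$ in $\bigcup_{j\geq i}H_j$ lies in $V_i$, because a node that joins $H_j$ for some $j\geq i$ was still active at iteration $i$. Therefore $v$ has at most $\deg_{G[V_i]}(v)\leq(2+\epsilon)\alpha$ such neighbors, which is exactly the condition of \Cref{def:h-partition}.

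\textbf{Size bound.} This is the only real step. The subgraph $G[V_i]$ has arboricity at most $\alpha$, so $|E(G[V_i])|\leq \alpha(|V_i|-1)<\alpha|V_i|$ and the degrees in $G[V_i]$ sum to less than $2\alpha|V_i|$. By Markov's inequality, fewer than $\frac{2}{2+\epsilon}|V_i|$ nodes have degree larger than $(2+\epsilon)\alpha$. Hence
\[
|V_{i+1}| < \frac{2}{2+\epsilon}\,|V_i| = \frac{|V_i|}{1+\epsilon/2}.
\]
After $t$ iterations we get $|V_{t+1}|<n/(1+\epsilon/2)^t$, so $V_{t+1}$ is empty once $(1+\epsilon/2)^t\geq n$. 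Because $\ln(1+x)$ is concave and $0<x=\epsilon/2\leq 1$, we have $\ln(1+x)\geq x\ln 2$. Therefore
\[
\log_{1+\epsilon/2} n \;\leq\; \frac{\ln n}{(\epsilon/2)\ln 2} \;=\; \frac{2}{\epsilon}\log n .
\]
The number of nonempty classes is an integer, so $\ell\leq\lfloor\frac{2}{\epsilon}\log n\rfloor$. Each iteration costs one round, so the running time is $O(\epsilon^{-1}\log n)$. Under the convention that $\epsilon$ is a fixed constant, this is $O(\log n)$ rounds.

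The only point that needs care is this last computation: the bound $\ell\leq\lfloor\frac{2}{\epsilon}\log n\rfloor$ must hold with the exact constant over the whole range $0<\epsilon\leq 2$. The concavity bound above gives it, with $\log=\log_2$.
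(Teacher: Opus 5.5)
Your peeling procedure, the correctness argument, and the shrinkage $|V_{i+1}|<|V_i|/(1+\epsilon/2)$ are the standard Barenboim--Elkin argument behind the cited lemma, and they are fine. The step that fails is the final rounding. What you have shown is that $V_{t+1}=\emptyset$ for every integer $t\ge \log_{1+\epsilon/2} n$. Equivalently, $V_\ell\neq\emptyset$ forces $(1+\epsilon/2)^{\ell-1}<n$. This gives only $\ell\le\lceil\log_{1+\epsilon/2} n\rceil\le\lceil\frac{2}{\epsilon}\log n\rceil$.

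Integrality turns a bound $\ell\le x$ into $\ell\le\lfloor x\rfloor$. But you never obtain $\ell\le\frac{2}{\epsilon}\log n$; you only obtain $\ell-1<\log_{1+\epsilon/2}n$. At $\epsilon=2$ your concavity estimate is an equality. So for every $n$ that is not a power of two, your argument still permits $\ell=\lfloor\log_2 n\rfloor+1$. Your claim that the exact constant is handled over the whole range $0<\epsilon\le 2$ is therefore not justified.

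For $\epsilon=2$, the only case the paper uses (in \Cref{alg:det-2-RS-low-arb}), the fix is to keep the $-1$ you discarded. The bound $|E(G[V_i])|\le\alpha(|V_i|-1)$ gives, whenever $V_{i+1}\neq\emptyset$, $|V_{i+1}|<(|V_i|-1)/2$, i.e., $|V_{i+1}|+1<(|V_i|+1)/2$. If $\ell\ge 2$, iterating down to the nonempty $V_\ell$ yields $2\le|V_\ell|+1<(n+1)/2^{\ell-1}$. Hence $2^\ell<n+1$, so $2^\ell\le n$ and $\ell\le\lfloor\log_2 n\rfloor$. If $\ell=1$, the bound holds trivially for $n\ge 2$; at $n=1$ the stated bound is $0$ and fails trivially.

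For general $\epsilon$, the same sharpening only gives $(1+\epsilon/2)^{\ell-1}<(\epsilon n+2)/(\epsilon+2)$, and this does not always reach the floor by itself. For example, with $\epsilon=1.99$ and $n=7$ it still allows $\ell=3$, while $\lfloor\frac{2}{\epsilon}\log_2 7\rfloor=2$. So for general $\epsilon$ you should either state the ceiling bound, which is all that an $O(\log n)$ layer count needs, or supply an additional argument.
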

\begin{lemma}
    \label{lem:indep_set}
    Let $G$ be a graph with arboricity $\alpha$ and let $M \subseteq V$ be a set of nodes. Then there exists an independent set $I\subseteq M$ with $|I| \geq |M|/2\alpha.$
    
\end{lemma}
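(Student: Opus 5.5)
The statement to prove is \Cref{lem:indep_set}: for a graph $G$ of arboricity $\alpha$ and any $M\subseteq V$, there is an independent set $I\subseteq M$ with $|I|\ge |M|/2\alpha$. The plan is to reduce to a coloring bound on the induced subgraph $G[M]$ and then take the largest color class.

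First, arboricity is monotone under taking subgraphs. By the footnote definition, $\alpha(G)$ is a maximum over all subgraphs, so $G[M]$ also has arboricity at most $\alpha$.

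Second, $G[M]$ is $(2\alpha-1)$-degenerate. To see this, let $H$ be any subgraph with $h\ge 2$ vertices. Its edge count satisfies $|E(H)|\le \alpha(h-1)$, so the average degree is at most $2\alpha(h-1)/h<2\alpha$. Hence some vertex of $H$ has degree at most $2\alpha-1$. We may repeatedly delete such a minimum-degree vertex to obtain a degeneracy ordering $v_1,\dots,v_{|M|}$ of $M$, where the deletion order is $v_{|M|},\dots,v_1$. In this ordering, each $v_i$ has at most $2\alpha-1$ neighbors among $v_1,\dots,v_{i-1}$.

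Third, color the vertices greedily in the order $v_1,v_2,\dots$, giving each $v_i$ the smallest color not used by its earlier neighbors. Since each $v_i$ has at most $2\alpha-1$ earlier neighbors, at most $2\alpha$ colors are used. This yields a proper coloring of $G[M]$ with at most $2\alpha$ classes.

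Finally, by pigeonhole, some color class $I$ has $|I|\ge |M|/2\alpha$. Any color class of a proper coloring is independent in $G[M]$. Because $G[M]$ is an induced subgraph, $I$ is also independent in $G$, which proves the claim.

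The argument is routine. The only step needing care is the strict inequality $2\alpha(h-1)/h<2\alpha$, which gives degree at most $2\alpha-1$ and therefore exactly $2\alpha$ colors rather than $2\alpha+1$. \Cref{lem:degeneratenodes} is not sharp enough for this, since it only gives degree at most $4\alpha$, so we use the edge-count definition directly. The degenerate case $|M|\le 1$ is trivial: there $I=M$ already satisfies the bound.
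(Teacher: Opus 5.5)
Your proof is correct and complete. The paper does not prove this lemma at all: it calls it standard and omits the proof. Your argument is exactly the standard one a reader would supply. Every subgraph on $h\ge 2$ vertices has at most $\alpha(h-1)$ edges, hence average degree below $2\alpha$. So $G[M]$ is $(2\alpha-1)$-degenerate and greedily $2\alpha$-colorable, and the largest color class, being independent in the induced subgraph and hence in $G$, has size at least $|M|/2\alpha$.

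An equivalent variant skips the coloring. Repeatedly put a minimum-degree vertex of the remaining graph into $I$ and delete its closed neighborhood, which removes at most $2\alpha$ vertices per step.

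Your remark about \Cref{lem:degeneratenodes} is right: its degree bound $4\alpha$ would only yield $|M|/(4\alpha+1)$. Also note that the bound implicitly assumes $\alpha\ge 1$, which is needed for $|M|/2\alpha$ to make sense in the first place.
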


\subparagraph*{Read-$k$ Families.}
We now define a read-$k$ family of random variables. Informally speaking, a read-$k$ family is a set of (possibly dependent) random variables $\{Y_j \mid 1 \leq j \leq m\}$ which are fully determined by an underlying set of independent random variables $\{X_i \mid 1\leq i \leq n\}$ such that each $X_i$ influences at most $k$ different random variables $Y_j$. For $k=1$ the $Y_j$'s are independent, and for $k>1$ the $Y_j$'s can have an arbitrary dependency structure. 

\begin{definition}[Read-$k$ families] Let $X_1, \dots,X_n$ be independent random variables. For $j \in [m]$, let $P_j \subseteq [n]$ and let $f_j$ be a Boolean function of $\left(X_i\right)_{i\in P_j}$. Assume that $|\{j\mid i\in P_j\}| \leq k$ for every $i \in [n]$. Then the random variables $Y_j=f_j\left((X_i)_{i \in P_j}) \right)$ are called a read-$k$ family.
\end{definition}

For read-$k$ families, \cite{https://doi.org/10.1002/rsa.20532} showed Chernoff-Like bounds allowing us to make statements about the concentration of such random variables.

\begin{theorem}\cite[Corollary of Theorem 1.1]{https://doi.org/10.1002/rsa.20532}
\label{lem:read-k-chernoff}
  Let $Y_1, \dots, Y_m$ be a family of read-$k$ indicator random variables with $\Pr[Y_i=1]=p_i.$ Let $Y \coloneq \sum_{i=1}^mY_i$. Then for any $\delta >0$,
  $$
  \Pr[Y\leq (1-\delta)\E[Y] ] \leq \exp\left(-\frac{\delta^2\E[Y]}{2k}\right).
  $$
\end{theorem}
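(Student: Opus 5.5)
The plan is to derive the bound from the relative-entropy form of the read-$k$ tail inequality of Gavinsky, Lovett, Saks and Srinivasan, and then weaken the divergence to a quadratic. Write $m$ for the number of variables and $p \coloneq \frac{1}{m}\sum_{i=1}^m p_i$, so that $\E[Y]=pm$. I recall Theorem~1.1 of the cited paper as follows: for a read-$k$ family of indicators and any $0\le \varepsilon\le p$,
$$\Pr[Y\le (p-\varepsilon)m]\le \exp\bigl(-D(p-\varepsilon\,\|\,p)\,m/k\bigr),$$
where $D(q\|p)=q\ln\frac{q}{p}+(1-q)\ln\frac{1-q}{1-p}$ is the binary KL divergence. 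I take this as the black box; the statement's label ``corollary'' suggests this is the intended route. I will need to confirm that the cited theorem allows non-identical $p_i$ and uses the average $p$; I believe it does.

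First I dispose of the trivial ranges of $\delta$. If $\E[Y]=0$, the right-hand side is $1$ and there is nothing to prove. If $\delta>1$, the event $Y\le(1-\delta)\E[Y]<0$ is impossible, so its probability is $0$. This leaves $0<\delta\le1$. Set $q\coloneq(1-\delta)p$, i.e.\ $\varepsilon=\delta p\le p$, so that $\Pr[Y\le(1-\delta)\E[Y]]=\Pr[Y\le qm]\le \exp(-D(q\|p)m/k)$.

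The remaining step is the elementary estimate $D(q\|p)\ge \frac{(p-q)^2}{2p}$ for $0\le q\le p$. Substituting $p-q=\delta p$ gives $D(q\|p)\ge \delta^2p/2$, hence an exponent of at least $\delta^2pm/(2k)=\delta^2\E[Y]/(2k)$, which is the claim. To prove the estimate, fix $p$ and set $g(q)=D(q\|p)-\frac{(p-q)^2}{2p}$. Then $g(p)=0$ and $g'(p)=0$, since $D$ is minimized at $q=p$. Moreover $g''(q)=\frac{1}{q(1-q)}-\frac1p\ge \frac1q-\frac1p\ge0$ on $(0,p]$. So $g$ is convex on this interval with a vanishing derivative at $p$, which forces $g\ge0$ on $(0,p]$. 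The endpoint $q=0$ (i.e.\ $\delta=1$) follows by continuity, or directly from $D(0\|p)=-\ln(1-p)\ge p\ge p/2$.

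The only real obstacle is the inequality for read-$k$ families itself, which I am importing rather than reproving. Its proof is the genuinely non-trivial part: it goes through Shearer-type entropy or moment arguments over the underlying independent variables. Given that theorem, everything above is routine calculus.
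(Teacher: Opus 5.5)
Your proposal is correct and follows the paper's proof: the same trivial cases ($\E[Y]=0$, $\delta>1$) are set aside, Theorem~1.1 of Gavinsky et al.\ is applied with $\varepsilon=\delta p$, and the result follows from the bound $D((1-\delta)p\,\|\,p)\ge \delta^2 p/2$. The only difference is how that divergence bound is proved. You use convexity of $D(q\|p)-(p-q)^2/(2p)$ in $q$, while the paper uses $\log(1+x)\ge x/(1+x)$ and then $(1-\delta)\log(1-\delta)+\delta\ge\delta^2/2$; both are valid.
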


\Cref{lem:read-k-chernoff} is not exactly stated as in \cite{https://doi.org/10.1002/rsa.20532}, but it can be derived from it. Similarly, \Cref{lem:read-k-all-p} can be derived from~\cite{https://doi.org/10.1002/rsa.20532}. For completeness, we include the derivations, see~\Cref{apx:deferred_proofs}.

\begin{theorem}\cite[Corollary of Theorem 1.2]{https://doi.org/10.1002/rsa.20532}
\label{lem:read-k-all-p}
    Let $Y_1, \dots, Y_m$ be a family of read-$k$ indicator random variables with $\Pr[Y_i=1]\leq p$. Then 
    $
    \Pr[Y_1=\dots=Y_m=1] \leq p^{m/k}.
    $
\end{theorem}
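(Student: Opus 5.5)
The plan is to specialize the product form of the read-$k$ bound of~\cite{https://doi.org/10.1002/rsa.20532} to a uniform bound on the marginals. Theorem~1.2 there states that for any read-$k$ family $Y_1,\dots,Y_m$ of indicator variables, defined by $Y_j=f_j((X_i)_{i\in P_j})$ over independent $X_1,\dots,X_n$,
\[
\Pr[Y_1=\dots=Y_m=1]\;\leq\;\Bigl(\prod_{j=1}^m \Pr[Y_j=1]\Bigr)^{1/k}.
\]
First I will check that our definition of a read-$k$ family matches theirs. In both, each $Y_j$ is a Boolean function of the coordinates in $P_j$, and each index $i$ lies in at most $k$ of the sets $P_j$. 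So the inequality applies verbatim.

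Given that, the derivation is a single monotonicity step. By assumption each factor satisfies $\Pr[Y_j=1]\leq p$, and all factors are nonnegative, so the product is at most $p^m$. Since $t\mapsto t^{1/k}$ is nondecreasing on $[0,\infty)$, we get $\Pr[Y_1=\dots=Y_m=1]\leq (p^m)^{1/k}=p^{m/k}$, which is the claim.

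For a self-contained appendix, I would also sketch why the product form holds, using Shearer-type (entropy) reasoning. Condition on the event $A=\{Y_1=\dots=Y_m=1\}$ and let $X'=(X_1,\dots,X_n)$ be distributed as $X$ conditioned on $A$. Compare the relative entropy $D(X'\|X)=\log(1/\Pr[A])$ with the relative entropies of the projections $X'_{P_j}$. Since $X$ is a product distribution and each coordinate is covered at most $k$ times, Shearer's lemma for relative entropy gives $\sum_j D(X'_{P_j}\|X_{P_j})\leq k\, D(X'\|X)$. On the other hand, $X'_{P_j}$ is supported on $\{f_j=1\}$, so $D(X'_{P_j}\|X_{P_j})\geq \log(1/\Pr[Y_j=1])$. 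Combining these gives $\sum_j\log(1/\Pr[Y_j=1])\leq k\log(1/\Pr[A])$, which is exactly Theorem~1.2 after exponentiating.

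The only step needing care is this relative-entropy Shearer inequality for product measures. I would invoke it from the cited work rather than reprove it, since the corollary itself requires nothing beyond the monotonicity step above.
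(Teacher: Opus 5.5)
Your proof is correct, but it uses a different reduction than the paper.

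\textbf{The paper's route.} The paper reads Theorem~1.2 of the cited work in an equal-marginals form and reduces the hypothesis $\Pr[Y_i=1]\le p$ to it by padding. For $p<1$ it adds fresh independent coins $Z_i$ with $\Pr[Z_i=1]=(p-p_i)/(1-p_i)$ and sets $\widetilde Y_i=Y_i\lor Z_i$, so that every marginal equals $p$ exactly. The padded family is still read-$k$, because each $Z_i$ is read only by $\widetilde Y_i$. The conclusion then follows from $\{Y_1=\dots=Y_m=1\}\subseteq\{\widetilde Y_1=\dots=\widetilde Y_m=1\}$.

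\textbf{Your route.} You invoke the heterogeneous product bound $\Pr[Y_1=\dots=Y_m=1]\le(\prod_j\Pr[Y_j=1])^{1/k}$. After that the claim is just monotonicity of $t\mapsto t^{1/k}$, with no auxiliary randomness and no separate $p=1$ case.

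\textbf{Trade-off.} Your route needs the stronger inequality with unequal marginals. The paper's padding trick needs only the homogeneous version, and it is a reusable device for turning ``$=p$'' hypotheses into ``$\le p$'' hypotheses without breaking the read-$k$ structure.

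\textbf{Your entropy sketch removes that dependence.} By the chain rule, and since conditioning on more coordinates can only increase the conditional divergence from a fixed product reference measure, one gets $\sum_j D(X'_{P_j}\|X_{P_j})\le k\,D(X'\|X)$ whenever each coordinate lies in at most $k$ of the sets $P_j$. The support bound $D(X'_{P_j}\|X_{P_j})\ge\log(1/\Pr[Y_j=1])$ is data processing onto the indicator of $\{f_j=1\}$. So your argument does not depend on which form of Theorem~1.2 one extracts from the citation. You should handle the case $\Pr[Y_1=\dots=Y_m=1]=0$ separately before conditioning. The argument applies verbatim to the continuous ranks used in the paper.
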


\section{Randomized 2-Ruling Set for Low Arboricity Graphs}

The goal of this section is to prove the following statement.

\thmRSloglogArb*

To prove this result, we first show how to reduce the degree of the graph to~$\Delta'=O(\poly \log n)$ in~$O(\log \log \Delta)$ rounds (see \Cref{sec:degree_drop}), by iteratively computing independent sets and removing their 2-hop neighborhoods from the graph. In \Cref{sec:shattering}, we handle the remaining low-degree graph using a known result of~\cite{GhaffariImproved16}, which leaves us with components of size~$N=O(\poly \log n)$. Finally, in \Cref{sec:small_comp}, we handle these small components by deterministically computing a 2-ruling set in each component in parallel, in~$O(\log N)=O(\log \log n)$ rounds.

\subsection{Stage I: Degree Drop}
\label{sec:degree_drop}

In this section, we show how to reduce the degree of the graph to~$\Delta'=O(\poly \log n)$ in~$O(\log \log \Delta)$ rounds by iteratively reducing the degree polynomially. In particular, we compute an independent set~$S$ in~$O(1)$ rounds such that all nodes of degree at least~$\Delta^{3/4}$ are contained in the 2-hop neighborhood~$S \cup N_2(S)$ w.h.p.\ Repeating this procedure for~$O(\log \log \Delta)$ rounds reduces the degree sufficiently. 
In particular, we will prove~\Cref{lem:degreeDrop} and restate it here for completeness.

{
\showrestatablefootnotesfalse
\degreeDrop*
}

In order to prove~\Cref{lem:degreeDrop}, we will show how we can reduce the max degree polynomially in $O(1)$ rounds. 
\begin{lemma}
\label{lem:degree_drop_iteration}
    Let~$G$ be a graph with arboricity~$\alpha$ and maximum degree~$\Delta \geq  \max\{(82\alpha^4)^8,\log^8 n\}$. There exists a procedure that computes an independent set~$S$ in~$O(1)$ rounds, such that~$G\setminus (S\cup N_2(S))$ has max degree~$\Delta'\leq \Delta^{3/4}$ w.h.p.\
\end{lemma}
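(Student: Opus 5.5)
The procedure repeats a constant number $T$ of rounds. Each round has two Luby Phases. In Step~1 only nodes of current degree at most $\sqrt{\Delta}$ participate; they draw ranks, local minima (among participating neighbors) join $S$, and the 2-hop neighborhoods of the joined nodes are removed. In Step~2 all remaining nodes run one Luby Phase with the same removal. Independence of $S$ holds because each phase selects local minima and then deletes 2-hop neighborhoods, so later joiners are never adjacent to earlier ones. Since $T=O(1)$, the round count is $O(1)$.

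For correctness, fix a node $v$ that still has degree at least $\Delta^{3/4}$ at the start of a round. I will show that with probability at least $1-1/n$ some node within distance $2$ of $v$ joins $S$ in that round. Repeating for $T$ rounds and using independent fresh ranks then gives failure probability $n^{-T}$, and a union bound over $v$ finishes the proof. For the analysis fix an orientation with out-degree at most $\alpha$, and define events $O_u$ ($u$ beats all its out-neighbors) and $I_u$ ($u$ beats all its in-neighbors). The argument splits into two cases.

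\textbf{Case A: at least $\Delta^{3/4}/2$ neighbors of $v$ have degree at most $\sqrt{\Delta}$.} I use Step~1. Discard the at most $\alpha$ out-neighbors of $v$ and let $L$ be the remaining low-degree neighbors. Each $O_u$ has probability at least $1/(\alpha+1)$. The family $\{O_u\}_{u\in L}$ is read-$(\sqrt{\Delta}+1)$, since a rank $r_w$ of an active node only affects $O_u$ for $u\in N[w]$. By \Cref{lem:read-k-chernoff}, w.h.p.\ at least $\Delta^{3/4}/(4(\alpha+1))$ nodes of $L$ satisfy $O_u$; the exponent is $\Omega(\Delta^{1/4}/\alpha)$, which is at least $c\log n$ by the assumptions $\Delta\ge\log^8 n$ and $\Delta\ge\alpha^{32}$. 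Next, conditioned on the successful set $U$, I want all $I_u$ for $u\in U$ to fail with small probability. The family $\{I_u\}$ is read-$(\alpha+1)$, and each $\Pr[I_u]\ge 1/(\sqrt{\Delta}+1)$. I would apply \Cref{lem:read-k-all-p} to the complements $\neg I_u$, which gives $\Pr[\text{all fail}]\le(1-1/(\sqrt\Delta+1))^{|U|/(\alpha+1)}\le\exp(-\Delta^{1/4}/(8\alpha^2))$, and this is at most $n^{-c}$. A joining $u$ lies in $N(v)$, so $v$ is covered.

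\textbf{Case B: at least $\Delta^{3/4}/2$ neighbors have degree above $\sqrt{\Delta}$.} I use Step~2. By arboricity, at most $\alpha$ of these high-degree neighbors are out-neighbors of $v$, and edges among $N(v)$ number at most $\alpha|N(v)|$. Counting 2-paths and subtracting overlaps (each node $x$ of $N_2^-(v)$ with many neighbors in $N(v)$ is charged via out-degree bounds), I aim to show there are at least roughly $\Delta^{5/4}/\poly(\alpha)$ distinct candidates $u$ in $N_2(v)$, each adjacent to some neighbor of $v$. Each such $u$ satisfies $O_u$ with probability at least $1/(\alpha+1)$. This family is read-$(\Delta+1)$, so \Cref{lem:read-k-chernoff} gives exponent $\Delta^{1/4}/\poly(\alpha)$. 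Then $I_u$ has probability at least $1/(\Delta+1)$ and the family is read-$(\alpha+1)$, giving failure probability $\exp(-\Delta^{1/4}/\poly(\alpha))$. The constant $82\alpha^4$ in the hypothesis is presumably what this counting requires.

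\textbf{Main obstacle.} The hard step is the conditioning in the second stage of each case. After revealing which $u$ satisfy $O_u$, the ranks are no longer independent, so \Cref{lem:read-k-all-p} cannot be applied directly. My approach is to condition on the \emph{set} $U$ and use a Harris--FKG argument. The events $O_u$ are increasing in $r_{w}$ for the out-neighbors $w$ and decreasing in $r_u$, and the events $I_u$ are decreasing in $r_u$ and increasing in the in-neighbors' ranks. After the change of variables $r_u\mapsto -r_u$ for $u\in U$, I expect to show that the conditioning on $\bigwedge_{u\in U}O_u$ and on $\bigwedge_{u\notin U}\neg O_u$ is positively correlated with $\bigvee I_u$. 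Equivalently, $\Pr[\bigwedge_{u\in U}\neg I_u\mid\cdot]\le\Pr[\bigwedge_{u\in U}\neg I_u]$, which reduces to the unconditioned read-$k$ bound. If the monotonicity directions conflict, for instance because nodes are simultaneously in- and out-neighbors of candidates, I would restrict $U$ to an independent subset via \Cref{lem:indep_set}. This loses only a factor $2\alpha$ and makes the relevant coordinates disjoint.
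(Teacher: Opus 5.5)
\paragraph{Overall.} Your plan follows the paper's proof closely: two Luby phases per round with Step~1 restricted to nodes of degree at most $\sqrt\Delta$, a case split by the number of low-degree neighbors, and a fixed $\alpha$-out-orientation used only in the analysis. As in the paper, you use a read-$k$ Chernoff bound for the out-neighbor events, a read-$k$ product bound for the in-neighbor events, Harris--FKG for the conditioning, and an arboricity edge count in $G[N(v)\cup N_2^-(v)]$ for Case~B.

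\paragraph{The gap.} It is exactly the step you call the main obstacle. The inequality you aim for,
$\Pr[\bigwedge_{u\in U}\neg I_u \mid \bigwedge_{u\in U}O_u\wedge\bigwedge_{u\notin U}\neg O_u]\le\Pr[\bigwedge_{u\in U}\neg I_u]$,
is false in general, and passing to an independent subset does not help. After flipping only the ranks in $U$, an event $\neg O_{u'}$ with $u'\notin U$ is increasing in $r_{u'}$ but decreasing in the unflipped ranks of the out-neighbors of $u'$. So it is not monotone. It also pushes those ranks down, which hurts $I_u$ whenever such an out-neighbor is an in-neighbor of some $u\in U$.

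For a concrete counterexample, take the active path on $u',w,u$ oriented $u'\to w\to u$, with the independent candidate set $\{u,u'\}$. The successful set is $\{u\}$ iff $r_{u'}>r_w$. Unconditionally $\Pr[\neg I_u]=1/2$, but $\Pr[\neg I_u\mid r_w<r_{u'}]=\int_0^1(1-t)\cdot 2(1-t)\,dt=2/3$. With $d\ge 2$ legs $u_i'\to w_i\to u$ and candidates $\{u,u_1',\dots,u_d'\}$, the same computation gives $1-\frac{1}{2d+1}$. This exceeds even $1-\frac{1}{\Lambda+1}$ for $\Lambda=d$.

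\paragraph{What the paper does, and a fix.} The paper's \Cref{lem:node_joins} avoids non-monotone events in two ways. It first passes to a fixed independent set via \Cref{lem:indep_set}, and it conditions only on the positive event $E=\bigcap_{u\in M}E_u$. With $X_u=1-r_u$ on $M$, each comparison $r_u<r_z$ with $u\in M$, $z\in N(u)$ becomes the increasing event $X_u+X_z>1$. Hence $\mathbf{1}_E$ and $\mathbf{1}_F$ are both increasing, and \Cref{thm:associated} gives $\Pr[F^c\mid E]\le\Pr[F^c]$. However, applying this to the realized set $M(v)$ of successful nodes again amounts to your exact-set conditioning. By the example above, that passage also needs an extra argument.

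A clean way to close the step is to fix the independent set $I$ and condition on all ranks outside $I$. Here $|I|\ge\Delta^{3/4}/(4\alpha)$ in Case~A and $|I|\ge\Delta^{5/4}/(10\alpha^2)$ in Case~B.
\begin{itemize}
\item Because $I$ is independent, the events ``$u$ joins'' for $u\in I$ are then conditionally independent, with probabilities $m_u=\min_{w\in N(u)\cap A}r_w$. Here $A$ is the active set, with active maximum degree $\Lambda=\sqrt\Delta$ in Step~1 and $\Lambda=\Delta$ in Step~2.
\item The indicators $\mathbf{1}[m_u\ge 1/\Lambda]$ form a read-$\Lambda$ family in the ranks outside $I$, with marginals at least $(1-1/\Lambda)^{\Lambda}\ge 1/4$.
\item By \Cref{lem:read-k-chernoff}, $\sum_{u\in I}m_u\ge |I|/(8\Lambda)$ except with probability $e^{-|I|/(32\Lambda)}$.
\item Hence $\Pr[\text{no } u\in I \text{ joins}]\le e^{-|I|/(32\Lambda)}+e^{-|I|/(8\Lambda)}=e^{-\Omega(\Delta^{1/4}/\alpha^2)}=n^{-\Omega(1)}$ under the hypothesis on $\Delta$.
\end{itemize}
This argument uses neither the orientation nor FKG.

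\paragraph{A smaller issue in Case~B.} You classify $v$ at the start of the round, but Step~1 can delete neighbors of $v$ without covering it. This is repaired by noting that the number of neighbors of degree greater than $\sqrt\Delta$ never increases. So if $v$ ever starts a Step~2 with fewer than $\Delta^{3/4}/2$ of them, it is in Case~A at the start of every later Step~1 while $\deg(v)\ge\Delta^{3/4}$. Consequently, at least $c-1$ of the $c$ rounds still contain a step that covers $v$ with probability $1-n^{-\Omega(1)}$.
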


The degree drop procedure for~\Cref{lem:degree_drop_iteration} is given by~\Cref{degree_drop_proc}. Running \Cref{degree_drop_proc} for $O(\log \log \Delta)$ rounds implies~\Cref{lem:degreeDrop}, as we will see at the end of this subsection.

For the rest of the section let~$U$ be the set of uncovered nodes after executing~\Cref{degree_drop_proc}, i.e.,~$U\coloneq \{ v \in V \mid v \in V \setminus (S \cup N_2(S)) \}$. 
In order to prove~\Cref{lem:degree_drop_iteration} we show that every node~$v$ with~$\deg(v)\geq \Delta^{3/4}$ is not in~$U$ w.h.p.\
\begin{algorithm}[!h]
   \caption{\label{degree_drop_proc}%
   Degree-Drop($G, \Delta,c$)}
    
 \newcommand\mycommfont[1]{\textcolor{gray}{\itshape #1}}
    \SetCommentSty{mycommfont}
    \SetKwComment{tcp}{}{}
    \DontPrintSemicolon

       ~$S \leftarrow \emptyset$

        \For{$i=0$ to~$c$}{
          \tcp*[l]{// Step 1}
       ~$A\leftarrow L = \{v \in V \mid \deg(v)\leq \sqrt{\Delta}\}$\tcp*[r]{// Active Nodes}
        
        Uniformly and independently at random compute a real~$r_v \in (0,1)$ for each~$v\in A$
        
         In parallel for all~$v\in A$
         
        \If{$r_v < r_w \forall w \in N(v) \cap A$}{
        
        ~$S \leftarrow S \cup \{v\}$
         
         Remove~$S \cup N_2(S)$ from~$G$
        }
        
        \tcp*[l]{// Step 2}
        
        Uniformly and independently at random compute a real~$r_v \in (0,1)
        $ for each~$v\in V$
        
        In parallel for all~$v\in V$
        
        \If{$r_v < r_w \forall w \in N(v)~$}{
        
       ~$S \leftarrow S \cup \{v\}$
        
         Remove~$S \cup N_2(S)$ from~$G$
        }
        }
\end{algorithm}
\subparagraph*{High Level Proof Idea.}
We will consider two different types of high degree nodes; either a node~$v$ with~$\deg(v)\geq \Delta^{3/4}$ has at least~$\deg(v)/2$ neighbors with degree at most~$\sqrt \Delta$ or at least~$\deg(v)/2$ neighbors with degree larger than~$\sqrt \Delta$. In both cases, we want to show that~$v$ is covered, after invoking \Cref{degree_drop_proc} with a constant $c$, with probability at least $1-\frac{1}{n^c}$.\ 

For the rest of the section, let 
\begin{align*}
    L = \{v \in V \mid \deg(v)\leq \sqrt \Delta \},  \quad&
    H_1= \{v\in V \mid \deg(v) \geq \Delta^{3/4}, |N(v) \cap L| \geq \deg(v)/2 \}, \\
    &H_2= \{v\in V \mid \deg(v) \geq \Delta^{3/4}, |N(v) \cap L| < \deg(v)/2 \},
\end{align*}
i.e.,~$H_1$ is the set of nodes that have degree at least~$\Delta^{3/4}$ and at least half of their neighbors have degree at most~$\sqrt \Delta$ and~$H_2$ is the set of nodes that have degree at least~$\Delta^{3/4}$ and less than half of their neighbors have degree larger than~$\sqrt \Delta$.
Observe that for every node~$v$ with~$\deg(v) \geq \Delta^{3/4}$ it holds that~$v\in H_1$ or~$v\in H_2$, so in particular it suffices to show that every node in~$H_1$ and~$H_2$ will be covered by~\Cref{degree_drop_proc} w.h.p.\

The proof works similarly for both cases; we consider a node $v$ 
and find a set $M(v)\subseteq N_2(v)$ of sufficient size w.r.t. the max degree of the active nodes in the corresponding step. The node $v$ then gets covered if at least one node $u \in M(v)$ joins the independent set, since the distance of $v$ to $u$ is at most $2$.
In particular, in Step 1 of~\Cref{degree_drop_proc}, only nodes with degree at most~$\sqrt{\Delta}$ are active, and we will find a set of size~$\Omega(\Delta^{3/4})$ in the neighborhood of~$v$. In Step 2 of~\Cref{degree_drop_proc} all nodes are active, so the max degree is~$\Delta$ and we will find a set of size~$\Omega(\Delta^{5/4}/\alpha)$ in the $2$-hop neighborhood of~$v$. In particular, both sets are polynomially (in $\Delta$) larger than the corresponding (active) max degree of the graph.  To show that at least one node of $M(v)$ joins $S$ with high probability, we use read-$k$ families.

\subparagraph*{Successful Nodes and Read-k Families.}

Throughout the analysis of both cases, we will fix an arbitrary $\alpha$-out-degree orientation. Notice that our algorithm does not rely on or use this orientation, but it is solely used for the analysis. With respect to this fixed orientation, we can define the notation of successful nodes. Recall that for an oriented graph and a node $v$, we denote $N^+(v)$ as the out-neighbors of $v$ and $N^-(v)$ as the in-neighbors of $v$.

\begin{definition}
    Let $G=(V,E)$ be an oriented graph. We say a node $v$ is successful in one iteration of Step 1 or 2 if its real $r(v)$ is smaller than the real of all its out-neighbors, i.e., $r(v)<\min\{r(u)\mid u \in N^+(v)\}$. We say a set of nodes is successful if all nodes of that set are successful.
\end{definition}
In both cases, we will consider a fixed set $M(v) \subseteq N_2(v)$ and want to argue that at least one node joins the independent set $S$ w.h.p. First, we will show that a large fraction of the set $M(v)$ will be successful nodes. Second, we will show that one of those nodes also joins with a large probability. We formulate both of these results in a more general setting, and will see in~\Cref{sec:many_low_degree} and~\Cref{sec:many_high_degree} how to apply the statements to each of the cases.

\begin{lemma}
\label{lem:many_successful}
    Let $G=(V,E)$ be a graph with arboricity $\alpha$, and (active) max degree $\Lambda$. Let $M\subseteq V$ be a fixed subset of (active) nodes. Then, with probability at least
    $$1- \exp\left(-\frac{|M|}{8(\alpha+1)(\Lambda+1)}\right) $$
    at least $|M| /2 (\alpha+1)$ nodes of $M$ are successful.
\end{lemma}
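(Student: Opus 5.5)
We prove \Cref{lem:many_successful} by applying the read-$k$ Chernoff bound (\Cref{lem:read-k-chernoff}) to the success indicators of the nodes in $M$. The orientation used is the fixed $\alpha$-out-degree orientation from the analysis.

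\textbf{Setup.} For each $u\in M$, let $Y_u$ indicate that $u$ is successful, i.e., $r(u)<\min\{r(w)\mid w\in N^+(u)\}$. Here $N^+(u)$ is taken among active nodes only, since inactive nodes draw no real and do not compete. The underlying independent variables are the reals $r(w)$ of the active nodes $w$. The indicator $Y_u$ is a Boolean function of $(r(w))_{w\in \{u\}\cup N^+(u)}$.

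\textbf{Expectation.} Node $u$ has at most $\alpha$ out-neighbors, and the reals are i.i.d.\ continuous. Hence $\Pr[Y_u=1]=1/(|N^+(u)|+1)\geq 1/(\alpha+1)$, and with $Y\coloneq\sum_{u\in M}Y_u$ we get $\E[Y]\geq |M|/(\alpha+1)$.

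\textbf{Read parameter.} Fix an active node $w$. The real $r(w)$ influences $Y_u$ only if $u=w$ or $w\in N^+(u)$. The latter forces $u$ to be an in-neighbor of $w$, hence a neighbor of $w$. So $r(w)$ influences at most $\deg(w)+1\leq \Lambda+1$ of the indicators, and $\{Y_u\}_{u\in M}$ is a read-$k$ family with $k=\Lambda+1$.

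\textbf{Concentration.} Apply \Cref{lem:read-k-chernoff} with $\delta=1/2$:
\[
\Pr\bigl[Y\leq \E[Y]/2\bigr]\leq \exp\left(-\frac{\E[Y]}{8(\Lambda+1)}\right)\leq \exp\left(-\frac{|M|}{8(\alpha+1)(\Lambda+1)}\right).
\]
The event $Y<|M|/2(\alpha+1)$ implies $Y<\E[Y]/2$, because $\E[Y]/2\geq |M|/2(\alpha+1)$. So it has at most this probability, which gives the claim.

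The only subtle point is that $\E[Y]$ may exceed $|M|/(\alpha+1)$. The argument above handles this by monotonicity: we use the larger true expectation in the exponent and then bound it below by $|M|/(\alpha+1)$.

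Otherwise the proof is routine. The step needing the most care is the read parameter: we must check that only active nodes' reals matter and that each such real is read only by the node itself and its in-neighbors.
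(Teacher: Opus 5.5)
Your proof is correct and follows the paper's argument: the same read-$(\Lambda+1)$ family of success indicators over the active nodes' ranks, the bound $\E[Y]\ge |M|/(\alpha+1)$, and the read-$k$ Chernoff bound with $\delta=1/2$. Your explicit monotonicity step, passing from $\{Y<|M|/2(\alpha+1)\}$ to $\{Y\le \E[Y]/2\}$ and then lower-bounding $\E[Y]$ in the exponent, makes rigorous a substitution that the paper performs implicitly.
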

\begin{proof}
    Recall that we have a fixed $\alpha$-outdegree orientation for the analysis. We define the events $E_v=\{r(v) < \min\{r(u) \mid u \in N^+(v)\} \}$ for all $v \in M$, and let $Y_v$ be the indicator r.v. of $E_v$. Note that every event $E_v$ is fully determined by the ranks $r(u)$ of the nodes $u\in \left(v \cup N^+(v) \right)$, i.e., $Y_v=f_v(r(u)_{u\in \left(v \cup N^+(v) \right)})$ for some boolean functions $f_v$. Since all active nodes have degree at most $\Lambda$, each rank $r(v)$ influences at most $\Lambda+1$ many r.v. $Y_v$; the r.v. $Y_u$ for $u\in N^-(v)$ and $Y_v$ if $v \in M$. Therefore, $\{Y_v \mid v \in M\}$ and $\{r(v) \mid v \textit{ active}\}$ form a read-$(\Lambda+1)$ family.
    Let $Y\coloneq \sum_{v \in M} Y_v $ be the number of successful nodes in $M$ and observe that 
    $ \P[Y_v=1] \geq \frac{1}{\alpha+1}$, 
    since each node has at most $\alpha$ out neighbors. Therefore, $\E[Y]\geq \frac{|M|}{\alpha+1}$. Thus, we can apply~\Cref{lem:read-k-chernoff} to obtain
    \begin{equation*}
        \P[Y\leq |M|/2(\alpha+1)]\leq \exp\left(-\frac{\left(\frac{1}{2}\right)^2|M|/(\alpha+1)}{2(\Lambda+1)}\right) \leq \exp\left(-\frac{|M|}{8(\alpha+1)(\Lambda+1)}\right),
    \end{equation*} 
    and therefore $\P[Y \geq |M|/2(\alpha+1)] \geq 1- \exp\left(-\frac{|M|}{8(\alpha+1)(\Lambda+1)}\right)$.
\end{proof}

In both cases we will use that result to show that we can obtain a large number of successful nodes in the $2$-hop neighborhood of any node $v$ with degree $\geq \Delta^{3/4}$. In order to show that $v$ gets covered, it suffices to show that one of those successful nodes joins the independent set. For that, at least one node also has to have a smaller rank $r(u)$ than all of its in-neighbors. Thus, you may think of $M$ in the statement as the successful $2$-hop neighbors of a node we want to cover. 

\begin{lemma}
\label{lem:node_joins}
     Let $G=(V,E)$ be a graph with arboricity $\alpha$, and (active) max degree $\Lambda$. Let $M\subseteq V$ be a fixed subset of successful, independent (active) nodes. Then, with probability at most 
     $$
         \left(1-\frac{1}{\Lambda+1}\right)^{|M|/\alpha}
     $$
    no node in $M$ chooses a rank smaller than all its in-neighbors.
\end{lemma}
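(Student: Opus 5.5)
We condition on the event that every node of $M$ is successful and bound the probability that all in-neighbor events fail.

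For $u\in M$ let $I_u=\{r(u)<\min\{r(w)\mid w\in N^-(u)\}\}$, and let $Z_u$ be the indicator of its complement $\bar I_u$. Let $\mathcal{O}=\bigcap_{u\in M}O_u$ be the event that all of $M$ is successful, with $O_u=\{r(u)<\min\{r(w)\mid w\in N^+(u)\}\}$. We want
\[
\Pr\Bigl[\textstyle\bigcap_{u\in M}\bar I_u \,\Big|\, \mathcal{O}\Bigr]\le \left(1-\tfrac{1}{\Lambda+1}\right)^{|M|/\alpha}.
\]

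\emph{Removing the conditioning.} The conditioning is the main obstacle, since it correlates ranks. I would handle it with the Harris--FKG inequality on the product space of independent ranks $(r(w))_w$, using the coordinatewise partial order. Consider the map $r\mapsto 1-r$ on all coordinates, which preserves the product measure. Each $O_u$ says that $r(u)$ is smaller than the ranks of its out-neighbors, and each $\bar I_u$ says that $r(u)$ exceeds the rank of some in-neighbor. Neither event is monotone in all coordinates, so I would first condition on the ranks of the nodes in $M$.

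Since $M$ is independent, no edge joins two nodes of $M$. Hence, given $(r(u))_{u\in M}$, the event $O_u$ depends only on out-neighbors outside $M$ and is increasing in those ranks. Likewise $\bar I_u$ depends only on in-neighbors outside $M$ and is decreasing in those ranks. The remaining ranks are still an independent product measure. So $\mathcal{O}$ is an intersection of increasing events, and $\bigcap_u \bar I_u$ is an intersection of decreasing events; each intersection is again monotone of the same type. Harris--FKG then gives, conditionally on $(r(u))_{u\in M}$,
\[
\Pr\Bigl[\textstyle\bigcap \bar I_u \cap \mathcal{O}\Bigr]\le \Pr\Bigl[\textstyle\bigcap \bar I_u\Bigr]\cdot\Pr[\mathcal{O}].
\]
This pointwise inequality does not by itself survive integration over $(r(u))_{u\in M}$. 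I would therefore include the ranks of $M$ in the order with reversed direction: increasing $r(u)$ makes $\bar I_u$ more likely and $O_u$ less likely, which is consistent with $\bar I_u$ decreasing and $O_u$ increasing. This works because each event involves $r(u)$ for only one node $u\in M$, together with non-$M$ coordinates of the matching sign. Under this order $\mathcal{O}$ is increasing and $\bigcap\bar I_u$ is decreasing on the full product space, and one application of FKG gives
\[
\Pr\Bigl[\textstyle\bigcap \bar I_u \,\Big|\, \mathcal{O}\Bigr]\le\Pr\Bigl[\textstyle\bigcap \bar I_u\Bigr].
\]
I still need to check that $M$-nodes never appear as out- or in-neighbors of one another; independence of $M$ guarantees this.

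\emph{Read-$k$ bound.} With the conditioning removed, the unconditional indicators $Z_u$, $u\in M$, form a read-$k$ family over the independent ranks. A rank $r(w)$ enters $Z_u$ only if $w=u$ or $w\in N^-(u)$, that is, if $u\in N^+(w)\cup\{w\}$. This happens for at most $\alpha+1$ values of $u$, and for at most $\alpha$ values when $w\in M$, because no out-neighbor of $w$ lies in $M$ by independence. For $w\notin M$ the count is at most $\alpha$ as well, since $w\notin M$ cannot itself be one of the $u$. So the family is read-$\alpha$. Each $u$ has at most $\Lambda$ in-neighbors, so
\[
\Pr[Z_u=1]\le 1-\tfrac{1}{\Lambda+1}.
\]
Applying \Cref{lem:read-k-all-p} with $k=\alpha$ and $m=|M|$ then yields the claimed bound $(1-\frac{1}{\Lambda+1})^{|M|/\alpha}$.
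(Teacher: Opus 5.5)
Your proof is correct and follows essentially the same route as the paper: reversing the order on the coordinates of $M$ is exactly the paper's substitution $X_v=1-r(v)$ for $v\in M$. After that substitution, Harris--FKG (with $\mathcal{O}$ increasing and $\bigcap_{u\in M}\bar I_u$ decreasing, equivalently the paper's $E$ and $F=\bigcup_{u}F_u$ both increasing) removes the conditioning, and the read-$\alpha$ bound via \Cref{lem:read-k-all-p} finishes the argument. The initial flip of all coordinates and the pointwise conditioning on $(r(u))_{u\in M}$ are unnecessary detours, and you can simply drop them.
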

\begin{proof}
    Again, recall that we have a fixed $\alpha$-outdegree orientation for the analysis. We define the events $F_v = \{ r(v)<\min\{r(u) \mid u \in N^-(v) \}\}$ and $E_v=\{r(v) < \min\{r(u) \mid u \in N^+(v)\} \}$ for all $v \in M$, and let $Z_v$ be the indicator r.v. of $F_v^c$.

    Further define $F=\bigcup_{v \in M} F_v$ and $E=\bigcap_{v\in M} E_v$, i.e., $E$ is the event that $M$ is successful and $F$ is the event that at least one node of $M$ also beats its in-neighbors. First, we aim to upper bound $\P[F^c\mid E]$ by the unconditional probability $\P[F^c]$. To this end define the transformed r.v.
    $$
    X_v = \begin{cases}
        1-r(v), & \text{if } v \in M,\\
        r(v),   & \text{if } v \notin M.
    \end{cases}
    $$
    Since the ranks $r(v)$ are mutually independent, the r.v. $X_v$ are mutually independent as well. Further, since $M$ is an independent set, every neighbor $z \in N(M)$ satisfies $z \notin M$. Thus, for every $v \in M$ and every $z \in N(v)$, we have
    $
    r(v)<r(z) \Leftrightarrow X_v+X_z>1.
    $    Consider the coordinate-wise partial ordering of $X=(X_v)_{v\in V}$: We say $X \leq X'$ if $X_v \leq X'_v$ coordinate-wise (for all $v\in V$).
    Observe that the indicator r.v. of the event $\{X_u+X_z>1\}$ is non-decreasing in $X$ w.r.t. the coordinate-wise partial ordering of $X$. Every event $F_v$ and $E_v$ is an intersection of such events and thus their indicator r.v.'s are non-decreasing. Similarly, the indicator r.v. $\mathds{1}_E$ of $E$ and $\mathds{1}_F$ of $F$ are non-decreasing in $X$, as they are an intersection of $E_u$'s, or a union of $F_u$'s, respectively.

       Since the variables \((X_v)_{v\in V}\) are mutually independent, we can apply~\Cref{thm:associated} and obtain that $Cov(h(X),g(X) )=\E[h(X)g(X)]-\E[h(X)]\E[g(X)]\geq 0$  for any pair of non-decreasing functions $h,g$. Therefore, by setting $h=\mathds{1}_E$ and $g=\mathds{1}_F$ we obtain $E[\mathds{1}_E\mathds{1}_F]\geq \E[\mathds{1}_E]\E[\mathds{1}_F]$. By using the fact that  $\E[\mathds{1}_E\mathds{1}_F]=\E[\mathds{1}_{E\cap F}]$, this yields $\P[E\cap F]  \geq \P[E]\Pr[F]$. Since $\P[E] >0$, we get $\P[F\mid E]  \ge \P[F] $, or equivalently $  \P[F^c \mid E]  \le \P[F^c].$

\smallskip
    
    As the second step we upper bound the unconditional probability $\P[F^c]$ using read-$k$ families. Note that every event $F_v^c$ is fully determined by the independent (we do not condition on $M$ being successful in this step, but consider the unconditional probability) ranks $r(u)$ of the nodes $u\in \left(v \cup N^-(v) \right)$, i.e., $Z_v=f_v(r(u)_{u\in \left(v \cup N^-(v) \right)})$ for some boolean functions $f_v$. Note that the in-neighbors of a node $v\in M$ are not in $M$. Since every node has at most $\alpha$ out-neighbors, each (active) node $v$ influences at most $\alpha$ different r.v. $Z_v$; the r.v. $Z_u$ for $u \in N^+(v)$ if $v \notin M$ and only $Z_v$ if $v \in M$. Therefore, $\{Z_v \mid v \in M\}$ and $\{r(v) \mid v \textit{ active}\}$ form a read-$\alpha$ family. Here it is crucial that the underlying ranks are independent, which would not be the case if we were to consider the ranks after conditioning on $M$'s success, see~\Cref{apx:counterexample}. Furthermore, observe that 
    $        \P[Z_v=1] = 1 -\frac{1}{\deg^-(v)+1} \leq 1-\frac{1}{\Lambda+1}$.
    
    Thus, we can apply \Cref{lem:read-k-all-p} and bound the probability that no node $v \in M$ chooses a rank smaller than all its in-neighbors by
    \begin{equation*}
        \P[F^c\mid E] \leq \P[F^c] = \P[\bigwedge_{v \in M}Z_v =1] \leq \left(1-\frac{1}{\Lambda+1}\right)^{|M|/\alpha}. \qedhere
    \end{equation*}
\end{proof}

We have set up all the general statements that hold for any graph of arboricity~$\alpha$ and can show for the first case~$v\in H_1$ that~$v$ gets covered w.h.p.\

\subsubsection{Many Low-Degree Neighbors}
\label{sec:many_low_degree}
In this subsection, we show that a node~$v\in H_1$ stays uncovered with probability at most $1/n$ during one iteration of Step 1 of~\Cref{degree_drop_proc}. So for the rest of this subsection, we assume that only nodes in~$L$ are active; in particular, we know that the max degree of the active nodes is~$\sqrt{\Delta}$. 

\begin{lemma}[Many Low-Degree Neighbors]
\label{lem:many_low_deg_neighbors}
Let $\Delta \geq \max\{(66\alpha^3)^{8}, \log^8 n\}$ and $U^{it}$ be the set of uncovered nodes after one arbitrary iteration of the for-loop of~\Cref{degree_drop_proc}. For a node~$v \in H_1$ at the beginning of the iteration, we have~$\P[v \in U^{it}] \leq \frac{1}{n}$.
\end{lemma}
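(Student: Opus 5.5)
We analyze only Step 1 of the iteration: if some node of $N(v)\cap L$ joins $S$ in Step 1, then $v$ is within distance $1\le 2$ of $S$ and hence covered. Let $M_0 \coloneq N(v)\cap L$. Since $v\in H_1$, we have $|M_0|\geq \deg(v)/2 \geq \Delta^{3/4}/2$. All nodes of $M_0$ are active in Step 1, and the active subgraph $G[L]$ has maximum degree $\Lambda=\sqrt{\Delta}$. Its arboricity is still at most $\alpha$, and we use the $\alpha$-out-degree orientation of $G$ restricted to $G[L]$.

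\textbf{Many successful neighbors.} Apply \Cref{lem:many_successful} to $M_0$ with $\Lambda=\sqrt{\Delta}$. Except with probability
\[
\exp\left(-\frac{\Delta^{3/4}/2}{8(\alpha+1)(\sqrt{\Delta}+1)}\right)\leq \exp\left(-\Omega\left(\Delta^{1/4}/\alpha\right)\right),
\]
at least $|M_0|/2(\alpha+1)\geq \Delta^{3/4}/(4(\alpha+1))$ nodes of $M_0$ are successful. Since $\Delta\geq (66\alpha^3)^8$ gives $\alpha\leq \Delta^{1/24}/66^{1/3}$, and $\Delta\geq \log^8 n$ gives $\Delta^{1/4}\ge \log^2 n$, the exponent is at least $\Delta^{1/4-1/24}/O(1)\gg \log n$. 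So this failure probability is at most $1/(2n)$.

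\textbf{One successful neighbor joins.} By \Cref{lem:indep_set}, the set of successful nodes in $M_0$ contains an independent set $M$ of size at least $\Delta^{3/4}/(8\alpha(\alpha+1))$. Apply \Cref{lem:node_joins} to $M$ with $\Lambda=\sqrt{\Delta}$. A node of $M$ that also beats its in-neighbors is a local minimum in $G[L]$ and joins $S$. The probability that this does not happen is at most
\[
\left(1-\frac{1}{\sqrt{\Delta}+1}\right)^{|M|/\alpha}\leq \exp\left(-\frac{\Delta^{3/4}}{16\alpha^3(\sqrt\Delta+1)}\right)=\exp\left(-\Omega\left(\Delta^{1/4}/\alpha^3\right)\right).
\]
Using $\alpha^3\le \Delta^{1/8}/66$, this is at most $\exp(-66\Delta^{1/8}/O(1))$. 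Since $\Delta^{1/8}\ge\log n$, it is at most $1/(2n)$. This is where the constant $66\alpha^3$ enters.

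\textbf{Main obstacle.} The delicate point is combining the two steps: $M$ is chosen after observing which nodes are successful, whereas \Cref{lem:node_joins} is stated for a fixed set. We handle this by conditioning on the realized success pattern and choosing $M$ deterministically from it, e.g.\ as the lexicographically first large independent subset. We then need the bound of \Cref{lem:node_joins} for $\Pr[F^c \mid E]$ to hold under this conditioning.

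The proof of \Cref{lem:node_joins} conditions only on $E$, the event that $M$ is successful. Conditioning on $M$ being \emph{exactly} the chosen set also involves non-success of other nodes, which is not monotone. To avoid this, we take a union bound over all candidate sets $M\subseteq M_0$ of the required size, using the fixed-set bound for $\Pr[F^c\cap E_M]\le \Pr[F^c\mid E_M]$ for each. There are at most $2^{\sqrt{\Delta}\cdot O(1)}$-many... which is too many compared with the exponent $\Delta^{1/4}$.

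So instead we argue via the Harris-FKG step directly. Let $F$ be the event that some node of $M_0$ is a local minimum. Its complement, together with the success events, can be handled monotonically once the transformed variables $X$ from the proof of \Cref{lem:node_joins} are set up relative to a fixed independent $M$. We therefore first fix $M^*\subseteq M_0$ independent with $|M^*|\ge |M_0|/2\alpha$ before any randomness, again by \Cref{lem:indep_set}. We then apply \Cref{lem:many_successful} to $M^*$, obtaining a successful subset $M'\subseteq M^*$ of size $\ge |M^*|/2(\alpha+1)$.

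It remains to bound the probability that no node of $M'$ beats its in-neighbors. For this we redo the proof of \Cref{lem:node_joins} with the read-$\alpha$ bound applied to the random subset: it suffices that any $|M^*|/2(\alpha+1)$ of the $Z_v$ all being $1$ is unlikely. Union-bounding over the at most $2^{|M^*|}$ subsets is too costly, so I would instead bound $\Pr[\sum_{v\in M^*}(Y_v(1-Z_v))=0]$ via the read-$k$ Chernoff bound (\Cref{lem:read-k-chernoff}). Here $Y_v(1-Z_v)$ is the indicator that $v$ is a local minimum, has probability $\ge 1/(\sqrt\Delta+1)$, and forms a read-$(\sqrt\Delta+1)$ family. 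This gives failure probability $\exp(-\Omega(|M^*|/\Delta))$, which is too weak.

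This tension is the main obstacle. My plan is to keep the two-step structure and justify the conditioning through monotonicity on the fixed independent set $M^*$, following exactly the FKG argument in \Cref{lem:node_joins}.
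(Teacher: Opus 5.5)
Your final plan is exactly the paper's proof. It fixes an independent $I\subseteq N(v)\cap L$ with $|I|\ge\Delta^{3/4}/4\alpha$ before any randomness, applies \Cref{lem:many_successful} to $I$, and then applies \Cref{lem:node_joins} to the successful subset. The paper simply invokes \Cref{lem:node_joins} on the random set $M(v)$ of successful nodes, and your objection to that step is correct. The lemma bounds $\Pr[F_{M'}^c\mid E_{M'}]$ for a fixed $M'$. What is actually needed is the bound conditioned on the successful part of $I$ being \emph{exactly} $M'$, and that event also contains $E_u^c$ for every $u\in I\setminus M'$.

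With the transformed variables $X$, the event $E_u^c$ is increasing in $X_u$ but decreasing in $X_z$ for $z\in N^+(u)$. Such a $z$ can lie in $N^-(w)$ for some $w\in M'$: conditioning on $r(u)>r(z)$ pushes $r(z)$ down and makes $F_w$ less likely. So Harris--FKG does not apply, and the union bound over $M'$ is too costly, as you note. Your closing sentence (follow the FKG argument ``exactly'') therefore does not resolve the obstacle you identified, and the proposal is incomplete at precisely this step.

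The missing idea is to exploit the independence of $I$ by conditioning on the ranks of $L\setminus I$.
\begin{itemize}
\item All active neighbours of $u\in I$ lie in $L\setminus I$. Given these ranks, $u$ joins $S$ iff $r(u)<m_u$, where $m_u=\min_{w\in N(u)\cap L}r(w)$ is now fixed. These events are conditionally independent over $u\in I$, so the conditional probability that no node of $I$ joins is $\prod_{u\in I}(1-m_u)\le\exp(-\sum_{u\in I}m_u)$.
\item Let $\theta=1/\sqrt{\Delta}$, and let $W_u$ be the event that every active neighbour of $u$ has rank at least $\theta$. Then $m_u\ge\theta$ on $W_u$, and $\Pr[W_u]\ge(1-1/\sqrt\Delta)^{\sqrt\Delta}\ge 1/4$.
\item The $W_u$ form a read-$\sqrt\Delta$ family over the independent ranks of $L\setminus I$, because each $w\in L$ has at most $\sqrt\Delta$ neighbours in $I$.
\item By \Cref{lem:read-k-chernoff}, at least $|I|/8$ of the events $W_u$ occur except with probability $\exp(-|I|/(32\sqrt\Delta))$. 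On that event, $\sum_u m_u\ge|I|/(8\sqrt\Delta)$.
\item Altogether $\Pr[v\in U^{it}]\le 2\exp(-\Delta^{1/4}/(128\alpha))$. Since $\Delta^{1/4}/\alpha\ge 66^{1/3}\Delta^{5/24}$ and $\Delta\ge\log^8 n$, this is $\exp(-\Omega(\log^{5/3}n))$, which is below $1/n$ for large $n$.
\end{itemize}
This route bypasses \Cref{lem:many_successful}, \Cref{lem:node_joins} and FKG entirely.

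Your abandoned read-$k$ Chernoff attempt was close. The fix is to concentrate the constant-probability events $W_u$ rather than the local-minimum indicators, which have probability only $\Theta(1/\sqrt\Delta)$. The remaining $1/\sqrt\Delta$ factor then comes from the conditional independence of the ranks inside $I$.
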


\begin{proof}
    Let~$L(v)\coloneq N(v) \cap L$ and observe that~$|L(v)| \geq \frac{\Delta^{3/4}}{2}$, since~$v \in H_1$.
    By~\Cref{lem:indep_set} there exists an independent set $I \subseteq L(v)$ with $|I|\geq |L(v)|/2\alpha \geq \Delta ^{3/4}/4\alpha$.

    First, we will find a large subset of independent successful nodes in the neighborhood of $v$.
    Recall that in Step 1 of \Cref{degree_drop_proc} only nodes with degree less than $\sqrt{\Delta}$ are active. Now we apply \Cref{lem:many_successful} to $I$ and obtain a set of successful nodes $M(v)$ of size $|M(v)| \geq |I|/2(\alpha+1)\geq\Delta^{3/4}/9\alpha^2$ with probability at least 
    \begin{align*}
        1-\exp\left(-\frac{|I|}{8(\alpha+1)(\sqrt{\Delta}+1)}\right) & \geq 1-\exp\left( -\frac{\Delta^{3/4}}{64\alpha^2(\alpha+1)(\sqrt{\Delta}+1)} \right)\\&  = 1-\exp\left(-\frac{\Delta^{1/4}}{66\alpha^3}\right)  \geq 1- \exp\left( - \log n\right)=1- \frac{1}{n}.
    \end{align*}
    Second, we upper bound the probability that none of those successful nodes joins the independent set.
    By~\Cref{lem:node_joins} we obtain that no node of $I$ joins the independent set with probability at most 
    \begin{align*}
      \left(1-\frac{1}{\sqrt{\Delta}+1}\right)^{|M(v)|/\alpha} \leq \exp\left(-\frac{|M(v)|}{\alpha(\sqrt{\Delta}+1)}\right)& \leq \exp\left(-\frac{\Delta^{3/4}}{10\alpha^3\sqrt{\Delta}} \right) \\& \leq \exp\left(-\frac{\Delta^{1/4}}{10\alpha^3} \right) \leq \exp(-\log n).
    \end{align*}
   Thus $\P[v \in U^{it} ]\leq \P[\textit{no node of } M(v) \textit{ joins the independent set}] \leq \frac{1}{n}.$
\end{proof}

\subsubsection{Many High-Degree Neighbors}
\label{sec:many_high_degree}
In this subsection, we show that a node~$v\in H_2$ stays uncovered with probability at most $1/n$ during one iteration of Step 2 of~\Cref{degree_drop_proc}. In contrast to the prior case, all the nodes are active and thus the max degree is~$\Delta$. Formally, we show the following. 

\begin{lemma}[Many High-Degree Neighbors] \label{lem:many_high_deg_neighbors}
    Let $\Delta \geq \max\{(82\alpha^4)^8, \log^8 n\}$ and $U^{it}$ be the set of uncovered nodes after one arbitrary iteration of the for-loop of~\Cref{degree_drop_proc}. For a node~$v \in H_2$ at the beginning of the iteration, we have~$\P[v \in U^{it}] \leq \frac{1}{n}$.
\end{lemma}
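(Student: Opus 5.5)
Fix $v\in H_2$ alive at the start of the iteration. If $v$ is already covered in Step 1 we are done, so assume it survives Step 1; the analysis is carried out for Step 2, where all remaining nodes are active and $\Lambda\le\Delta$. The plan mirrors \Cref{lem:many_low_deg_neighbors}: find a large independent set $I\subseteq N_2(v)$, use \Cref{lem:many_successful} to obtain many successful nodes in $I$, then use \Cref{lem:node_joins} to show one of them joins $S$. Either way $v$ ends within distance $2$ of $S$. Step 1 may delete nodes, so the sets below are built from the graph at the start of Step 2. The main difficulty is to argue that $v$ still has a $2$-hop neighborhood of size about $\Delta^{5/4}/\poly(\alpha)$ there.

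\textbf{Step (i): a large $2$-hop neighborhood.} Let $R(v)=N(v)\setminus L$. Then $|R(v)|>\deg(v)/2\ge\Delta^{3/4}/2$, and every $u\in R(v)$ has $\deg(u)>\sqrt\Delta$. Orient the graph with out-degree at most $\alpha$ and count the edges from $R(v)$ into $W:=N(R(v))$. There are more than $\Delta^{5/4}/2$ such edges.
\begin{itemize}
\item At most $\alpha|R(v)|$ of them are out-edges of $R(v)$.
\item The remaining ones are out-edges of nodes $w\in W$, and each $w$ contributes at most $\alpha$.
\end{itemize}
Hence $|W|\ge(\Delta^{5/4}/2-\alpha\Delta)/\alpha\ge\Delta^{5/4}/(4\alpha)$, using $\Delta^{1/4}\ge 4\alpha$, which follows from $\Delta\ge(82\alpha^4)^8$.

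\textbf{Survival through Step 1.} This is the step I expect to be the main obstacle. If some $w\in W$ was removed in Step 1, it is within distance $2$ of a node $s\in S$. Then $v$ is within distance $2+2=4$ of $s$, but not necessarily within distance $2$. So removals do not directly cover $v$, and the bound on $|W|$ must hold in the residual graph. My first attempt is to redo Step (i) in the graph at the start of Step 2, conditioned on $v$ surviving Step 1:
\begin{itemize}
\item Any $u\in N(v)$ lying in $S\cup N(S)$ would cover $v$, so all of $N(v)$ survives.
\item A neighbor $w\in N(u)$ of some $u\in R(v)$ is removed only if $w\in N_2(S)\setminus N(S)$. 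Its edge to $u$ then still exists, with $u$ alive.
\end{itemize}
So the edges from $R(v)$ to removed nodes are still incident to $R(v)$, but their endpoints are deleted. If this fails, the fallback is to work with the degrees of $R(v)$ in the current graph. If many $u\in R(v)$ lost degree, $v$ counts as a low-degree-type node, and we repeat the argument over the constant $c$ iterations, using that degrees only drop.

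\textbf{Step (ii): probability bounds.} Apply \Cref{lem:indep_set} to get an independent $I\subseteq W$ with $|I|\ge\Delta^{5/4}/(8\alpha^2)$. \Cref{lem:many_successful} with $\Lambda=\Delta$ then gives $|M(v)|\ge|I|/(2(\alpha+1))\ge\Delta^{5/4}/(24\alpha^3)$ successful nodes, with failure probability at most
\[
\exp\!\bigl(-\Delta^{1/4}/(64\alpha^3(\alpha+1)\cdot 2)\bigr)\le \exp\!\bigl(-\Delta^{1/4}/(82\alpha^4)\bigr)\le 1/(2n).
\]
The last inequality uses $\Delta^{1/8}\ge82\alpha^4$ and $\Delta^{1/8}\ge\log n$, so the exponent is at least $\log n$. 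Next, \Cref{lem:node_joins} bounds the probability that no node of $M(v)$ joins by
\[
\exp\!\bigl(-|M(v)|/(\alpha(\Delta+1))\bigr)\le\exp\!\bigl(-\Delta^{1/4}/(48\alpha^4)\bigr)\le 1/(2n).
\]
A union bound over the two failure events gives $\P[v\in U^{it}]\le 1/n$.
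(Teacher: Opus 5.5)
Your Steps (i) and (ii) are the paper's proof. The paper gets $|N_2^-(v)|\ge\Delta^{5/4}/(5\alpha)$ from \Cref{lem:large_2_hop_NH}; your orientation count on $W=N(R(v))$ is an equally valid and slightly cleaner substitute. It then passes to an independent subset and applies \Cref{lem:many_successful} and \Cref{lem:node_joins} with $\Lambda=\Delta$. Your final union bound is more careful than the paper's. Some constants are off: $128\alpha^3(\alpha+1)\le 82\alpha^4$ is false, and $16\alpha^2(\alpha+1)\le 24\alpha^3$ fails for $\alpha=1$. As in the paper, this only affects the constant in the hypothesis on $\Delta$.

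The gap is your paragraph on survival through Step 1, which you leave unresolved, and whose first attempt is wrong.
\begin{itemize}
\item Step 1 deletes $S\cup N_2(S)$. So a neighbor $u$ of $v$ with $\mathrm{dist}(u,S)=2$ is removed while $\mathrm{dist}(v,S)=3$, and $N(v)$ need not survive.
\item Even when $N(v)$ does survive, the vertices of $R(v)$ can lose all their other neighbors, so $R(v)$ and $W$ can collapse.
\item The fallback of repeating over the $c$ iterations does not bound the single-iteration probability that the lemma asserts.
\end{itemize}
The paper's proof does not address this point at all. It tacitly evaluates $H_2$ and the $2$-hop bound in the graph on which Step 2 runs.

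This cannot be patched inside the lemma, because the statement as literally written is false. Here is a counterexample.
\begin{itemize}
\item Let $D=\lceil\Delta^{3/4}\rceil$ and $K=\Delta-D$.
\item Make $v$ adjacent to $u_1,\dots,u_D$. Make every $u_i$ adjacent to a hub $z$, and make $z$ adjacent to further nodes $q_1,\dots,q_K$.
\item Below every $u_i$ and every $q_k$, hang a complete tree of depth $3$ whose internal nodes have $\lceil\sqrt\Delta\rceil$ children.
\item Pad with isolated vertices, taking say $\Delta=\log^8 n$.
\end{itemize}
The arboricity is $2$, and $v\in H_2$. In Step 1 only the leaves are active. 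They all join, and every non-root tree vertex is removed. This leaves $N(u_i)=\{v,z\}$ and $N(q_k)=\{z\}$.

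In Step 2, consider the event that $r(z)$ is smallest among $z,v,u_1,\dots,u_D$, some $u_i$ beats $v$, and some $q_k$ beats $z$. Its probability is $\frac{D}{(D+1)(D+2)}\cdot\frac{K}{D+K+1}=\Theta(\Delta^{-3/4})$. On this event, no vertex of $\{v,z,u_1,\dots,u_D\}$ joins, and this set is $\{v\}\cup N_2(v)$ in the residual graph. Hence $\Pr[v\in U^{it}]=\Omega(\Delta^{-3/4})\gg 1/n$.

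The right repair is to state the lemma for a node with more than $\Delta^{3/4}/2$ neighbors of degree $>\sqrt\Delta$ \emph{in the graph at the start of Step 2}. That graph still has arboricity at most $\alpha$ and maximum degree at most $\Delta$. Your Steps (i)--(ii) use only this hypothesis, so they prove that version completely, and the survival paragraph should be dropped.

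The bookkeeping then belongs in \Cref{lem:degree_drop_iteration}. Suppose $v$ is in $H_2$ with degree at least $\Delta^{3/4}$ at the start of iteration $i+1$. Then its more than $\Delta^{3/4}/2$ high-degree neighbors were already present, with at least that degree, at the start of Step 2 of iteration $i$. So each iteration after the first can be charged to a distinct step, namely its own Step 1 or the preceding Step 2. In that step, $v$ survived an event of probability at most $1/n$.
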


The proof for this case is very similar to the proof we have just seen, but we need to use arboricity in one more step to obtain a set of sufficient size within the 2-hop neighborhood of~$v$. 

\begin{lemma}[Large 2-Hop Neighborhood] \label{lem:large_2_hop_NH}
    Let~$G$ be a graph with arboricity~$\alpha$ and max degree~$\Delta\geq 5\alpha^2$.
    Consider a node~$v \in H_2$. Then~$|N_2^-(v)| \geq \frac{\Delta^{5/4}}{5 \alpha}$.
\end{lemma}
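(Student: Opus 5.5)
The plan is a double-counting argument: the high-degree neighbors of $v$ carry many edge endpoints, and arboricity forces most of those edges to leave $N[v]$ and land in $N_2^-(v)$.

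\textbf{Setup.} Write $d=\deg(v)$ and $B \coloneq N(v)\setminus L$. Since $v\in H_2$, fewer than $d/2$ neighbors of $v$ lie in $L$, so $|B| > d/2 \geq \Delta^{3/4}/2$. Every $u\in B$ has $\deg(u) > \sqrt{\Delta}$. Hence
\[
S \coloneq \sum_{u\in B}\deg(u) > |B|\sqrt{\Delta} \geq \Delta^{5/4}/2 .
\]

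\textbf{Split the degree sum.} Every edge at some $u\in B$ either stays inside $N[v]$ or goes to $N_2^-(v)$; no edge from a neighbor of $v$ can reach distance $3$. So $S = S_{\mathrm{in}} + e(B,N_2^-(v))$, where $S_{\mathrm{in}}$ counts endpoints in $B$ of edges of $G[N[v]]$. I bound the two parts using arboricity on suitable subgraphs, namely $|E(G')|\le \alpha(|V(G')|-1)$ from the definition of $\alpha$.
\begin{itemize}
\item For the inner part, take $G'=G[N[v]]$, which has $d+1$ nodes. Each edge contributes at most $2$ to $S_{\mathrm{in}}$, so $S_{\mathrm{in}}\le 2\alpha d$.
\item For the outgoing part, take the bipartite subgraph between $B$ and $N_2^-(v)$. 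This gives $e(B,N_2^-(v)) \le \alpha(|B|+|N_2^-(v)|)$.
\end{itemize}
Combining these,
\[
\alpha\,|N_2^-(v)| \;\ge\; |B|\sqrt{\Delta} - 2\alpha d - \alpha|B| .
\]

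\textbf{Close using $\Delta\ge 5\alpha^2$, and the expected obstacle.} Here I expect the difficulty. I would bound $d< 2|B|$ so that everything is in units of $|B|$. This gives $\alpha|N_2^-(v)| \ge |B|(\sqrt{\Delta}-5\alpha)$. The hypothesis $\Delta\ge 5\alpha^2$ is of exactly the right shape, since it is the regime $\sqrt{\Delta}\gtrsim \alpha$. However, the constants are tight: with $\sqrt\Delta \ge \sqrt5\,\alpha$ the crude estimate $2\alpha d\le 4\alpha|B|$ is too lossy. So the key step is to sharpen the accounting of $S_{\mathrm{in}}$, in one or both of the following ways.
\begin{itemize}
\item Bound $S_{\mathrm{in}}$ by the edges of $G[N[v]]$ incident to $B$ rather than by all edges counted twice. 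The $|B|$ edges to $v$ each count once, and edges inside $G[B]$ are at most $\alpha(|B|-1)$, giving roughly $S_{\mathrm{in}}\le |B| + 2\alpha|B| + \alpha d$.
\item Use the slack from $|B|\sqrt\Delta$ versus $\Delta^{5/4}/2$ when $d$ is large. That is, keep $S>|B|\sqrt\Delta$ together with $|B|>d/2$, rather than passing immediately to $\Delta^{3/4}$.
\end{itemize}
Then I divide by $\alpha$ and use $|B|\ge \Delta^{3/4}/2$ to reach $|N_2^-(v)|\ge \Delta^{5/4}/(5\alpha)$, checking the final constant against $\Delta\ge5\alpha^2$.

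If the constant still does not close, I would look for a further factor-saving count. One option is to apply the arboricity bound to $G[B\cup N_2^-(v)\cup\{v\}]$ jointly instead of to the two pieces separately.
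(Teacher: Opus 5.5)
Your double counting is sound, and the inequality $\alpha|N_2^-(v)| \ge |B|\sqrt{\Delta} - 2\alpha d - \alpha|B|$ is correct. However, the proposal stops exactly at the step that has to be proved. Under the stated hypothesis $\Delta \ge 5\alpha^2$ no refinement can carry out that step, because the statement is false in that regime. Your first refinement is in fact weaker than the crude bound: $|B| > d/2$ gives $|B| + 2\alpha|B| + \alpha d > 2\alpha d$. Large $d$ does not help either, since the example below has $d = \Delta$.

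Take $\alpha = 100$ and a vertex $v$ adjacent to every vertex of $B \cup C$, with $|B| = 25001$ and $|C| = 25000$. Add $99$ edge-disjoint spanning trees of $B \cup C$, each a Hamiltonian path on $B$ with every vertex of $C$ hung as a pendant leaf on a distinct vertex of $B$. Rotate them so that each vertex of $B$ is a path endpoint at most once and receives no leaf at most once. This graph is a union of $100$ forests and has $50001 + 99\cdot 50000 > 99\cdot 50001$ edges on $50002$ vertices, so its arboricity is exactly $100$. Moreover $\Delta = \deg(v) = 50001 \ge 5\alpha^2$ and $\sqrt{\Delta} \approx 223.6$. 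Vertices of $C$ have degree $100$, and vertices of $B$ have degree at least $296$. Hence $N(v)\cap L = C$ has fewer than $\deg(v)/2$ elements and $v \in H_2$, yet $N_2^-(v) = \emptyset$. The reason is that, per forest, a vertex of $B$ can collect about three edges inside $N[v]$: two on a path through $B$ and one pendant edge to $C$. So degrees up to roughly $3\alpha > \sqrt{5}\,\alpha$ can be absorbed without any edge leaving $N[v]$.

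The paper's proof has the same defect. From $m_N \le \alpha(\deg(v) + |N_2^-(v)|)$ and $m_N \ge \deg(v)\sqrt{\Delta}/4$ it obtains $|N_2^-(v)| \ge \Delta^{5/4}/(4\alpha) - \Delta^{3/4}$. Passing from there to $\Delta^{5/4}/(5\alpha)$ needs $\sqrt{\Delta} \ge 20\alpha$, i.e.\ $\Delta \ge 400\alpha^2$, not $5\alpha^2$. This is harmless downstream, because the lemma is only invoked with $\Delta \ge (82\alpha^4)^8$.

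So rather than hunting for constants, you should strengthen the hypothesis. Your crude inequality, together with $d < 2|B|$ and $|B| > \Delta^{3/4}/2$, gives $\alpha|N_2^-(v)| > |B|(\sqrt{\Delta} - 5\alpha) > \frac{1}{2}\Delta^{3/4}(\sqrt{\Delta} - 5\alpha)$. This is at least $\Delta^{5/4}/5$ once $3\sqrt{\Delta} \ge 25\alpha$, for instance when $\Delta \ge 70\alpha^2$. With that hypothesis your argument is a complete proof. It is essentially the paper's double counting, split into an inner part and a crossing part instead of bounding the edges of $G[N(v)\cup N_2^-(v)]$ in one go.
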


\begin{proof}
     Consider the induced subgraph~$G[N(v)\cup N_2^-(v)]$ and let $m_N$ denote the number of edges on this subgraph. Then we obtain that
    \begin{align*}
         m_N \leq \alpha(\deg(v)+|N_2^-(v)|) 
        \iff  |N_2^-(v)| \geq \frac{m_N}{\alpha} - \deg(v) .
    \end{align*}
    We know that~$m_N \geq \deg(v)\cdot \sqrt\Delta/4$ and~$\deg(v) \geq \Delta^{3/4}$, which implies that 
    \begin{align*}
        |N_2^-(v)|\ge \Delta^{5/4}/4\alpha - \Delta^{3/4}.
    \end{align*}
    In particular, for~$\Delta \geq 5\alpha^2$, we have that~$|N_2^-(v)|\geq \Delta^{5/4}/5\alpha$.
\end{proof}



\begin{proof}[Proof of~\Cref{lem:many_high_deg_neighbors}]

By~\Cref{lem:large_2_hop_NH}, we know that $|N_2^-(v)|\geq \Delta^{5/4}/5\alpha,$ since $v \in H_2$.  Thus, by~\Cref{lem:indep_set} there exists an independent set $I \subseteq N_2^-(v)$ with $|I|\geq |N_2^-(v)|/2\alpha \geq \Delta ^{5/4}/10\alpha^2$. 

Now we can apply \Cref{lem:many_successful} to $I$ and obtain a set of successful nodes $H(v)$ of size $|H(v)|\geq|I|/2(\alpha+1)\geq\Delta^{5/4}/21\alpha^3$ with probability at least 
    \begin{align*}
        1-\exp\left(-\frac{|I|}{8(\alpha+1)(\Delta+1)}\right) & \geq 1- \exp\left( -\frac{\Delta^{5/4}}{80\alpha^2(\alpha+1){(\Delta+1)}} \right) \\&= 1-\exp\left(-\frac{\Delta^{1/4}}{82 \alpha^3}\right)  \geq1- \exp\left( - \log n\right)= 1-\frac{1}{n}. 
    \end{align*} 

 Second, we upper bound the probability that none of those successful nodes joins the independent set.
By~\Cref{lem:node_joins} we obtain that no node of $H(v)$ joins the independent set with probability at most
    \begin{align*}
        \left(1-\frac{1}{\Delta+1}\right)^{|H(v)|/\alpha} \leq \exp\left(-\frac{|H(v)|}{\alpha(\Delta+1)}\right) \leq \exp\left(-\frac{\Delta^{5/4}}{22\alpha^4{\Delta}} \right) & \leq \exp\left(-\frac{\Delta^{1/4}}{22\alpha^4} \right) \\ & \leq \exp(-\log n).
    \end{align*}
   Thus $\P[v \in U^{it} ]\leq \P[\textit{no node of } H(v) \textit{ joins the independent set}]\leq \frac{1}{n}.$
\end{proof}

Now we are ready to prove \Cref{lem:degree_drop_iteration}

\begin{proof}[Proof of \Cref{lem:degree_drop_iteration}]
    Let $c$ be the constant~\Cref{degree_drop_proc} is called with.
    
    Since degrees only reduce during~\Cref{degree_drop_proc}, a node $v$ starting with $\deg(v) > \Delta^{3/4}$ at the beginning of the call of~\Cref{degree_drop_proc} might have a degree $\leq \Delta^{3/4}$ after some iterations of the for-loop. Then $v$ did not get covered, but its degree still dropped far enough and is at most $\Delta^{3/4}$.

    Thus, we may assume the degree of a node $v$ with $\deg(v)>\Delta^{3/4}$ does not drop below $\Delta^{3/4}$ during the call of~\Cref{degree_drop_proc}. As long as $\Delta\geq \max\{(82\alpha^4)^8, \log^{8}n\}$ \Cref{lem:many_low_deg_neighbors} and~\Cref{lem:many_high_deg_neighbors} hold. Then for $v$, since $\deg(v)> \Delta^{3/4}$ at the beginning of each iteration of the for-loop, it either holds that $v \in H_1$ or $v \in H_2$. Since we always sample new random values at the beginning of each iteration, we can either apply~\Cref{lem:many_low_deg_neighbors} or~\Cref{lem:many_high_deg_neighbors} to each $v$ in each iteration. Thus 
    \begin{align*}
        \P[v\in U] \leq (\P[v \in U^{it}])^c \leq \frac{1}{n^c}.
    \end{align*}
    Therefore, $v$ gets covered w.h.p.\ and removed from the graph after~\Cref{degree_drop_proc} and the result follows by a union bound over all high degree nodes.
\end{proof}

    Finally, we can prove the Degree Drop Lemma, which we restate here for readability.

    {
\showrestatablefootnotesfalse
\degreeDrop*
}
    
    \begin{proof}[Proof of~\Cref{lem:degreeDrop}]
    We run~\Cref{degree_drop_proc} for~$T\coloneq O(\log \log \Delta)$ rounds, always calling it on the remaining graph after the previous iteration. Let $S_1,\dots,S_T$ be the ruling sets computed in these calls and let $S\coloneq \left(\bigcup_{i=1}^T S_i \right)$. This algorithm runs in $O(\log \log \Delta)$ rounds, since each call takes $O(1)$ rounds. The max degree of the remaining graph is $\max\{\Delta^{(3/4)^{T}},(82\alpha^4)^{8},\log^8n\}$ w.h.p.\ by combining~\Cref{lem:degree_drop_iteration} with a union bound over the $T$ iterations. Since $\Delta^{(3/4)^{T}}=O(1)$, we have that the max degree of the remaining graph is $\max\{(82\alpha^4)^{8},\log^8n\}$ w.h.p.\
\end{proof}

When the arboricity of the given graph is bounded by $O(\poly \log n)$, we get a degree reduction to $O(\poly \log n)$.

\begin{corollary} \label{cor:degree_drop}
     Let~$G=(V,E)$ be a graph of arboricity~$O(\poly \log n)$ and with max degree~$\Delta$. There is an~$O(\log \log \Delta)$ round algorithm that takes~$G$ as input and returns an independent set~$S$ such that~$G'=G\setminus(S \cup N_2(S))$ has max degree~$\Delta'=O(\poly \log n)$ w.h.p.\ 
\end{corollary}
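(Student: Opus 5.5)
This follows directly from \Cref{lem:degreeDrop}. We run the algorithm of \Cref{lem:degreeDrop} on $G$, i.e., we iterate \Cref{degree_drop_proc} on the remaining graph for $T=O(\log\log\Delta)$ calls. This produces an independent set $S$, and w.h.p.\ the graph $G\setminus(S\cup N_2(S))$ has maximum degree at most $\max\{(82\alpha^4)^8,\log^8 n\}$. The round complexity is $O(\log\log\Delta)$, since each call of \Cref{degree_drop_proc} uses a constant number $c$ of iterations, and each iteration takes $O(1)$ rounds: drawing ranks, comparing them with neighbors, and removing $2$-hop neighborhoods.

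Next we plug in the arboricity bound. If $\alpha\le \log^{a} n$ for some constant $a$, then $(82\alpha^4)^8\le 82^8\log^{32a} n=O(\poly\log n)$. Also $\log^8 n=O(\poly\log n)$. So the maximum of the two is $O(\poly\log n)$, and the resulting maximum degree $\Delta'$ of $G'$ is $O(\poly\log n)$ w.h.p.

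One point needs care. \Cref{lem:degree_drop_iteration} is only invoked while the current maximum degree satisfies $\Delta\ge\max\{(82\alpha^4)^8,\log^8 n\}$. Arboricity does not increase when nodes are removed, so the bound $\alpha$ of the original graph remains valid for every remaining subgraph, and the threshold only becomes easier to satisfy. Once the current degree falls below the threshold, we have nothing left to prove. The degrees evolve as $\Delta\to\Delta^{3/4}\to\Delta^{(3/4)^2}\to\dots$, so after $T=O(\log\log\Delta)$ calls the degree reaches the threshold.

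For the failure probability, we take a union bound over the $T=O(\log\log\Delta)\le O(\log\log n)$ calls, each of which fails with probability at most $1/n^{c}$. The total failure probability is therefore $O(\log\log n)/n^{c}$, which is still w.h.p.\ after adjusting $c$. Finally, the union of the sets $S_i$ computed in the different calls is independent: each $S_i$ is independent, and before the next call we remove $S_i\cup N_2(S_i)$, so later sets contain no neighbors of earlier ones.

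I expect no real obstacle here. The only subtle point is making sure the arboricity hypothesis is preserved under node removal, so that \Cref{lem:degree_drop_iteration} applies in every call.
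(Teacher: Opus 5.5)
Your proposal is correct and matches the paper's proof: apply the Degree Drop Lemma and note that $\max\{(82\alpha^4)^8,\log^8 n\}=O(\poly\log n)$ whenever $\alpha=O(\poly\log n)$. Your additional checks (arboricity does not increase under node removal, the union bound over the $O(\log\log\Delta)$ calls, and independence of the union of the $S_i$) are already part of the proof of that lemma, so they are correct but not needed here.
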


\begin{proof}
    This follows by~\Cref{lem:degreeDrop} by observing that $\max\{(82\alpha^4)^8, \log^8n\}=O(\poly \log n)$ when $\alpha=O(\poly \log n)$.
\end{proof}

\subsection{Stage II: Shattering} \label{sec:shattering}

After applying~\Cref{cor:degree_drop}, we can assume that we are given a graph of max degree~$\Delta=O(\poly \log n)$. In this case, we can use the existing work of \cite{GhaffariImproved16} to compute an MIS such that the remaining graph consists only of small components.

\begin{lemma}[Low max degree]
    \label{lem:low-max-degree}
    Let~$G$ be a graph with max degree~$\Delta \leq \log^{c} n$, for some constant $c>0$. There is a~$O(\log \log n)$ round algorithm that computes an independent set~$S$ such that all connected components of~$G\setminus (S\cup N(S))$ have size~$O(\poly \log n)$ w.h.p.\
\end{lemma}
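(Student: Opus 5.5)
This lemma is essentially the shattering phase of Ghaffari's MIS algorithm~\cite{GhaffariImproved16}, and I would prove it by running that phase for $O(\log\Delta)$ rounds and invoking its shattering guarantee. The algorithm is as follows. Each node $v$ maintains a desire level $p_t(v)$, initially $1/2$. In each round, $v$ marks itself with probability $p_t(v)$ and joins $S$ if it is marked and no neighbor is marked. Nodes in $S\cup N(S)$ are then removed. Finally, $v$ halves $p_t(v)$ if its effective degree $d_t(v)=\sum_{u\in N(v)}p_t(u)$ is at least $2$, and otherwise doubles it, capped at $1/2$. We run this for $T=\Theta(\log\Delta)=\Theta(\log\log n)$ rounds, since $\Delta\le\log^c n$. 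The output $S$ is independent by construction, because two adjacent marked nodes never both join.

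\textbf{Local guarantee.} The first step is to recall Ghaffari's local analysis. For every node $v$ and every choice of random bits outside its $2$-hop neighborhood, $v$ is still undecided after $T=\beta\log\Delta$ rounds with probability at most $\Delta^{-\Omega(\beta)}$. This follows from his golden-round argument. In any window of $\Theta(\log\Delta)$ rounds, either there are $\Omega(\log\Delta)$ rounds in which $p_t(v)=1/2$ and $d_t(v)<2$, or there are $\Omega(\log\Delta)$ rounds in which a constant fraction of $d_t(v)$ comes from neighbors with $d_t(u)<2$. In each such round, $v$ is removed with constant probability. I would take this bound as given from~\cite{GhaffariImproved16}.

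\textbf{Shattering.} The second step is the standard shattering argument. Choose $\beta$ large enough that the failure probability $\Delta^{-\Omega(\beta)}$ is smaller than $\Delta^{-10c'}$. Events for nodes at pairwise distance at least $5$ depend on disjoint randomness, so they are independent. Now suppose a component of the undecided nodes has more than $\Delta^{4}\log n$ nodes. Then it contains a $5$-separated set $W$ of size $\log n$ that is connected in $G^{9}$, i.e., a tree in $G^{9}$. There are at most $n\cdot 4^{\log n}\cdot\Delta^{9\log n}$ such trees. Each is fully undecided with probability at most $\Delta^{-10c'\log n}$. A union bound shows that, w.h.p., every component has size $O(\Delta^{4}\log n)=O(\poly\log n)$.

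\textbf{Main obstacle.} The hard part is the precise form of the local guarantee: it must hold \emph{for arbitrary fixing of randomness outside the $2$-hop neighborhood}, which is what makes the events for far-apart nodes independent. I would cite Ghaffari's Lemma~4.2 and Lemma~4.3 for exactly this statement rather than reprove it. If that citation were unavailable, I would redo the golden-round argument, observing that it uses only the coins within distance $2$.
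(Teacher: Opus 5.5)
Your proposal is correct and matches the paper's proof. The paper likewise runs Ghaffari's MIS algorithm for $O(\log\Delta)=O(\log\log n)$ rounds and invokes his shattering guarantee to get components of size $O(\Delta^4\log_\Delta n)=O(\poly\log n)$. One wording point: the undecided events for nodes at pairwise distance at least $5$ are not literally independent, since desire levels carry information over many hops; what the golden-round analysis gives is that their joint probability is at most the product of the individual bounds, and that is all your union bound over trees in $G^9$ needs.
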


\begin{proof}
    Run Ghaffari MIS algorithm \cite{GhaffariImproved16}, which computes an MIS, such that after~$O(\log \Delta)=O(\log \log n)$ rounds the remaining graph consists of components of size~$N \coloneqq O(\log_{\Delta} n \cdot \Delta^4)=O(\poly \log n)$ w.h.p.\
\end{proof}

\subsection{Stage III: Deterministic 2-Ruling Set for Small Components}
\label{sec:small_comp}
After applying~\Cref{cor:degree_drop} and \Cref{lem:low-max-degree}, we are left with a graph that consists of components of size~$N \coloneqq O(\poly \log n)$. In this subsection, we present a simple and self-contained algorithm to compute a 2-ruling set that we can apply to compute a~$2$-ruling set on all the connected components in parallel in~$O(\log N)=O(\log \log n)$ rounds, since we assume the arboricity to be~$O(\log \log n)$.

\begin{theorem}
\label{thm:det2RS}
    Given a graph with arboricity~$\alpha$, there exists a deterministic \CONGEST algorithm computing a 2-ruling set in~$O(\log n+\alpha)$ rounds.
\end{theorem}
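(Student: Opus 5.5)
The plan is to combine the $H$-partition of \Cref{lem:h-partition} with a top-down sweep over its layers. The relaxation to domination distance $2$ is what lets us avoid solving a full MIS inside each layer.

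\textbf{Step 1 ($H$-partition).} Invoke \Cref{lem:h-partition} with $\epsilon=2$. In $O(\log n)$ rounds this yields layers $H_1,\dots,H_\ell$ with $\ell\le \log n$, where every $v\in H_i$ has at most $4\alpha$ neighbors in $\bigcup_{j\ge i}H_j$. The procedure only exchanges layer indices and counts, so it runs in \CONGEST. In particular, each $G[H_i]$ has maximum degree at most $4\alpha$. Orient every edge towards the higher layer, breaking ties inside a layer by ID. This gives an acyclic orientation with out-degree at most $4\alpha$.

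\textbf{Step 2 (sweep from $H_\ell$ down to $H_1$).} We maintain the invariants that $S$ is independent and that every node of $H_{>i}$ lies within distance $2$ of $S$. When processing $H_i$, a node $u\in H_i$ is \emph{done} if it already has an $S$-node within distance $2$; its layer-mates learn this from one round of messages. The remaining nodes $R_i\subseteq H_i$ have no neighbor in $S$. We must add an independent set $S_i\subseteq R_i$, none of whose members has an $S$-neighbor, so that every node of $R_i$ ends within distance $2$ of $S\cup S_i$. It suffices to compute a $2$-ruling set of $G[R_i]$, a graph of maximum degree at most $4\alpha$. Done nodes from lower layers do not matter, because every node of a lower layer is handled later in the sweep.

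\textbf{Step 3 (pipelining to reach $O(\log n+\alpha)$).} Doing Step 2 naively costs $O(\ell\cdot f(\alpha))$, where $f(\alpha)$ is the time for the intra-layer task. To get an additive bound instead, first compute, in parallel for all layers, a proper $O(\alpha)$-coloring $\psi$ of each $G[H_i]$. The sweep then becomes a wavefront: node $u\in H_i$ with color $\psi(u)$ decides at time $2(\ell-i)+O(\psi(u))$. It joins $S$ if no node within distance $2$ has joined and no same-layer neighbor of smaller color has joined. Because every higher-layer neighbor, and every smaller-colored layer-mate, has already decided by that time (up to a constant-factor slack to propagate distance-$2$ information), this simulates the sequential greedy. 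The total time is $O(\ell+\#\text{colors})=O(\log n+\alpha)$. Independence and $2$-domination then follow from the greedy rule, exactly as in a sequential greedy over the order (layer descending, then color).

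\textbf{Main obstacle.} The crux is producing a coloring of each $G[H_i]$ with $O(\alpha)$ colors in $O(\log n+\alpha)$ rounds deterministically in \CONGEST. Linial's algorithm gives only $O(\alpha^2)$ colors in $O(\log^* n)$ rounds, and the naive reduction to $4\alpha+1$ colors costs $O(\alpha^2)$ rounds. I would first try to avoid needing a proper coloring by exploiting $\beta=2$. Inside $G[H_i]$, each node points to its highest-ID neighbor, which partitions $G[H_i]$ into pseudo-forests. A $2$-ruling set can be built on this structure using Cole--Vishkin $3$-coloring of rooted trees in $O(\log^* n)$ rounds, so that only the constant number of tree-colors, rather than $\alpha$ colors, must be pipelined. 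The remaining intra-layer edges would then be handled by scheduling over the at most $4\alpha$ out-edges, which costs the additive $O(\alpha)$ term. If this fails, the fallback is an arbdefective-coloring-based reduction to $O(\alpha)$ classes in $O(\alpha+\log^* n)$ rounds.
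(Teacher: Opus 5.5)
\textbf{There is a genuine gap in Step 3.} Steps 1 and 2 are correct, but on their own they only give $O(\log n\cdot(\alpha+\log^* n))$ rounds. The pipelining in Step 3, which is supposed to make the bound additive, fails. With decision time $T(u)=2(\ell-i)+K\psi(u)$ for $u\in H_i$, a neighbor $w\in H_{i+1}$ of $u$ decides at time $2(\ell-i-1)+K\psi(w)$. This is later than $T(u)$ whenever $\psi(w)>\psi(u)+2/K$. So the claim that every higher-layer neighbor has already decided is false, and the process does not simulate the greedy in the order (layer descending, then color).

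Worse, adjacent nodes in different layers can receive the same decision time. For example, take $K=2$, $w\in H_{i+1}$ and $\psi(w)=\psi(u)+1$. Neither node sees the other's decision, so both may join $S$, which breaks independence. Any schedule that depends only on (layer, color) and respects the lexicographic order must have $T(i,1)>T(i+1,C)\ge T(i+1,1)+C-1$, where $C=\Theta(\alpha)$ is the number of colors. That forces total length $\Omega(\ell\alpha)=\Omega(\alpha\log n)$. Your pseudo-forest/Cole--Vishkin idea does not help, because the problem lies across layers, not inside them. Also, the obstacle you single out is not one: each $G[H_i]$ has maximum degree at most $4\alpha$, and known $O(\Delta+\log^*n)$-round $(\Delta+1)$-coloring algorithms in \CONGEST (which the paper uses) color it directly.

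\textbf{The missing idea.} Use the $\beta=2$ relaxation so that the sweep itself is trivial: nothing is solved inside a layer, so nothing needs to be pipelined.
\begin{itemize}
\item Sweep from $H_\ell$ down to $H_1$. Every still-active node of $H_i$ deactivates all its neighbors in lower layers. This costs $O(1)$ rounds per layer, so $O(\log n)$ in total.
\item Afterwards, an active $v\in H_i$ has active neighbors only in $\bigcup_{j\ge i}H_j$. Hence $G[A]$ has maximum degree at most $4\alpha$.
\item Every inactive node keeps an active neighbor. The node that deactivated it is never deactivated later, since only lower layers are touched afterwards.
\end{itemize}
One MIS of $G[A]$ is then a $2$-ruling set of $G$. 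Compute it from a $(4\alpha+1)$-coloring in $O(\alpha+\log^*n)$ rounds, plus $O(\alpha)$ rounds iterating over the color classes. The $O(\log n)$ sweep and the $O(\alpha)$ MIS phase simply run one after the other.
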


Even though the runtime can also be achieved by combining prior work, e.g. by computing a~$O(\alpha^2)$-coloring using \cite{Barenboim2010} and then choosing~$B=\alpha=O(\log n)$ in \cite{KMW18-rulingSets} to compute a~$2$-ruling set in~$O(B\log_BC)=O\left(\alpha \frac{2 \log \alpha}{\log \alpha}\right)=O(\log n)$ rounds, we think that the algorithm may be of independent interest.

The idea of the algorithm is to use the well-known notion of~$H$-partition, \Cref{def:h-partition}, and to heavily abuse the fact that we are computing a~$2-$ruling set. In particular, we will deactivate nodes such that each of the inactive nodes is adjacent to an active node and the induced graph on the active nodes consists of connected components with degree at most~$4\alpha$. Which allows us to compute an MIS on each connected component efficiently in parallel. 

\begin{algorithm}
    \caption{Deterministic 2-RS}
    \label{alg:det-2-RS-low-arb}
    \SetKwInOut{Input}{Input}
    \SetKwInOut{Output}{Output}
    
    \newcommand\mycommfont[1]{\textcolor{gray}{\itshape #1}}
    \SetCommentSty{mycommfont}
    \SetKwComment{tcp}{}{}
    \DontPrintSemicolon

        Compute H-partition with degree~$4\alpha$ of size~$\ell \leq \lfloor \log n \rfloor$ 
       ~$A \leftarrow V$ \tcp{---Active nodes}
        
        \For{$i=\ell$ down to~$1$} {
        In parallel for each node ~$v\in V(H_i) \cap A$, remove all nodes in~$N(v)\cap V(H_j)$ with~$j<i$ from~$A$
        }
         Compute~$\Delta+1 = O(a)=O(\log n)$ coloring on~$G[A]$ 
         
         Compute MIS by standard reduction to coloring 
    
\end{algorithm}

\begin{proof}[Proof of \Cref{thm:det2RS}]
    The pseudocode for the algorithm is given by \Cref{alg:det-2-RS-low-arb}.
    We start by computing a~$H$-partition~$H_1,\dots, H_\ell$ with degree at most~$4\alpha$ of size~$\ell\leq \lfloor \log n \rfloor = O(\log n)$ in~$O(\log n)$ rounds, see~\Cref{lem:h-partition}. 
    Let~$A=V$ denote the active nodes. 
    Iterate through the layers from~$\ell$ down to~$1$. In iteration~$i$ each active node~$v \in A \cap V(H_i)$ deactivates all its neighbors that are in earlier layers, i.e., we remove~$u \in A \cap V(H_j)$ with~$j<i$ from~$A$. This step can be performed in parallel for all nodes in~$V(H_i)$ and thus only takes~$O(1)$ rounds; therefore, iterating through all layers takes~$O(\ell)=O(\log n)$ rounds. 
    \begin{claim*}
       ~$G[A]$ consists of connected components, each with maximum degree at most~$4\alpha$ and each node~$V\setminus A$ is adjacent to a node in~$A.$
    \end{claim*}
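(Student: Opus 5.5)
The claim has two parts: a degree bound inside $G[A]$, and domination of $V\setminus A$ by $A$. I would prove both from two facts. First, the $H$-partition property: a node in $H_i$ has at most $4\alpha$ neighbors in $\bigcup_{j\ge i}H_j$. Second, a structural invariant of the deactivation loop: a node $u\in H_j$ is removed from $A$ only during some iteration $i>j$, and only by an active neighbor $w\in H_i$. The whole argument rests on one observation about that loop. When iteration $i$ runs, the active status of every node in $H_i$ is already final. Later iterations $i'<i$ only remove nodes from layers $j<i'<i$.

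\emph{Degree bound.} Take $v\in A\cap H_i$ and let $u\in N(v)\cap A$ with $u\in H_j$.
\begin{itemize}
\item If $j<i$: in iteration $i$, $v$ is active (its status is final and it ends in $A$), so $v$ removes $u$ from $A$. This contradicts $u\in A$.
\item Hence $j\ge i$.
\end{itemize}
So every active neighbor of $v$ lies in $\bigcup_{j\ge i}H_j$. By the $H$-partition property there are at most $4\alpha$ of them. Therefore every node of $G[A]$ has degree at most $4\alpha$, and so does every connected component of $G[A]$. I would also remark that the deactivation in fact makes each $G[A]$-edge go between nodes whose layers compare in a fixed way, but only the degree bound is needed.

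\emph{Domination.} Take $u\in V\setminus A$ with $u\in H_j$. Nodes leave $A$ only through the loop, so some iteration $i>j$ has an active node $w\in H_i\cap N(u)$ that removes $u$. By the invariant, $w$'s status was final at iteration $i$ and $w$ was active then, so $w\in A$ at the end. Hence $u$ is adjacent to $w\in A$.

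The main obstacle is making the finality of status rigorous. I would do this by a short induction on the iterations $i=\ell,\dots,1$: after iteration $i$, the membership in $A$ of all nodes in $\bigcup_{j\ge i}H_j$ never changes again. This holds because iteration $i'$ only touches layers strictly below $i'$.

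With the claim established, the rest of the proof of \Cref{thm:det2RS} is as follows.
\begin{itemize}
\item \emph{Coloring.} On the bounded-degree graph $G[A]$, with degree at most $4\alpha=O(\log n)$ in the relevant regime, compute a $(4\alpha+1)$-coloring, e.g.\ with Linial-type coloring plus color reduction. This takes $O(\alpha+\log^*n)$ rounds.
\item \emph{MIS.} Turn the coloring into an MIS $S$ of $G[A]$ by processing one color class per round, in $O(\alpha)$ rounds. All messages are small, so this runs in \CONGEST.
\item \emph{Independence.} $S$ is independent in $G$, since edges between nodes of $A$ are exactly the edges of $G[A]$.
\item \emph{Ruling property.} Each node of $A$ is within distance $1$ of $S$. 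Each node of $V\setminus A$ is adjacent to a node of $A$, hence within distance $2$ of $S$.
\end{itemize}
Adding the $O(\log n)$ rounds for the $H$-partition and the loop gives a total of $O(\log n+\alpha)$ rounds.
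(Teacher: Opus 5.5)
Your proof is correct and follows the paper's argument essentially verbatim. An active node of $H_i$ was still active when layer $H_i$ was processed, so it removed all its lower-layer neighbors, and the $H$-partition then gives the $4\alpha$ bound; a removed node was deactivated by a neighbor in a higher layer whose membership in $A$ can no longer change, so that neighbor remains in $A$. Your explicit induction on the finality of membership just spells out what the paper states in one line.
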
  
    If the claim holds, we can compute a~$\Delta +1=O(\alpha)$ coloring on~$G[A]$ in~$O(\Delta + \log^* n)=O(\alpha+\log^* n)$ rounds \cite{10.1145/3486625} and turn it into an MIS in~$O(\Delta+1)=O(\alpha)$ rounds by iterating through the color-classes. The MIS~$S$ of~$G[A]$ is a~$2$-ruling set of~$G$, since every node is adjacent to a node in~$A$ and therefore $dist(v,S)\leq2$ for all nodes~$v\in V(G)$. The overall runtime then is~$O(\log n + \alpha)$ rounds, by first computing the~$H$-partition and then the MIS on~$G[A]$.

    In the remainder of this proof, we show that the claim holds. First, consider any node~$v$ that is active at the end of the algorithm. Assume~$v \in V(H_i)$ and observe that~$v$ was active in iteration~$\ell -i+1$, i.e., when we considered layer~$H_i$. Thus~$v$ deactivated all its neighbors from earlier layers, which implies~$\deg_A(v) \leq |N(v) \cap (\bigcup_{j=i}^{\ell}H_j)| \leq 4\alpha$ since the~$H-$partition has degree~$4\alpha$.
    Secondly, consider any node~$v$ that got deactivated in the course of the algorithm, let~$i$ be the iteration in which~$v$ got deactivated. Let~$d(v)$ be the node that deactivated~$v$. Since~$d(v)$ deactivated~$v$ in iteration~$i$ we know that~$d(v)\in H_{\ell-i+1}$. Furthermore, we know that only nodes in layers~$H_j$ with~$j < \ell-i+1$ get deactivated throughout the rest of the algorithm. So~$d(v)$ is an active neighbor of $v$ at the end of the algorithm.
 \end{proof}

We can use this deterministic~$2-$ruling set algorithm for our remaining instances of size~$N=\poly \log n$ and arboricity~$O(\log \log n)$ in a straightforward way to obtain the following.

\begin{corollary}
\label{cor:det2RS}
    Given a graph of size~$N=O(\poly \log n)$ with arboricity~$O(\log \log n)$, there exists a deterministic \CONGEST algorithm computing a 2-ruling set in~$O(\log \log n)$ rounds.
\end{corollary}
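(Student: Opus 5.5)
This is a direct instantiation of \Cref{thm:det2RS}, with the component size $N$ playing the role of the number of nodes. The plan is to run \Cref{alg:det-2-RS-low-arb} independently and in parallel on every connected component of the remaining graph. All communication happens along edges, so executions on different components do not interact. Each execution therefore behaves exactly as if the graph were that single component of at most $N$ nodes.

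The first step is to get the parameters right. \Cref{alg:det-2-RS-low-arb} needs the number of nodes to set the number of layers of the $H$-partition. For this I would have every node use the common upper bound $N = O(\poly\log n)$ in place of its component's size, which is known to all nodes since $n$ is. \Cref{lem:h-partition} with $\epsilon = 2$ then yields, in $O(\log N)$ rounds, an $H$-partition of degree at most $4\alpha$ into $\ell \le \lfloor \log N \rfloor$ layers. The argument behind that lemma (repeatedly peeling nodes of degree at most $4\alpha$) uses the size bound only as an upper bound, via \Cref{lem:degeneratenodes}. So running it with $N$ is valid for every component of size at most $N$.

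Next come the remaining steps. The layer sweep takes $O(\ell) = O(\log N)$ rounds. The $(\Delta+1)$-coloring of $G[A]$, which has maximum degree at most $4\alpha$, takes $O(\alpha + \log^* n)$ rounds; here I use the global IDs of size $O(\log n)$ bits, which only contributes the $\log^* n$ term. Converting the coloring to an MIS takes $O(\alpha)$ rounds. The total is $O(\log N + \alpha + \log^* n)$. Plugging in $N = \poly\log n$ gives $\log N = O(\log\log n)$, and $\alpha = O(\log\log n)$ by assumption, so the round complexity is $O(\log\log n)$.

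For correctness, the claim in the proof of \Cref{thm:det2RS} applies within each component. Every deactivated node has an active neighbor, and the MIS of $G[A]$ dominates $A$. Hence every node of the component is within distance $2$ of the set, and the set is independent. The union over components is therefore a $2$-ruling set of the whole graph. Messages consist of layer indices, colors and IDs, all of $O(\log n)$ bits, so the algorithm stays in \CONGEST.

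The only real subtlety is making sure the $H$-partition step uses $N$ rather than $n$, so that it costs $O(\log N)$ rather than $O(\log n)$ rounds. I would handle this by noting that the peeling argument depends on the node count only through the number of halving steps.
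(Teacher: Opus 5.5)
Your proposal is correct and follows the paper's argument: the paper simply applies \Cref{thm:det2RS} with the component size $N$ in place of $n$, which gives $O(\log N+\alpha)=O(\log\log n)$ rounds. Your extra remarks make explicit what the paper calls ``straightforward'': using the common upper bound $N$ to fix the number of $H$-partition layers, and noting that the global IDs in the coloring step only add a harmless $\log^* n$ term.
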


Finally, we can prove \Cref{thm:2RSconstArb}.

\begin{proof}[Proof of \Cref{thm:2RSconstArb}]
    Consider the following algorithm:
    We run the procedure of~\Cref{cor:degree_drop} computing an independent set $S$. Second, we invoke the procedure of \Cref{lem:low-max-degree} on $G'\coloneq G\setminus (S\cup N_2(S))$ to compute an independent set $I$. Finally, we invoke the procedure of~\Cref{cor:det2RS} on all the components of $G'\setminus (I \cup N(I))$ in parallel to compute a ruling set on each individual one. Our final $2$-ruling set is the union of $S, I$ and the ruling set of each of the components.
    
    \textbf{Correctness of the Subroutines.} After the call of the first procedure, we are left with a graph of max degree $O(\poly \log n)$  w.h.p.\, by~\Cref{cor:degree_drop}. Thus, we can invoke the procedure of~\Cref{lem:low-max-degree} which leaves us with small components of size $O(\poly \log n)$ w.h.p.\ This in turn lets us call the procedure of~\Cref{cor:det2RS} on each component in parallel.
     
    \textbf{Runtime.} By~\Cref{cor:degree_drop} after $O(\log \log \Delta)$ rounds $G'$ has a max degree of $\poly \log n$ w.h.p.\ Therefore, the call of the procedure of~\Cref{lem:low-max-degree} takes $O(\log \log n)$ rounds and all connected components of $G'\setminus(I \cup N(I))$ have size $O(\poly \log n)$ w.h.p.\ Thus, and since the procedure of~\Cref{cor:det2RS} runs in parallel on each connected components, the computation of the independent sets on each component runs in $O(\log \log n)$ rounds.  
   
    \textbf{Independence and Domination.} In each called procedure, we always compute independent sets and remove each computed independent set and its $2$-hop neighborhood before computing the next independent set, ensuring independence across the different calls of procedures. 
    Before the final computation of ruling sets for each component, we removed nodes dominated by distance at most $2$, so we computed a $2$-ruling set in the end.
    
    \textbf{Small messages.}
    Finally, observe that every subroutine for Stage II and III works in the \CONGEST model. For Stage I, we can use fractional values instead of reals for the ranks using $c \cdot \log n=O(\log n)$ bits. Then we have no collisions w.h.p. $1-1/n^{c-2}$ by a union bound. This completes the proof of~\Cref{thm:2RSconstArb}.
\end{proof}

\begin{remark*}
    The reason why we require arboricity of~$O(\log \log n)$ for~\Cref{thm:2RSconstArb} is that it allows us to solve the shattered components of small size efficiently. If we allowed arboricity of~$O(\poly \log n)$, we would end up with instances of size~$O(\poly \log n)$ and arboricity also~$O(\poly \log n)$, which correspond to general graphs. Thus, solving these components in $O(\log \log n)=O(\log N)$ rounds seems to be no easier than designing an $O(\log n)$ deterministic algorithm for $2$-ruling sets on general graphs.
\end{remark*}
\section{Randomized 2-Ruling Set for Graphs of General Arboricity}
\label{sec:lognArb}

In this section, we are considering graphs of general arboricity~$\alpha \gg  \log \log n$. In particular, we will show the following result in this section.

\thmMORE*

In~\Cref{cor:degree_drop} we have already established that we can reduce the max degree to $O(\poly \log n)$ when arboricity is $O(\poly \log n)$. Therefore, we will consider the case where~$ \alpha = \Omega(\poly \log n)$. In this situation, we derive a degree drop down to $\poly \alpha$ from~\Cref{lem:degreeDrop}.

\begin{corollary} \label{cor:degreeDropArb}
    Let~$G$ be a graph with max degree~$\Delta$ and arboricity~$\alpha=\Omega(\poly \log n)$. There is an~$O(\log \log\Delta)$-round procedure that computes an independent set~$S$ such that~$G\setminus (S\cup N_2(S))$ has max degree~$\Delta'\leq  82^8\alpha^{32}=O(\poly (\alpha))$ w.h.p.\
\end{corollary}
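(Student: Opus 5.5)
The plan is to apply \Cref{lem:degreeDrop} directly and then use the assumption $\alpha=\Omega(\poly\log n)$ to show that the arboricity term dominates the bound. \Cref{lem:degreeDrop} gives, for any graph with maximum degree $\Delta$ and arboricity $\alpha$, an $O(\log\log\Delta)$-round algorithm. This algorithm iterates \Cref{degree_drop_proc} $T=O(\log\log\Delta)$ times and outputs an independent set $S$ such that $G\setminus(S\cup N_2(S))$ has maximum degree at most $\max\{(82\alpha^4)^8,\log^8 n\}$ w.h.p. The runtime and the independence of $S$ come straight from the lemma, so only the degree bound needs work.

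For the degree bound, note that $(82\alpha^4)^8=82^8\alpha^{32}$. Read $\alpha=\Omega(\poly\log n)$ as saying that $\alpha\geq\log^{c}n$ for a suitable constant $c>0$; any exponent $c\geq 1/4$ suffices, and so does any fixed polylog lower bound, up to constants. Under this reading, $\log^8 n\leq \alpha^{32}\leq 82^8\alpha^{32}$, so the maximum equals $82^8\alpha^{32}=O(\poly(\alpha))$.

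The one point that needs care is the w.h.p.\ guarantee. The lemma's proof applies \Cref{lem:degree_drop_iteration} while the current maximum degree is at least $\max\{(82\alpha^4)^8,\log^8 n\}$. Each iteration succeeds with probability at least $1-n^{-c'}$, and a union bound over the $T=O(\log\log\Delta)\leq O(\log\log n)$ iterations preserves the high-probability guarantee. Once the degree falls below the threshold, the claim already holds. Arboricity never increases when nodes are removed, so the same $\alpha$ is valid throughout. There is no real obstacle here. The only subtlety is being explicit that the polylog lower bound on $\alpha$ is strong enough (exponent at least $1/4$) to absorb the $\log^8 n$ term; if a weaker lower bound on $\alpha$ were intended, the same argument would still give a bound of $O(\poly(\alpha))$ with a larger constant exponent.
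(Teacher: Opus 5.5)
Your proposal is correct and matches the paper's proof: apply \Cref{lem:degreeDrop} directly and observe that $\max\{(82\alpha^4)^8,\log^8 n\}=82^8\alpha^{32}$ once $\alpha$ is at least polylogarithmic. The paper states this for $\alpha\geq\log n$; your threshold $\alpha\geq\log^{1/4} n$ is sharper. Your extra discussion of the union bound over the $O(\log\log\Delta)$ iterations, and of arboricity not increasing under node removal, is already covered inside the proof of \Cref{lem:degreeDrop}, so it is harmless but not needed.
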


\begin{proof}
    The statement follows immediately by~\Cref{lem:degreeDrop}, since $\max\{(82\alpha^4)^8, \log^8n\}=82^8\alpha^{32}$, already for $\alpha\geq \log n$.
\end{proof}

The procedure is also given by running~\Cref{degree_drop_proc} for $O(\log \log \Delta)$ rounds. Finally, we will use sampling techniques by \cite{tushar-super-fast-2014} to reduce the degree further and finally show how to compute a $2$-ruling set.

\begin{lemma}[Theorem 1~\cite{tushar-super-fast-2014}]
\label{lem:spars}
    Let $ G$ be an arbitrary $n$-vertex graph with maximum degree $\Delta$ and $f$ be a parameter. With high probability, we can compute a vertex-subset $S \subseteq V (G)$ in $O(\log_f \Delta)$ rounds such that $\Delta(G[S])=O(f\cdot \log n)$, and every vertex in $ V$ is either in $S$ or has a neighbor in $S$.
\end{lemma}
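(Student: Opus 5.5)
The plan is to follow the sparsification scheme of Bisht, Kothapalli and Pemmaraju. It consists of $O(\log_f \Delta)$ phases, each of which shrinks the maximum degree of the ``undecided'' part by a factor of roughly $f$. Throughout, we maintain a set $S$ of sampled nodes and a set $R$ of remaining nodes. Initially $S=\emptyset$ and $R=V$, and we write $\Delta_i = \Delta/f^{i}$. The invariant is that every node outside $R \cup S \cup N(S)$ is dominated, i.e.\ every removed node either lies in $S$ or has a neighbor in $S$.

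Phase $i$ works as follows. Each node of $R$ whose degree within $R$ lies in $(\Delta_{i+1}, \Delta_i]$ is called \emph{high}. Every node of $R$ joins $S$ independently with probability $p_i=\Theta(f\log n/\Delta_i)$. Afterwards, all nodes of $S$ and all neighbors of $S$ are removed from $R$. Finally, any high node that still remains is removed from $R$ as well; by the claims below this happens only with negligible probability.

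Two Chernoff bounds drive the analysis. First, a high node $v$ has more than $\Delta_{i+1}=\Delta_i/f$ neighbors in $R$. The probability that none of them is sampled is at most $(1-p_i)^{\Delta_i/f}\le n^{-c}$ for a suitable constant in $p_i$. So w.h.p.\ every high node gets a sampled neighbor and is dominated. As a consequence, after phase $i$ every node still in $R$ has $R$-degree at most $\Delta_{i+1}$, since these degrees only decrease. Second, we bound the degree inside $G[S]$. A node sampled in phase $i$ has at most $\Delta_i$ neighbors in $R$ at that time. Its expected number of sampled neighbors is therefore at most $p_i\Delta_i=O(f\log n)$, and by Chernoff this count is $O(f\log n)$ w.h.p. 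We also need to control the neighbors it has among nodes sampled in \emph{other} phases. Here we use that removal is immediate. Nodes sampled in earlier phases together with their neighbors have already left $R$, so a node sampled now has no neighbors among earlier-sampled nodes. Symmetrically, a neighbor of a node sampled now is removed and cannot be sampled later. Hence edges of $G[S]$ only run between nodes sampled in the same phase, and $\Delta(G[S])=O(f\log n)$ w.h.p. after a union bound over all nodes and phases.

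After $O(\log_f \Delta)$ phases we have $\Delta_i\le O(f\log n)$. At this point all of the remaining $R$ is added to $S$. This is the step I expect to be the main obstacle, because it could create edges between the remaining nodes and old $S$-nodes. Those edges do not occur, since the neighbors of $S$ were already removed from $R$. The remaining $R$ has internal degree $O(f\log n)$, so adding it keeps $\Delta(G[S])=O(f\log n)$. Domination then holds for every vertex: it is either in the final $R\subseteq S$, or it was removed as an $S$-node or a neighbor of one, or it was a high node that got a sampled neighbor. A final union bound over the failure events gives the claim with high probability. Each phase takes $O(1)$ rounds, so the total running time is $O(\log_f\Delta)$.
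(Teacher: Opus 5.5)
Your proposal is correct, and it is essentially the sparsification argument of Bisht, Kothapalli and Pemmaraju that the paper cites without reproving: geometrically decreasing degree thresholds $\Delta_i=\Delta/f^i$, sampling with probability $\Theta(f\log n/\Delta_i)$ so that high-degree nodes get a sampled neighbor w.h.p.\ while sampled nodes get only $O(f\log n)$ sampled neighbors, immediate removal of $S_i\cup N(S_i)$ so that $G[S]$ has edges only inside a single phase, and the final low-degree remainder added to $S$. One cosmetic fix: your invariant should say that every node removed from $R$ lies in $S\cup N(S)$, and this holds only on the high-probability event that no high node is force-removed.
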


\begin{lemma}
    \label{lem:finish_general_arb}
    Given a graph $G=(V,E)$ with max degree $\Delta\leq O(\alpha^{32})$ we can compute a $2$-ruling set in $\widetilde{O}(\log^{5/8} \alpha) + \widetilde{O}(\log^{5/3} \log n)$ rounds w.h.p.\ 
\end{lemma}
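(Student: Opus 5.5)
We plan to combine the sparsification of \Cref{lem:spars} with Ghaffari's low-degree MIS algorithm~\cite{GhaffariImproved16} and a deterministic post-shattering step. The exponent $5/8$ should come from balancing the sparsification cost against the cost of an MIS on the sparsified graph.

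First, we apply \Cref{lem:spars} with a parameter $f$ to be chosen. In $O(\log_f \Delta)=O(\log \alpha/\log f)$ rounds this yields a set $S'$ that dominates $V$ and satisfies $\Delta(G[S'])=O(f\log n)$. Any MIS of $G[S']$ is then a $2$-ruling set of $G$, because every node of $V$ is in $S'$ or adjacent to $S'$, and every node of $S'$ is in the MIS or adjacent to it.

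The difficulty is that the factor $\log n$ in the degree bound would make a direct MIS on $G[S']$ cost $\log\log n$ too much, or more. We therefore intend to run the sparsification (which is itself Luby-style) only as a shattering step: subsample until the degree is about $f$, and defer the $\log n$ term to post-shattering. We then run Ghaffari's algorithm on $G[S']$ for $O(\log f)$ rounds. Its pre-shattering phase leaves components of size $\poly(\Delta)\cdot\log n$. On each component we finish deterministically, using the $\widetilde{O}(\log^{5/3} N)$ deterministic MIS of Ghaffari--Grunau~\cite{GG24} with $N=\poly\log n \cdot \poly\alpha$. This is where the $\widetilde{O}(\log^{5/3}\log n)$ term comes from. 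The $\poly\alpha$ factor inside the component size also has to be charged to the $\log\alpha$ terms; for that we would rely on degree-dependent shattering bounds of the form $O(\Delta^4\log_\Delta n)$ together with network decomposition of the components.

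Next we balance the costs. The sparsification takes $\log\alpha/\log f$ rounds, and a Ghaffari/BEPS-style MIS on a graph of degree $f$ costs roughly $\log f$, or $\sqrt{\log f}$-type bounds when arboricity is available. Plain Ghaffari gives $\log\alpha/\log f+\log f$, which is $\sqrt{\log\alpha}$. That is better than $5/8$, so the intended pipeline presumably pays more somewhere. Most likely the $\log n$ factor in the degree forces an additional iteration, for example alternating sparsification and our degree drop (\Cref{lem:degreeDrop}) inside the sparsified graph, whose arboricity is at most $\alpha$. I would first set up the recurrence with the costs $T_{\mathrm{spars}}=\log\alpha/\log f$, $T_{\mathrm{MIS}}(f)$, and the post-shattering cost $\log^{5/3}(f\log n)$. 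Choosing $\log f=\log^{3/8}\alpha$ then gives $T_{\mathrm{spars}}=\log^{5/8}\alpha$, and the term $\log^{5/3}(f)=\log^{5/8}\alpha$ matches it. This suggests the deterministic finishing is applied to components whose size is polynomial in $f$, and that this choice of $f$ is what yields the $5/8$ exponent.

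The main obstacle is this last accounting step. We must make sure the shattered components have size $\poly(f)\cdot\poly\log n$ and not $\poly(\alpha)$, so that \cite{GG24} costs $\widetilde{O}(\log^{5/3} f+\log^{5/3}\log n)=\widetilde{O}(\log^{5/8}\alpha+\log^{5/3}\log n)$. We must also handle the extra $\log n$ factor in $\Delta(G[S'])$ from \Cref{lem:spars} without paying $\log\log n$ times a non-constant factor.
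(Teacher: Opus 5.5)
The pipeline you eventually arrive at is exactly the paper's proof: sparsify via \Cref{lem:spars}, run Ghaffari's algorithm on $G[S']$, finish the shattered components with the $\widetilde{O}(\log^{5/3}N)$ deterministic MIS of Ghaffari--Grunau, and take $\log f=\log^{3/8}\alpha$. As written, however, you do not close the argument. You assert a component size $N=\poly\log n\cdot\poly\alpha$, which would make the finishing step cost $\log^{5/3}\alpha$ and break the bound. You also leave ``components of size $\poly(f)\cdot\poly\log n$'' as an unresolved main obstacle, and propose network decomposition, extra subsampling, or alternation with the degree drop to fix it. None of that is needed.

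The missing observation is that Ghaffari's shattering guarantee is in terms of the maximum degree of the graph the algorithm runs on. Here that is $\Delta(G[S'])=O(f\log n)$, not $\Delta(G)=O(\alpha^{32})$. The original degree enters only through the sparsification time $O(\log_f\Delta)=O(\log\alpha/\log f)$. Run Ghaffari's algorithm for $O(\log\Delta(G[S']))=O(\log f+\log\log n)$ rounds. Just $O(\log f)$ rounds is too few when $f<\log n$. This leaves components of size $N=O\bigl((f\log n)^4\log n\bigr)$, so $\log N=O(\log f+\log\log n)$ and $\alpha$ does not appear at all.

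The same computation settles your worry about the $\log n$ factor in $\Delta(G[S'])$. It adds only $O(\log\log n)$ rounds to Ghaffari's phase. Since $(a+b)^{5/3}=O(a^{5/3}+b^{5/3})$, it adds only an additive $\widetilde{O}(\log^{5/3}\log n)$ to the finishing step. The total is therefore $O(\log\alpha/\log f)+\widetilde{O}(\log^{5/3}f)+\widetilde{O}(\log^{5/3}\log n)$.

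This also answers where the pipeline ``pays more'' than your naive $\log\alpha/\log f+\log f$ balance. The finishing cost grows like $\log^{5/3}f$, not $\log f$, because the components have size polynomial in $f$. Balancing $\log\alpha/\log f$ against $\log^{5/3}f$ gives $\log f=\log^{3/8}\alpha$, hence the exponent $5/8$; no additional iteration is involved.
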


\begin{proof}
    Let $f$ be a parameter that will be fixed later during this proof.
    First we run the sparsification procedure of~\Cref{lem:spars} to obtain a vertex-subset $S \subseteq V(G)$ in $O(\log_f \Delta)$ rounds such that in $\Delta(G[S])=O(f \cdot \log n)$ and every vertex in $V$ is either in $S$ or has a neighbor in $S$. In order to obtain a $2$-ruling set of $G$, it suffices to compute an MIS on $G[S]$. We run Ghaffari's MIS algorithm~\cite{GhaffariImproved16}, which computes an MIS in $O(\log \Delta(G[S]))=O(\log(f \cdot \log n))=O(\log f)+O(\log \log n)$ rounds, such that the remaining graph consists of components of size $O(\Delta(G[S])^4 \cdot \log_\Delta n)= O((f \cdot \log n )^4\cdot \log_\Delta n )$. Finally, we invoke the MIS algorithm of~\cite{GG24} to compute an MIS on each of the remaining components in~$\widetilde{O}(\log^{5/3} N)=\widetilde{O}(\log^{5/3} ((f \cdot \log n)^4 \cdot \log_\Delta n))= \widetilde{O}(\log^{5/3} f + \log^{5/3} \log n)$. So in total, this requires
    \begin{align*}
        O(\log_f \Delta)+O(\log f)+\widetilde{O}(\log^{5/3} f )+ \widetilde{O}(\log^{5/3} \log n) \\ = O\left(\frac{\log \alpha}{\log f}\right)+\widetilde{O}(\log^{5/3} f)+\widetilde{O}(\log^{5/3} \log n)
    \end{align*}
    rounds, where we used that $\Delta\leq \alpha^{32}$. Finally, choosing $f=\exp(\log^{3/8} \alpha)$ we obtain a runtime of
    \begin{align*}
       & O\left(\frac{\log \alpha}{\log f}\right)+\widetilde{O}(\log^{5/3} f)+\widetilde{O}(\log^{5/3} \log n) \\  &= O\left(\frac{\log \alpha}{\log^{3/8}\alpha}\right)+\widetilde{O}((\log^{3/8} \alpha)^{5/3})+ \widetilde{O}(\log^{5/3} \log n)\\&
          = \widetilde{O}(\log^{5/8} \alpha+\log^{5/3}\log n)
    \end{align*}
    rounds.
\end{proof}

Finally, we are ready to prove~\Cref{thm:MORE}.

\begin{proof}[Proof of~\Cref{thm:MORE}]
Consider the following algorithm:
    Invoke the procedure of~\Cref{cor:degreeDropArb}/\Cref{cor:degree_drop} (they are the same) to compute an independent set $S$. Now consider the residual graph $G'\coloneq G\setminus(S \cup N_2(S))$.
    
    If $\alpha=O(\poly \log n)$ we know that the max degree of $G'$ is $\Delta'=O(\poly \log n)$ after $O(\log \log \Delta)$ rounds by~\Cref{cor:degree_drop} and thus can invoke the deterministic MIS algorithm of Ghaffari and Grunau~\cite{GG24} to obtain an MIS on the remaining graph in $\widetilde{O}(\log^{5/3} \Delta')=\widetilde{O}(\log^{5/3} \log n)$ rounds, yielding an overall runtime of $O(\log \log \Delta)+\widetilde{O}(\log^{5/3} \log n) = \widetilde{O}(\log^{5/3} \log n)$. We have already established correctness, independence, and domination for the degree drop, and since the MIS computation is correct, we in fact compute a $2$-ruling set.
    
    Otherwise, if $\alpha=\Omega(\poly \log n)$, we invoke the procedure of~\Cref{lem:finish_general_arb} after the degree drop. 

    \textbf{Correctness of the Subroutines.} 
    After the invocation of the degree drop procedure, the max degree dropped to $O(\alpha^{32})$ w.h.p.\, by~\Cref{cor:degreeDropArb}. Thus, we can call the procedure of~\Cref{lem:finish_general_arb} to compute a $2$-ruling set on the remaining graph.

    \textbf{Runtime.} Reducing the max degree down to $\alpha^{32}$ takes $O(\log \log \Delta)$ rounds, by~\Cref{cor:degreeDropArb}. The computation of the $2$-ruling set on the remaining graph take $\widetilde{O}(\log^{5/8} \alpha+\log^{5/3} \log n)$ by~\Cref{lem:finish_general_arb}.

    \textbf{Independence and Domination.} In each called procedure, we always compute independent sets and remove each computed independent set and its neighborhood before computing the next independent set, ensuring independence across the two procedures. 
    Before the final computation of a $2$-ruling set on the remaining graph, we only removed nodes that are dominated by distance at most $2$, thus $S\cup I$ is a $2$-ruling set.
\end{proof}

For a restricted regime of $\alpha$~\Cref{thm:MORE} runs in $\widetilde{O}(\log^{5/3} \log n)$ rounds.

\corARB*


\begin{remark*}
    In order to make~\Cref{thm:MORE} work in the \CONGEST model, one could use one of the recent network decomposition algorithms, e.g., \cite{RG20,GGHIR23} to compute an MIS on the remaining components. This would result in a larger $O(\poly \log \log n)$ term in the final runtime.
\end{remark*}

\section{Massively Parallel 2-Ruling Set}

In this section, the words node and machine are interchangeable, since each node is hosted on a unique machine with local space $O(n^\eps)$ with $\eps \in (0,1)$. The algorithm relies on \Cref{lem:degree_drop_iteration} in order to reduce the maximum degree until every node can gather their $\poly(\log \log n)$-hop neighborhood and simulate \Cref{thm:MORE} directly. Gathering neighborhoods is done via graph exponentiation \cite{wattenhofer2010}, which is a standard tool in \mpc. 

\thmRSloglogArbMPC*

\begin{proof}
	By \Cref{lem:degree_drop_iteration}, there exists a constant-round \LOCAL algorithm that computes an independent set $S$ on a graph $G$ with arboricity $\alpha$ and maximum degree $\Delta \geq \max\{(82\alpha^4)^8,\log^8 n\}$, such that the maximum degree of $G \setminus \{S \cup N_2(S)\}$ is $\Delta' \leq \Delta^{3/4}$. This algorithm can be executed directly in low-space \mpc within the same runtime: nodes with degree $<n^\eps$ can execute it locally, and nodes with degree $>n^\eps$ can execute it via an $n^\eps$-ary broadcast tree, which is straightforward to implement as $v$ only needs to aggregate the minimum real computed by its active neighbors.
	
	By applying \Cref{lem:degree_drop_iteration} $O(\log \log \log n)$ times on the input graph, the remaining graph has maximum degree 
	$$\Delta' \leq \max\{\Delta^{1/\poly(\log \log n)},(82\alpha^4)^8,\log^8 n\}~.$$ Now, each node gathers its $\poly(\log \log n)$-hop neighborhood using graph exponentiation in $O(\log \log \log n)$ rounds. This is feasible, because the size of any such neighborhood is bounded: since $\Delta \leq n$ and $\alpha \leq 2^{\poly(\log \log n)}$, it holds that 
	\begin{align*}
		(\Delta')^{\poly(\log \log n)} &\leq (\max\{\Delta^{1/\poly(\log \log n)},(66\alpha^4)^8,\log^8 n\})^{\poly(\log \log n)} \\
		&\leq (\max\{n^{1/\poly(\log \log n)},2^{\poly(\log \log n)}, \log^8 n)\})^{\poly(\log \log n)} \\
		&= \max\{n^{\eps/2}, 2^{\poly(\log \log n)}, 2^{\poly(\log \log n)} \} \\
		&= O(n^{\eps/2})~.
	\end{align*}
	
	The exponent of $n$ can be set to $\eps/2$ by choosing a suitable constant in $1/\poly(\log \log n)$, i.e., how many times we execute the degree reduction. Hence, the total number of edges in every $\poly(\log \log n)$-hop neighborhood is bounded by $O(n^\eps)$ and the total space of the algorithm is bounded by $O(m+n^{1+\eps})$. Now, we can execute any $\poly(\log \log n)$-round \LOCAL algorithm in constant time due to the fact that any $t$-round \LOCAL algorithm is simply a function from the ball of radius $t$ around every node $v$, to an output set. In particular, we can execute \Cref{thm:MORE} when arboricity $\alpha \leq 2^{\poly(\log \log n)}$, since the runtime is
	\begin{align*}
		\widetilde{O}(\log^{5/8} \alpha+\log^{5/3} \log n) &= O(\log \log \alpha \cdot \log^{5/8} \alpha + \log \log \log n \cdot \log^{5/3} \log n) \\
		&= O(\log \log \log n \cdot \poly(\log \log n) + \log \log \log n \cdot \log^{5/3} \log n) \\
		&= \poly(\log \log n)~,
	\end{align*}
	
	yielding a 2-ruling set for the current graph. Combining this set with the independent sets computed during the degree reduction (repeated application of \Cref{lem:degree_drop_iteration}) gives a 2-ruling set for the input graph by the correctness proof of \Cref{thm:MORE}.
\end{proof}

\section*{Acknowledgements}
We thank Lasse Leskelä for helpful discussions on probabilistic arguments for handling dependencies in an earlier version of this work.

\newpage
\printbibliography[heading=bibintoc]

\newpage
\appendix
\allowdisplaybreaks
\section{Extended Related Work}
\label[appendix]{apx:related_work}

\begin{table}[!h]
    \centering
    \renewcommand{\arraystretch}{1.65}
    \begin{tabular}{|c|c|c|l|}
    \hline
         Domination $\beta$ & Running Time & Graph Class & Citation  \\ \hline

         \multirow{7}{*}{$\beta=1$ (MIS)}
         & \cellcolor{lowerred}$\Omega(\min\{\log \Delta,\sqrt{\log n} \})$ 
         & Line Graphs of trees
         & \cite{khoury2025roundeliminationselfreductionclosing}  \\ \hhline{|~|-|-|-|}

        & \cellcolor{lowerred}$\Omega(\min\{\sqrt{\frac{\log n}{\log \log n}},\frac{\log \Delta}{\log \log \Delta}\})$ 
        & \multirow{2}{*}{General Graphs}
        & \cite{10.1145/3519270.3538419} \\  \hhline{|~|-|~|-|}
         
        & \cellcolor{upperblue}$O(\log \Delta)+O(\poly \log \log n)$ 
        & 
        & \cite{GhaffariImproved16} \\ \hhline{|~|-|-|-|}
        
        & \cellcolor{upperblue}$O(\sqrt{\frac{\log n}{\log \log^* n}})$ 
        & Trees
        & \multirow{2}{*}{\cite{khoury2025breakingbarriersdistributedmis}} \\   \hhline{|~|-|-|~|} 
        & \cellcolor{upperblue}$O({\frac{\log \Delta}{\log \log^* \Delta}}+\poly \log \log n)$ 
        & Girth at least $7$
        &  \\  \hhline{|~|-|-|-|}

       & \cellcolor{upperblue}$O(\log\alpha + \sqrt{\log n})$ 
       & Arboricity $\alpha$
       & \multirow{2}{*}{\cite{BEPSv3}+\cite{GhaffariImproved16}} \\  \hhline{|~|-|-|~|}
         & \cellcolor{upperblue}$O(\sqrt{\log n})$ 
         & Arboricity  $\leq 2^{\sqrt{\log n}}$
         &  \\ \hline 

        $\beta \leq c \cdot M_R^1(\Delta,n)$
        & \cellcolor{lowerred}$\Omega(\min\{ \frac{\log \Delta}{\beta \log \log \Delta},\log_\Delta \log n\})$ 
        & \multirow{5}{*}{General Graphs}
        & \multirow{2}{*}{\cite{BBO22}} \\  \hhline{|-|-|~|~|}

        $\beta \leq c\cdot \sqrt[3]{\frac{\log \log n}{\log \log \log n}}$
        & \cellcolor{lowerred}$\Omega(\sqrt{ \frac{\log\log n}{\beta \log \log \log n}})$ 
        & 
        &  \\  \hhline{|-|-|~|-|}

        $\beta \leq \epsilon \cdot M_R^2(\Delta,n)$
        & \cellcolor{lowerred}$\Omega(\min\{\beta \Delta^{1/\beta},\log_\Delta \log n\})$ 
        &
        & \multirow{2}{*}{\cite{BBKO2021hideandseek}} \\  \hhline{|-|-|~|~|}

         $\beta \leq \epsilon \cdot \sqrt{\frac{\log \log n}{\log \log \log n}}$
         & \cellcolor{lowerred}$\Omega(\frac{\log \log n}{\beta \log \log \log n})$ 
         &
         & \\  \hhline{|-|-|~|-|}
         
        \multirow{8}{*}{$\beta=2$}
        & \cellcolor{upperblue}$O(\sqrt{\log \Delta}+\poly \log \log n)$ 
        &
        & \cite{tushar-super-fast-2014}+\cite{GhaffariImproved16} \\  \hhline{|~|-|-|-|}

        & \cellcolor{upperblue}$O(\log \log n)$ 
        & Trees
        & \multirow{2}{*}{\cite{BMU25}} \\ \hhline{|~|-|-|~|}

        & \cellcolor{upperblue}$\widetilde{O}(\log^{5/3} \log n)$ 
        & Girth at least $7$
        &  \\ \hhline{|~|-|-|-|}

        & \cellcolor{upperblue}$O\Big(\log^{1/4} \Delta +\log \alpha+\poly \log \log n\Big)$ 
        & Arboricity $\alpha$
        & \multirow{2}{*}{%
  \shortstack{
    \cite{BEPSv3}+\cite{tushar-super-fast-2014}\\
    +\cite{GhaffariImproved16}
  }%
} \\ \hhline{|~|-|-|~|}

         & \cellcolor{upperblue}$O\Big(\log^{1/4} \Delta +\poly \log \log n\Big)$ 
        & Arboricity $ O(1)$
        &  \\ \hhline{|~|-|-|-|}
        
       & \cellcolor{upperblue}$O(\log \log n)$ 
       & Arboricity $\leq O(\log \log n)$
       & \textbf{\Cref{thm:2RSconstArb}} \\ \hhline{|~|-|-|-|}

         & \cellcolor{upperblue}$O(\log^{5/8} \alpha)+\widetilde{O}(\log^{5/3} \log n)$ 
         & Arboricity $\alpha$
         & \textbf{\Cref{thm:MORE}} \\ \hhline{|~|-|-|-|}

         & \cellcolor{upperblue}$\widetilde{O}(\log^{5/3} \log n)$ 
         & Arboricity $\leq 2^{(\log \log n)^{8/3}}$
         & \textbf{\Cref{cor:ARB}} \\ \hline
         
        $\beta =3 $
        & \cellcolor{upperblue}$O(\log^3\log n)$ 
        & Arboricity $ O(1)$
        & \cite{kothapalli-superfast2012} \\ \hline
         
         $\beta = O(\log \log \log n)$
         & \cellcolor{upperblue}$\widetilde{O}(\log \log n) $ 
         & Girth at least $7$
         & \cite{BMU25} \\ \hline 
         
         $\beta= O(\log \log n)$
         & \cellcolor{upperblue}$O(\log \log n)$ 
         & General Graphs
         & \cite{SEW13}+\cite{Gfeller07} \\ \hline  

        $\beta \geq 2$
        & \cellcolor{upperblue}$O(\beta \cdot \log^{1/\beta} \Delta)+\poly \log \log n$ 
        & General Graphs
        & \cite{tushar-super-fast-2014} + \cite{GhaffariImproved16} \\ \hline 
    \end{tabular}

    \caption[Caption for LOF]{Randomized Ruling Sets. Define $M_R^1(\Delta,n) \coloneq \min\{\sqrt{\frac{\log \Delta}{\log \log \Delta}},\log_\Delta \log  n\}$ and $M_R^2(\Delta,n) \coloneq  \min\{\log \Delta,\log_\Delta \log  n\}$. All algorithms are randomized, and lower bounds hold for randomized algorithms.} 
    \label{tab:rand-rul-sets-extended}
\end{table}

\clearpage

\begin{table}[!h]
    \centering
    \begin{tabular}{|c|c|c|l|}
    \hline
         Domination $\beta$ & Running Time & Graph Class & Citation  \\ \hline

        \multirow{3}{*}{$\beta=1$ (MIS)}
        & \cellcolor{upperblue}$\widetilde{O}(\log^{5/3}n)$  
        & General Graphs 
        & \cite{GG24} \\ \hhline{|~|-|-|-|}

        & \cellcolor{upperblue}$O(\frac{\log n}{\log \log n})$ 
        & Arboricity of $O(\log^{1/2-\delta} n)$
        & \multirow{2}{*}{\cite{Barenboim2010}} \\ \hhline{|~|-|-|~|}

        & \cellcolor{upperblue}$O(\alpha  \sqrt{\log n}+\alpha \cdot \log \alpha)$ 
        & Arboricity of $\Omega(\sqrt{\log n})$
        & \\ \hline

        \multirow{2}{*}{$\beta=2$}
        & \cellcolor{upperblue}
        & \multirow{2}{*}{Arboricity $\alpha$}
        & \cite{Barenboim2010}+\cite{KMW18-rulingSets} \\ \cline{4-4}

        & \cellcolor{upperblue}\multirow{-2}{*}[0pt]{$O(\log n + \alpha)$}
        & 
        & \textbf{\Cref{thm:det2RS}} \\ \hline

         $\beta=O(1)$ 
         & \cellcolor{upperblue}$O( \Delta^{2/(\beta+2)}+\log^* n)$ 
         & General Graphs
         & \cite{M21} \\ \hline

        $\beta \leq c \cdot M_D^1(\Delta,n)$ 
        & \cellcolor{lowerred}$\Omega(\min\{ \frac{\log \Delta}{\beta \log \log \Delta},\log_\Delta n\})$ 
        & \multirow{6}{*}{General Graphs}
        & \multirow{2}{*}{\cite{BBO22}} \\ \hhline{|-|-|~|~|}

        $\beta \leq c\cdot \sqrt[3]{\frac{\log n}{\log \log n}}$ 
        & \cellcolor{lowerred}$\Omega(\sqrt{ \frac{\log n}{\beta \log \log n}})$ 
        & 
        & \\ \hhline{|-|-|~|-|}

        $\beta \leq \epsilon \cdot M_D^2(\Delta,n)$ 
        & \cellcolor{lowerred}$\Omega(\min\{\beta \Delta^{1/\beta},\log_\Delta n\})$ 
        & 
        & \multirow{2}{*}{\cite{BBKO2021hideandseek}} \\ \hhline{|-|-|~|~|}

         $\beta \leq \epsilon \cdot \sqrt{\frac{\log n}{\log \log n}}$ 
         & \cellcolor{lowerred}$\Omega(\frac{\log n}{\beta \log \log n})$ 
         & 
         & \\ \hhline{|-|-|~|-|}
         
         $\beta = O(\log \log n)$ 
         & \cellcolor{upperblue}$\widetilde{O}(\log n)$  
         & 
         & \cite{GG24} \\ \hhline{|-|-|~|-|}

         $\beta = O(\log n)$ 
         & \cellcolor{upperblue}$O(\log n)$  
         & 
         & \cite{awerbuch89} \\ \hline

    \end{tabular}
    \caption[Caption for LOF]{Deterministic Ruling Sets. 
    $M_D^1(\Delta,n) \coloneqq 
    \min\{\sqrt{\frac{\log \Delta}{\log \log \Delta}},\log_\Delta n\}$ 
    and 
    $M_D^2(\Delta,n) \coloneqq 
    \min\{\log \Delta,\log_\Delta n\}$. All algorithms are deterministic, and lower bounds hold for deterministic algorithms.}
    \label{tab:det-rul-sets}
\end{table}

For completeness, we show how to combine known results of~\cite{tushar-super-fast-2014},\cite{BEPSv3} and \cite{GhaffariImproved16} to obtain a $2$-ruling set algorithm that is not explicitly stated in the literature.

\begin{theorem}
\label{thm:related_work}
    There is an $O(\log^{1/4} \Delta +\log \alpha+ \poly \log \log n)$ round \LOCAL algorithm to compute a $2$-ruling set in graphs with arboricity $\alpha$ w.h.p.\
\end{theorem}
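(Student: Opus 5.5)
Our plan combines three known ingredients: the degree reduction of~\cite{BEPSv3} for bounded arboricity, the sparsification of~\cite{tushar-super-fast-2014} (\Cref{lem:spars}), and Ghaffari's MIS algorithm~\cite{GhaffariImproved16}, which runs in $O(\log \Delta)+\poly\log\log n$ rounds. The guiding principle is the same as in the proof of \Cref{lem:finish_general_arb}: a $2$-ruling set can be obtained by first sampling a dominating subset $S$ of small induced degree, and then computing an MIS of $G[S]$.

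\textbf{Step 1: reduce the degree.} We run the degree reduction of~\cite{BEPSv3} as a partial MIS computation. In $O(\log \alpha + t)$ rounds it produces an independent set $I_0$ such that the graph remaining after removing $I_0\cup N(I_0)$ has maximum degree $\Delta_0 \le \alpha\cdot 2^{O(\log n/t)}$, for a tunable parameter $t$. Nodes removed here are within distance $1$ of $I_0$. Alternatively, we can stop the process once the degree reaches $\min\{\Delta,\alpha 2^{O(\log n / t)}\}$. In either case we only need $\log \Delta_0 = O(\log\alpha + \log\Delta)$ to hold.

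\textbf{Step 2: sparsify.} On the residual graph $G'$ we apply \Cref{lem:spars} with parameter $f$. This takes $O(\log_f \Delta_0)$ rounds and yields a set $S'$ that dominates $G'$ with $\Delta(G'[S'])=O(f\log n)$.

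\textbf{Step 3: finish with an MIS.} We compute an MIS $I_1$ of $G'[S']$ using~\cite{GhaffariImproved16}, in $O(\log f+\log\log n)+\poly\log\log n$ rounds.

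\textbf{Correctness.} The output is $I_0\cup I_1$. It is independent because $G'$ excludes $N(I_0)$. It is $2$-dominating because every node of $G'$ is adjacent to $S'$, each node of $S'$ is within distance $1$ of $I_1$, and every removed node is within distance $1$ of $I_0$.

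\textbf{Balancing the round complexity.} The total cost is roughly $\log\alpha + \frac{\log\Delta_0}{\log f} + \log f + \poly\log\log n$. If we use the trivial bound $\Delta_0\le \Delta$, choosing $\log f=\sqrt{\log\Delta}$ already gives $O(\sqrt{\log\Delta})$. To reach exponent $1/4$, we iterate the sparsification twice, as in the $\beta$-ruling set scheme of~\cite{tushar-super-fast-2014}. This gains a fourth root instead of a square root, but it would cost extra domination distance.

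\textbf{Main obstacle.} The hardest step is obtaining $\log^{1/4}\Delta$ while keeping $\beta=2$. Here the degree reduction of~\cite{BEPSv3} must do the extra work: its running time is itself a function of $\Delta$ (about $\sqrt{\log\Delta}$ iterations, each shrinking the degree), and it removes only distance-$1$ neighborhoods, so domination is not wasted. The plan is to run the degree reduction of~\cite{BEPSv3} for $O(\log^{1/4}\Delta)$ iterations, reducing the degree to $\Delta_0$ with $\log\Delta_0 = O(\sqrt{\log\Delta})$ plus the $\log\alpha$ term. We then apply Steps 2 and 3 with $\log f = \sqrt{\log \Delta_0}=\log^{1/4}\Delta$, which yields the claimed $O(\log^{1/4}\Delta+\log\alpha+\poly\log\log n)$ bound. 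Verifying the exact tradeoff curve of~\cite{BEPSv3} (degree after $k$ iterations versus $k$) is the step I expect to need the most care.
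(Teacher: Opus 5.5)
There is a genuine gap, and it is exactly at the step you flagged. The degree reduction of~\cite{BEPSv3} takes $O(\log_t\Delta)$ rounds to bring the maximum degree down to $O(t\alpha)$. Equivalently, $k$ rounds give degree about $\alpha\Delta^{1/k}$, so $\log\Delta_0\approx \log\Delta/k+\log\alpha$; this is essentially the tradeoff you state in Step~1. Running it for $k=\log^{1/4}\Delta$ rounds therefore gives $\log\Delta_0\approx\log^{3/4}\Delta$, not $O(\sqrt{\log\Delta})$. Reaching $\sqrt{\log\Delta}$ would cost $\sqrt{\log\Delta}$ rounds. Moreover, no choice of parameters rescues your three-stage pipeline (reduction, sparsification, Ghaffari). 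Its cost is about $k+\frac{\log\Delta}{k\log f}+\log f$, three terms whose product is $\log\Delta$, so by AM--GM it is at least $3\log^{1/3}\Delta$. As written, your plan yields $O(\log^{1/3}\Delta+\log\alpha+\poly\log\log n)$, not the claimed bound.

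The missing idea is to apply the reduction of~\cite{BEPSv3} a second time, after sparsification. The graph $G'[S']$ is a subgraph of $G$, so it still has arboricity at most $\alpha$. Its degree $O(f\log n)$ can therefore be cut to $O(t'\alpha)$ in $O(\log(f\log n)/\log t')$ rounds, and only then do you run Ghaffari's algorithm on what remains of $G'[S']$. The two independent sets together form an MIS of $G'[S']$, so your correctness argument is unchanged. Now there are four stages whose costs multiply to roughly $\log\Delta$. With $\log t=\log^{3/4}\Delta$, $\log f=\sqrt{\log\Delta}$, $\log t'=\log^{1/4}\Delta$, they cost $O(\log^{1/4}\Delta)$, $O(\log^{1/4}\Delta+\log\alpha)$, $O(\log^{1/4}\Delta+\log\log n)$, and $O(\log^{1/4}\Delta+\log\alpha)+\poly\log\log n$, respectively; this is the paper's proof. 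The paper also first handles the case $\exp(\log^{1/4}\Delta)\le\poly\log n$ by running Ghaffari's algorithm directly, so that the parameters handed to the reduction are large enough. You would need the same case split.
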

\begin{proof}
    Observe if $\exp(\log^{1/4}\Delta) \leq \poly \log n$, we can compute a $2$-ruling set in $O(\poly \log \log n)$ rounds by~\cite{GhaffariImproved16}. Thus, we assume that $\exp(\log^{1/4}\Delta) \geq \poly \log n$.
    First, we use \cite[Theorem 7.2]{BEPSv3} to compute an independent set $M$ with $t=O(\exp(\log^{3/4}\Delta) $ such that the residual graph $G'=G[V(G)\setminus (M \cup N(M))]$ has maximum degree at most $O(t\alpha)$ in $O(\log _t \Delta)=O(\log^{1/4} \Delta)$ rounds. Secondly, we apply the sampling result~\Cref{lem:spars} by~\cite{tushar-super-fast-2014} with $f=\exp(\sqrt{\log n})$ to sample a subset $S\subseteq V(G')$ in $O(\log_f t\alpha)=O(\log_ft+\log_f\alpha)=O(\sqrt{\log \Delta}/\log^{1/4} \Delta+\log \alpha)=O(\log^{1/4}\Delta+\log \alpha)$ rounds such that $\Delta(G'[S])=O(f\cdot \log n)$, and every vertex in $V$ is either in $S$ or has a neighbor in $S$. 

    To compute a $2$-ruling set of $G$, it suffices to compute an MIS of $G'[S]$. For that, we apply \cite[Theorem 7.2]{BEPSv3} a second time with $t'=\exp(\log^{1/4}\Delta)$, to compute an independent set $M_S$ of $S$ in $O(\log_{t'} (f \log n))=O(\log (\exp(\sqrt{\log \Delta}) \log n)/\log^{1/4}\Delta)=O(\log^{1/4}\Delta + \poly \log \log n)$ rounds such that the residual graph $\widetilde{G}\coloneq G'[S\setminus (M_S \cup N(M_S))]$ has maximum degree at most $O(t'\alpha)$. Finally, we compute an MIS on $\widetilde{G}$ using Ghaffaris MIS algorithm~\cite{GhaffariImproved16} in $O(\log t'\alpha)+(\poly \log \log)=O(\log^{1/4}\Delta+\log \alpha +\poly \log \log n)$ rounds. 

    Overall, we compute a $2$-ruling set in $
    O(\log^{1/4}\Delta + \log \alpha+ \poly \log \log n).$
\end{proof}
If the arboricity is constant, we immediately obtain the following.
\begin{corollary}
    \label{cor:related_work}
    There is an $O(\log^{1/4} \Delta + \poly \log \log n)$ round \LOCAL algorithm to compute a $2$-ruling set in graphs with arboricity $\alpha=O(1)$ w.h.p.\
\end{corollary}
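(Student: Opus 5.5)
The corollary is an immediate specialization of \Cref{thm:related_work} to $\alpha=O(1)$, so I would simply invoke that theorem and simplify the bound.

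\Cref{thm:related_work} gives an $O(\log^{1/4}\Delta+\log\alpha+\poly\log\log n)$-round \LOCAL algorithm that computes a $2$-ruling set w.h.p.\ on graphs of arboricity $\alpha$. Setting $\alpha=O(1)$ makes the $\log\alpha$ term $O(1)$, and this is absorbed into the $\poly\log\log n$ term. The runtime therefore becomes $O(\log^{1/4}\Delta+\poly\log\log n)$.

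Correctness (independence, $2$-domination, and the w.h.p.\ guarantee) carries over unchanged from \Cref{thm:related_work}, since only the parameter value changes.

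The one point to check is that the hidden constants in \Cref{thm:related_work} do not depend on $\alpha$ in a way that breaks when $\alpha$ is fixed. The only $\alpha$-dependence visible in its proof is the residual degree $O(t\alpha)$ and $O(t'\alpha)$ from \cite[Theorem 7.2]{BEPSv3}. This dependence enters only through the $\log(t'\alpha)$ and $\log_f(t\alpha)$ terms, which for constant $\alpha$ are $O(\log t')$ and $O(\log_f t)$ plus $O(1)$. So no obstacle arises, and the corollary follows directly.
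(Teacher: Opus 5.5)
Your proposal is correct and matches the paper: it also obtains the corollary immediately by specializing \Cref{thm:related_work} to $\alpha=O(1)$, so that the $\log\alpha$ term becomes $O(1)$ and is absorbed. Your extra check on hidden constants is harmless but not needed. The theorem already treats $\alpha$ as an explicit parameter in its bound, so the specialization needs no further argument.
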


\section{Example: Successful Nodes are Correlated}
\label[appendix]{apx:counterexample}

\label{ex:independent-set-correlated-ranks}
Consider the graph in Figure~\ref{fig:independent-set-correlated-ranks}. The set $M=\{u,v\}$ is independent. Let
\(r(u),r(v),r(x)\) be mutually independent random variables, each uniformly
distributed on \([0,1]\).
\begin{figure}[h]
    \centering
    \begin{tikzpicture}[
        vertex/.style={circle, draw, minimum size=8mm, inner sep=0pt},
        every edge/.style={draw, -{Stealth[length=2mm]}, thick},
        node distance=2.2cm
    ]
        \node[vertex] (u) at (0,0) {\(u\)};
        \node[vertex] (v) at (2.2,0) {\(v\)};
        \node[vertex] (x) at (1.1,-1.6) {\(x\)};

        \draw[->, thick] (u) -- (x);
        \draw[->, thick] (v) -- (x);

    \end{tikzpicture}
    \caption{
        A three-vertex example with independent candidate set
        \(M=\{u,v\}\). Although \(u\) and \(v\) are non-adjacent, conditioning on both candidates
        being successful, i.e., on \(r(u)<r(x)\) and \(r(v)<r(x)\), creates
        correlation between the ranks \(r(u)\) and \(r(v)\).
    }
    \label{fig:independent-set-correlated-ranks}
\end{figure}

The event that both vertices in \(M\) are successful is $ A=\{r(u)<r(x),\ r(v)<r(x)\}.$
We show that \(r(u)\) and \(r(v)\) are not independent after conditioning
on \(A\).

For fixed \(t\in[0,1]\), conditioning on \(r(x)=t\) gives
\[
    \P[A\mid r(x)=t]
    =
    \P[r(u)<t,\ r(v)<t] =  \P[r(u)<t]\P[r(v)<t]
    =
    t^2,
\]
since \(r(u)\) and \(r(v)\) are independent uniform random variables on
\([0,1]\).

Note that the density function of $r(x)$ is given by $f_{r(x)}(t)=1$ for $0 \leq t \leq 1$, thus by the law of total probability,
\[
    \P[A]
    =
    \int_0^1 \P[A\mid r(x)=t]f_{r(x)}(t)\,dt
    =
    \int_0^1 t^2\,dt
    =
    \frac{1}{3}.
\]
Using Bayes' rule for densities, the conditional density of \(r(x)\)
given \(A\) is
\[
    f_{r(x)\mid A}(t)
    =
    \frac{\P[A\mid r(x)=t] f_{r(x)}(t)}{\P[A]}
    =
    \frac{t^2}{1/3}
    =
    3t^2,
    \qquad 0\le t\le 1.
\]

Conditioned on \(r(x)=t\) and on \(A\), the variables \(r(u)\) and
\(r(v)\) are independent and uniformly distributed on \([0,t]\), see~\cite{ILIOPOULOS20091008}. Hence
\[
    \E[r(u)\mid r(x)=t,A]
    =
    \frac{t}{2}.
\]
By the law of total expectation,
$$
    \E[r(u)\mid A]
    =
    \int_0^1
        \E[r(u)\mid r(x)=t,A]\,
        f_{r(x)\mid A}(t)
    =
    \int_0^1 \frac{t}{2}\cdot 3t^2\,dt
    =
    \frac{3}{2}\int_0^1 t^3\,dt
    =
    \frac{3}{8}.
$$
By symmetry, $ \E[r(v)\mid A]
    = \frac{3}{8}.$

Again, since conditioned on \(r(x)=t\) and \(A\), the variables \(r(u)\)
and \(r(v)\) are independent uniforms on \([0,t]\), we get
\[
    \E[r(u)r(v)\mid r(x)=t,A]
    =
    \E[r(u)\mid r(x)=t,A]\,
    \E[r(v)\mid r(x)=t,A]
    =
    \frac{t^2}{4}.
\]
By the law of total expectation we obtain
$$
    \E[r(u)r(v)\mid A]
    =
    \int_0^1
        \E[r(u)r(v)\mid r(x)=t,A]\,
        f_{r(x)\mid A}(t)
    \,dt
    =
    \int_0^1 \frac{t^2}{4}\cdot 3t^2\,dt
    =
    \frac{3}{4}\int_0^1 t^4\,dt
    =
    \frac{3}{20}.
$$

Therefore,
\[
    \E[r(u)r(v)\mid A]
    = \frac{3}{20} > \frac{9}{64}=
    \E[r(u)\mid A]\E[r(v)\mid A].
\]
This implies that the conditional ranks are not independent. In fact, $r(v)$ and $r(u)$ are positively correlated after conditioning on $M$ being successful,
\[
    \operatorname{Cov}(r(u),r(v)\mid A)
    =
    \frac{3}{20}
    -
    \frac{9}{64}
    =
    \frac{3}{320}
    >
    0.
\]

\section{Deferred Proofs}
\label[appendix]{apx:more_thm}

First we state the theorems from~\cite{https://doi.org/10.1002/rsa.20532} we use to derive~\Cref{lem:read-k-chernoff} and~\Cref{lem:read-k-all-p}.

\begin{theorem}[Theorem 1.1]\cite{https://doi.org/10.1002/rsa.20532}
   Let $Y_1, \dots, Y_m$ be a family of read-$k$ indicator random variables with $\Pr[Y_i=1]=p_i.$ Let $p \coloneq \frac{1}{m}\sum_{i=1}^mp_i$ and $Y \coloneq \sum_{i=1}^mY_i$. Then for any $\varepsilon >0$,
  $$
  \Pr[Y \leq (p - \varepsilon)m] \leq \exp(-D(p-\varepsilon||p)\cdot m/k).
  $$ 
  Here $D(q||p) \coloneqq q \log \left(\frac{q}{p}\right)+(1-q)\log\left(\frac{1-q}{1-p}\right)$ is the Kullback-Leibler divergence.
\end{theorem}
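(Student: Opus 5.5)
The plan is the argument of Gavinsky, Lovett, Saks and Srinivasan: reduce the tail bound to the statement that all indicators of a read-$k$ family are simultaneously $1$ with small probability, and then average over a random sub-family. By monotonicity in $\varepsilon$ we may assume $p-\varepsilon\ge 0$; otherwise the event is empty.

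\textbf{Step 1: the product bound.} First I would prove the multiplicative form behind \Cref{lem:read-k-all-p}: for any read-$k$ family $W_1,\dots,W_m$ of indicators with $\Pr[W_i=1]=q_i$, we have $\Pr[W_1=\dots=W_m=1]\le \prod_i q_i^{1/k}$. I would prove this by induction on the number of underlying independent variables $X_1,\dots,X_n$. Condition on $X_1$ and split the indices into those $j$ with $1\in P_j$ (at most $k$ of them) and the rest. Apply the induction hypothesis to the conditional family, which is still read-$k$ and has one fewer underlying variable. Then apply Hölder's inequality with exponent $k$ to the expectation over $X_1$, so that the at most $k$ functions touching $X_1$ are decoupled. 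This is Finner's generalised Hölder inequality; an entropy proof via Shearer's lemma would also work.

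\textbf{Step 2: complement and random sub-family.} Set $Z_i=1-Y_i$. The $Z_i$ again form a read-$k$ family, and the target event is $\{\sum_i Z_i\ge tm\}$ with $t:=1-p+\varepsilon$. Choose a random set $T\subseteq[m]$ that contains each index independently with probability $\lambda\in(0,1)$, to be optimised later, and consider $\Phi:=\E_T\Pr[Z_i=1\ \forall i\in T]$.
\begin{itemize}
\item \emph{Lower bound on $\Phi$.} Conditioned on $\sum_i Z_i\ge tm$, the set $T$ avoids every zero coordinate with probability at least $(1-\lambda)^{(1-t)m}$. Hence $\Phi\ge \Pr[Y\le(p-\varepsilon)m]\,(1-\lambda)^{(1-t)m}$.
\item \emph{Upper bound on $\Phi$.} Each sub-family $\{Z_i\}_{i\in T}$ is read-$k$, so Step 1 gives $\Phi\le \E_T\prod_{i\in T}(1-p_i)^{1/k}=\prod_i\bigl(1-\lambda+\lambda(1-p_i)^{1/k}\bigr)$. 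Concavity of $\log$ and of $x^{1/k}$ lets me replace every $p_i$ by the average $p$ in this product.
\end{itemize}

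\textbf{Step 3: optimise $\lambda$ (main obstacle).} Combining the two bounds and optimising $\lambda$ should produce the KL exponent. The delicate part is making the exponents match. The lower bound carries no $1/k$, whereas the upper bound has $1/k$ inside each factor. A naive choice of $\lambda$ therefore yields a weaker exponent than $D(p-\varepsilon\,\|\,p)\,m/k$.

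I would first try the same averaging on the $k$-fold object, using $\E_T[\cdot]\le(\E_T[\cdot^{\,k}])^{1/k}$ via Jensen. An alternative is to apply Step 1 in the form $\Pr[\wedge_{i\in T}Z_i]\le(\E\prod_{i\in T}Z_i\text{-marginals})^{1/k}$ and pull the $1/k$ outside the expectation over $T$ using concavity. In that form the bound reads
\[
\Pr[Y\le(p-\varepsilon)m]^k\le \frac{(1-\lambda p)^m}{(1-\lambda)^{(1-t)m}}.
\]
The standard Chernoff optimisation then sets $1-\lambda=\frac{p(1-t)}{(1-p)t}$, which lies in $(0,1]$ exactly when $\varepsilon\ge 0$, and yields $\exp(-D(p-\varepsilon\|p)m)$. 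Taking $k$-th roots gives the theorem. Verifying that the $k$-th power can legitimately be placed on the left in this way is the step I expect to need the most care.
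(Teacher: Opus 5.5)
The paper gives no proof of this statement. It is quoted from Gavinsky, Lovett, Saks and Srinivasan and used only to derive \Cref{lem:read-k-chernoff}, so your argument has to stand on its own.

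Steps~1 and~2 are correct, but Step~3 has a genuine gap, at exactly the point you flag. Write $P:=\Pr[Y\le(p-\varepsilon)m]$ and recall $1-t=p-\varepsilon$. The bounds $\Phi\ge P(1-\lambda)^{(p-\varepsilon)m}$ and $\Phi\le(1-\lambda p)^{m/k}$ only give
\[
P^k\le\frac{(1-\lambda p)^m}{(1-\lambda)^{k(p-\varepsilon)m}},
\]
with an extra factor $k$ in the exponent of the denominator. Your display without it does not follow, and the loss is serious. Since $\log(1-\lambda p)\ge\log(1-\lambda)$, this right-hand side is at least $1$ for every $\lambda$ as soon as $p-\varepsilon\ge 1/k$.

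Your first variant does no better. Keeping $(1-p_i)^{1/k}$ inside each factor is the Impagliazzo--Kabanets bound with $\delta=(1-p)^{1/k}$. It yields $\exp(-m\,D(t\,\|\,(1-p)^{1/k}))$, which is vacuous for $1-p<t\le(1-p)^{1/k}$. Both variants fail for the same reason. You apply the indicator product bound to each fixed sub-family $\{Z_i\}_{i\in T}$ and average over $T$ only afterwards. As a result, the $k$-th power in H\"older's inequality never acts on the weight $1-\lambda$ you pay per zero coordinate.

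The missing idea is to average over $T$ before the H\"older step, and to use that step for non-negative real functions. Since $\E_T\bigl[\prod_{i\in T}Z_i\bigr]=\prod_i(1-\lambda+\lambda Z_i)=(1-\lambda)^{Y}$, you have $\Phi=\E\bigl[\prod_i f_i\bigr]$ with $f_i=(1-\lambda)^{Y_i}$. These form a read-$k$ family of non-negative functions. Your induction in Step~1 works verbatim for such functions and gives Finner's inequality $\E[\prod_i f_i]\le\prod_i\|f_i\|_k$. Hence
\[
\Phi\le\prod_i(1-p_i+p_i\mu)^{1/k},\qquad \mu:=(1-\lambda)^k.
\]
Now use $\Phi\ge P(1-\lambda)^{(p-\varepsilon)m}=P\,\mu^{(p-\varepsilon)m/k}$ and replace each $p_i$ by $p$ via concavity of $x\mapsto\log(1-x+x\mu)$. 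This gives
\[
P\le\Bigl(\frac{(1-p+p\mu)^m}{\mu^{(p-\varepsilon)m}}\Bigr)^{1/k}.
\]
The standard choice $\mu=\frac{(p-\varepsilon)(1-p)}{p(1-p+\varepsilon)}\in(0,1]$ then yields exactly $\exp(-D(p-\varepsilon\,\|\,p)\,m/k)$.

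Equivalently, apply Markov to $e^{-\theta Y}$, bound $\E e^{-\theta Y}\le\prod_i(\E e^{-k\theta Y_i})^{1/k}$ by Finner, and substitute $s=k\theta$. An entropy argument also works: condition the law of $X$ on the tail event, apply Shearer's inequality for relative entropy with respect to the product measure, then data processing and convexity of $D$.

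Separately, your optimiser has $p$ and $1-p$ swapped. For your display it should be $1-\lambda=\frac{(1-p)(1-t)}{pt}$.
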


\begin{theorem}[Theorem 1.2]\cite{https://doi.org/10.1002/rsa.20532}
     Let $Y_1, \dots, Y_m$ be a family of read-$k$ indicator random variables with $\Pr[Y_i=1]\leq p$. Then 
    $$
    \Pr[Y_1=\dots=Y_m=1] = p^{m/k}.
    $$
\end{theorem}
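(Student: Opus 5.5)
We read the statement with ``$\le$'' in place of ``$=$'', as in \Cref{lem:read-k-all-p}: equality cannot hold in general. The plan is to write the probability as an expectation of a product of indicators and apply a generalized Hölder inequality for product measures, namely Finner's inequality. Write $f_j\bigl((X_i)_{i\in P_j}\bigr)=Y_j\in\{0,1\}$. Then
\[
\Pr[Y_1=\dots=Y_m=1]=\E\Bigl[\prod_{j=1}^m f_j\bigl((X_i)_{i\in P_j}\bigr)\Bigr].
\]
The read-$k$ condition says every index $i$ lies in at most $k$ of the sets $P_j$. Hence the weights $1/k$ form a fractional cover: $\sum_{j: i\in P_j} 1/k\le 1$ for every $i$. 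Finner's inequality then gives
\[
\E\Bigl[\prod_j f_j\Bigr]\le\prod_j \|f_j\|_{k},
\]
where the norm is taken under the product law of $(X_i)_{i\in P_j}$. Since $f_j$ is an indicator, $\|f_j\|_k=(\E f_j^k)^{1/k}=\Pr[Y_j=1]^{1/k}\le p^{1/k}$. Multiplying over $j$ yields $p^{m/k}$.

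The main step is proving Finner's inequality in the form we need: for independent $X_1,\dots,X_n$ and nonnegative $f_j$ depending only on $X_{P_j}$, with each index in at most $k$ sets, $\E\prod_j f_j\le\prod_j(\E f_j^k)^{1/k}$. I would induct on $n$, the number of underlying variables.

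For $n=0$ every $f_j$ is a constant and the claim is an equality. For the inductive step, condition on $X_1,\dots,X_{n-1}$ and integrate out $X_n$ first. Let $J=\{j: n\in P_j\}$, so $|J|\le k$. The factors with $j\notin J$ do not depend on $X_n$. For the factors with $j\in J$, apply Hölder in the variable $X_n$ with all exponents equal to $k$. This is valid because $\sum_{j\in J}1/k\le 1$, using the version of Hölder on a probability space that allows exponent sum at most $1$. It gives
\[
\E_{X_n}\Bigl[\prod_{j\in J} f_j\Bigr]\le\prod_{j\in J} g_j,
\qquad g_j:=\bigl(\E_{X_n} f_j^k\bigr)^{1/k}.
\]
Each $g_j$ is a function of $X_{P_j\setminus\{n\}}$, so the family $\{f_j\}_{j\notin J}\cup\{g_j\}_{j\in J}$ is again read-$k$, now on $n-1$ variables. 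The induction hypothesis gives
\[
\E\Bigl[\prod_{j\notin J} f_j\prod_{j\in J} g_j\Bigr]\le\prod_{j\notin J}\|f_j\|_k\prod_{j\in J}\|g_j\|_k .
\]
Finally, $\E\,g_j^k=\E\,f_j^k$ by Fubini, which uses independence of the $X_i$. So $\|g_j\|_k=\|f_j\|_k$ and the induction closes.

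The obstacle I expect is purely technical. Hölder must be applied conditionally in a way that keeps the dependence of each $g_j$ confined to $P_j\setminus\{n\}$, and some sets $P_j$ may become empty along the way. An empty $P_j$ just makes $f_j$ a constant and is harmless. Measurability of the conditional expectations is standard for product measures. An alternative I would fall back on is the entropy argument via Shearer's lemma, conditioning on the event $\{Y_1=\dots=Y_m=1\}$, but the Hölder induction is the most direct route.
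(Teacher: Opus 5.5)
Your proof is correct. It necessarily takes a different route from the paper, because the paper never proves this statement: it imports it from Gavinsky--Lovett--Saks--Srinivasan, where I believe it is proved information-theoretically, close to the Shearer-lemma fallback you mention. The only argument the paper itself supplies is the reduction in the proof of \Cref{lem:read-k-all-p}. That reduction takes the equal-marginal version as given and raises marginals $p_i\le p$ to exactly $p$ by passing to $Y_i\lor Z_i$ with fresh independent bits $Z_i$. Your reading of the displayed ``$=$'' as ``$\le$'' is right, and your version with hypothesis $\Pr[Y_i=1]\le p$ is exactly \Cref{lem:read-k-all-p}.

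The Finner/H\"older induction checks out on each point:
\begin{itemize}
\item At each coordinate at most $k$ factors are hit, so H\"older with all exponents equal to $k$ applies on the probability space of $X_n$.
\item The $g_j$ depend only on $X_{P_j\setminus\{n\}}$, so the read-$k$ structure survives the step.
\item Fubini together with independence gives $\|g_j\|_k=\|f_j\|_k$.
\item All functions take values in $[0,1]$, so there are no integrability issues and no need to discretize the continuous ranks.
\end{itemize}

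Compared with the paper's citation-plus-padding route, yours is self-contained. It also gives the sharper bound $\Pr[Y_1=\dots=Y_m=1]\le\prod_j\Pr[Y_j=1]^{1/k}$, so unequal marginals need no padding at all.

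Your tool in fact does more. You proved Finner's inequality for arbitrary nonnegative $f_j$, so you can apply it to $f_j=e^{-\lambda Y_j}$. This gives $\E\,e^{-\lambda Y}\le\prod_j(1-p_j+p_je^{-k\lambda})^{1/k}$. The standard exponential-moment computation with $k\lambda=-\ln(1-\delta)$ then recovers the lower-tail bound $\exp(-\delta^2\E[Y]/2k)$ of \Cref{lem:read-k-chernoff}. So one self-contained inequality replaces both facts the paper imports from that reference. The paper's route, by leaning on the published statements, buys only brevity.
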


\label{apx:deferred_proofs}
\begin{proof}[Proof of \Cref{lem:read-k-chernoff}]
Let $\mu \coloneq \E[Y]=\sum_{i=1}^m p_i=mp$. If $p=0$, then
$\mu=0$ and the claimed bound is trivial. Moreover, if $\delta>1$, then
$(1-\delta)\mu<0$ whenever $\mu>0$, and hence the event
$\{Y\leq (1-\delta)\mu\}$ is empty. Thus, it suffices to consider
$0<\delta\leq 1$ and $p>0$.

Set $\varepsilon \coloneq \delta p.$
Then $(p-\varepsilon)m=(1-\delta)pm=(1-\delta)\mu.$
Applying the lower-tail bound of
\cite[Theorem~1.1]{https://doi.org/10.1002/rsa.20532} gives
\[
    \Pr\!\left[Y\leq (1-\delta)\mu\right]
    \leq
    \exp \left(-D (( 1-\delta)p || p ) \cdot m/k\right).
\]
It remains to bound the binary relative entropy. Writing
$q=(1-\delta)p$, we have
\begin{align*}
D(q\|p)
&=
(1-\delta)p\log(1-\delta)
+
\bigl(1-(1-\delta)p\bigr)
\log\!\left(
    \frac{1-(1-\delta)p}{1-p}
\right) \\
&=
(1-\delta)p\log(1-\delta)
+
\bigl(1-p+\delta p\bigr)
\log\!\left(
    1+\frac{\delta p}{1-p}
\right).
\end{align*}
Using $\log(1+x)\geq x/(1+x)$ for $x\geq 0$, the second term is at
least $\delta p$. Therefore
\[
    D\!\left((1-\delta)p\,\middle\|\,p\right)
    \geq
    p\bigl((1-\delta)\log(1-\delta)+\delta\bigr).
\]
Finally, for $0\leq \delta\leq 1$,we have $(1-\delta)\log(1-\delta)+\delta \geq \frac{\delta^2}{2}.$
Indeed, if $
    h(\delta)
    \coloneq
    (1-\delta)\log(1-\delta)+\delta-\frac{\delta^2}{2},$
then $h(0)=0$ and
$
    h'(\delta)=-\log(1-\delta)-\delta\geq 0.
$
Consequently,
$    D\!\left((1-\delta)p\,\middle\|\,p\right)
    \geq \frac{\delta^2p}{2}.
$
Substituting this into the preceding tail bound and using $mp=\mu$
yields
\[
    \Pr\!\left[Y\leq (1-\delta)\E[Y]\right]
    \leq
    \exp\!\left(
        -\frac{\delta^2 mp}{2k}
    \right)
    =
    \exp\!\left(
        -\frac{\delta^2\E[Y]}{2k}
    \right).
\]
\end{proof}

\begin{proof}[Proof of \Cref{lem:read-k-all-p}]
Let $p_i \coloneq \Pr[Y_i=1]$, so that $p_i\leq p$ for every $i\in[m]$. If $p=1$, the claimed
inequality is immediate. Assume henceforth that $p<1$.

For every $i\in[m]$, introduce an indicator random variable $Z_i$,
independent of all random variables defining the read-$k$ family and
independent of all other $Z_j$, such that
\[
    \Pr[Z_i=1]
    =
    \frac{p-p_i}{1-p_i}.
\]
This quantity lies in $[0,1]$ because $p_i\leq p<1$. Define
\[
    \widetilde{Y}_i \coloneq Y_i \lor Z_i.
\]
Then
$\Pr[\widetilde{Y}_i=1]  = p_i+(1-p_i)\Pr[Z_i=1] = p_i+(1-p_i)\frac{p-p_i}{1-p_i}= p.
$
Moreover, $\widetilde{Y}_1,\dots,\widetilde{Y}_m$ is again a
read-$k$ family: every original underlying random variable is read by
at most $k$ of the functions, and the newly introduced variable $Z_i$
is read only by $\widetilde{Y}_i$.

Since $Y_i\leq \widetilde{Y}_i$ for every $i$, we have
\[
    \{Y_1=\dots=Y_m=1\}
    \subseteq
    \{\widetilde{Y}_1=\dots=\widetilde{Y}_m=1\}.
\]
Applying \cite[Theorem~1.2]{https://doi.org/10.1002/rsa.20532} to
the read-$k$ family
$\widetilde{Y}_1,\dots,\widetilde{Y}_m$, whose marginal probabilities
are all equal to $p$, gives
\[
    \Pr[Y_1=\dots=Y_m=1]
    \leq \Pr[\widetilde{Y}_1=\dots=\widetilde{Y}_m=1]
    \leq p^{m/k}.
\]
\end{proof}

\section{Probabilistic Theorems}
\label{apx:prob-thm}

We restate some definitions and results that are contained in~\cite{EPW67} and are rewritten in a more modern way in~\cite{Zhao2022CorrelationInequalities} that we use in the proof of~\Cref{lem:node_joins}.

For mutually independent random variables $X_1,\dots, X_n$, each on a linear ordered set let $X=(X_i)_{i\in [n]}$. We say that a function $f(X)$ is non-decreasing if $f(X) \leq f(X')$ whenever $X\leq X'$ coordinate-wise, i.e., $X_i\leq X_i'$ for $1\leq i\leq n$.


\begin{theorem}\cite[Definition 1.1, Theorem 2.1]{EPW67}
\label{thm:associated}
    For independent random variables $X_1,\dots, X_n$ and $X= (X_i)_{i\in [n]}$ we have 
     $$
Cov(f(X),g(X)) \geq 0
$$
for all non decreasing functions $f$ and $g$ for which $\E[f(X)], \E[g(X)], \E[f(X)g(X)]$ exist.
\end{theorem}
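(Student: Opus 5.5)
The statement immediately above is \Cref{thm:associated}, the Harris--FKG / association inequality for independent coordinates. I would prove it by induction on the number $n$ of coordinates, with Chebyshev's sum (rearrangement) inequality as the base case and conditioning on the last coordinate as the inductive step.

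\textbf{Base case $n=1$.} Let $X'$ be an independent copy of $X_1$. Since $f$ and $g$ are both non-decreasing on a linearly ordered set, $f(X_1)-f(X')$ and $g(X_1)-g(X')$ always have the same sign. Hence
\[
\E\bigl[(f(X_1)-f(X'))(g(X_1)-g(X'))\bigr]\ge 0 .
\]
Expanding this and using that $X_1$ and $X'$ are independent with the same law gives $2\,\mathrm{Cov}(f(X_1),g(X_1))\ge 0$. The integrability hypotheses make the expansion legitimate: if needed, truncate $f$ and $g$ to $\max(\min(f,M),-M)$, which preserves monotonicity, and pass to the limit by dominated or monotone convergence. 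For the only application here, $f$ and $g$ are indicators, so everything is bounded anyway.

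\textbf{Inductive step.} Write $X=(X_{<n},X_n)$. For each fixed value $x$ of $X_n$, the functions $f(\cdot,x)$ and $g(\cdot,x)$ are non-decreasing in the first $n-1$ coordinates. By independence, conditioning on $X_n=x$ leaves $X_{<n}$ with its original product law, so the induction hypothesis gives
\[
\E[fg\mid X_n=x]\ \ge\ F(x)\,G(x),
\quad\text{where } F(x)=\E[f\mid X_n=x],\ G(x)=\E[g\mid X_n=x].
\]
Because $f$ is non-decreasing in its last coordinate and $X_{<n}$ is independent of $X_n$, the function $F(x)=\E[f(X_{<n},x)]$ is non-decreasing in $x$, and likewise $G$. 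Applying the base case to $F(X_n)$ and $G(X_n)$ yields $\E[F G]\ge \E[F]\,\E[G]=\E f\,\E g$. Chaining the two inequalities gives $\E[fg]=\E\bigl[\E[fg\mid X_n]\bigr]\ge \E[FG]\ge \E f\,\E g$, which is the claim.

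\textbf{Main obstacle.} The delicate point is the measure-theoretic bookkeeping: showing that the conditional expectations exist as honest functions of $x$ (for instance via Fubini on the product space) and that they inherit the integrability assumptions. I would handle this by first proving the theorem for bounded $f$ and $g$, which covers the indicator functions used in \Cref{lem:node_joins}, and then extending to the stated generality by truncation and a limiting argument.
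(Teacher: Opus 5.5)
The paper gives no proof of its own here; it cites \cite{EPW67} (see also \cite{Zhao2022CorrelationInequalities}). Your argument is correct and is exactly the standard proof in those sources: the one-variable case uses the independent-copy identity $2\,\mathrm{Cov}(f(X_1),g(X_1))=\E\bigl[(f(X_1)-f(X'))(g(X_1)-g(X'))\bigr]\ge 0$, and the inductive step conditions on $X_n$, using independence to keep the law of $X_{<n}$ unchanged and to make $x\mapsto\E[f(X_{<n},x)]$ non-decreasing; your reduction to bounded $f,g$ via monotonicity-preserving truncation plus dominated convergence (dominating by $|f|,|g|,|fg|$) handles the integrability bookkeeping cleanly.
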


\end{document}